\documentclass[12pt]{amsart} 

% Custom definitions
% To use this customization file, insert the line "\input{custom}" in the header of the tex file.

% Formatting

\tolerance=1000
\usepackage[letterpaper,margin=1in]{geometry}

% Packages

% \usepackage{amssymb,latexsym}
\usepackage{amssymb,amsfonts,amsmath,latexsym,amsthm}
\usepackage[usenames]{color}
\usepackage[]{graphicx}
\usepackage[space]{grffile}
\usepackage{mathrsfs}   % fancy math font
\usepackage{caption}
\usepackage{subcaption}
\usepackage{verbatim}
\usepackage{url}
\usepackage{bm}
\usepackage{extarrows}
\usepackage{dsfont}
\usepackage{epstopdf}
\usepackage[colorlinks,citecolor=black,linkcolor=black,urlcolor=black]{hyperref}

\usepackage[authoryear,round]{natbib}

%  Theorems, etc.

\theoremstyle{plain}
\newtheorem{theorem}{Theorem}[section]
\newtheorem{corollary}[theorem]{Corollary}
\newtheorem{lemma}[theorem]{Lemma}
\newtheorem{proposition}[theorem]{Proposition}
\newtheorem{condition}[theorem]{Condition}

\theoremstyle{definition}
\newtheorem{definition}[theorem]{Definition}
\numberwithin{equation}{section}
% \newtheorem{example}[theorem]{Example}

% Document-specific shortcuts
% \newcommand{\ideal}{\mathrm{ideal}}
% \newcommand{\obs}{\mathrm{obs}}

\newcommand{\PX}{\hat P_{X_{1:n}}}

\newcommand{\Px}{\hat P_{x_{1:n}}}
\newcommand{\btheta}{{\bm\theta}}
\newcommand{\bbtheta}{{\pmb{\bm\theta}}}

\renewcommand{\epsilon}{\varepsilon}

\newcommand{\Pb}{{\bm P}}
\newcommand{\s}{{\bm s}}
\newcommand{\q}{{\bm q}}
\newcommand{\HH}{\mathrm{H}}
\newcommand{\SN}{\mathcal{SN}}

% Math shortcuts

\DeclareMathOperator*{\Exponential}{Exp}
\DeclareMathOperator*{\Exp}{Exp}
\DeclareMathOperator*{\Bernoulli}{Bernoulli}
\DeclareMathOperator*{\Uniform}{Uniform}
\DeclareMathOperator*{\Binomial}{Binomial}
\DeclareMathOperator*{\Beta}{Beta}
\DeclareMathOperator*{\BetaBinomial}{BetaBinomial}
\DeclareMathOperator*{\Ga}{Gamma}

\DeclareMathOperator*{\argmin}{argmin}
\DeclareMathOperator*{\argmax}{argmax}
\DeclareMathOperator*{\Cov}{Cov}
\DeclareMathOperator*{\diag}{diag}

\newcommand{\T}{\mathtt{T}}
\newcommand{\R}{\mathbb{R}}

\newcommand{\E}{\mathbb{E}}
\renewcommand{\Pr}{\mathbb{P}}
\newcommand{\I}{\mathds{1}}

\newcommand{\A}{\mathcal{A}}

\newcommand{\D}{\mathcal{D}}
\newcommand{\Hcal}{\mathcal{H}}
\newcommand{\M}{\mathcal{M}}
\newcommand{\N}{\mathcal{N}}
\newcommand{\X}{\mathcal{X}}

\newcommand{\iid}{\text{ i.i.d.}\sim}

% Miscellaneous commands

% approximately proportional to
\def\app#1#2{%
  \mathrel{%
    \setbox0=\hbox{$#1\sim$}%
    \setbox2=\hbox{%
      \rlap{\hbox{$#1\propto$}}%
      \lower1.3\ht0\box0%
    }%
    \raise0.25\ht2\box2%
  }%
}
\def\approxprop{\mathpalette\app\relax}

\title{Robust Bayesian inference via coarsening}
% \title{Bayesian inference under model misspecification}
\author{Jeffrey W. Miller \and David B. Dunson}
\dedicatory{Duke University, Department of Statistical Science}
% \address{Duke University \\
% Department of Statistical Science \\
% Box 90251 \\
% Durham, NC 27708}
% \email{jeff.miller@duke.edu}

\begin{document}
\begin{abstract}
The standard approach to Bayesian inference is based on the assumption that the distribution of the data belongs to the chosen model class. However, even a small violation of this assumption can have a large impact on the outcome of a Bayesian procedure. We introduce a simple, coherent approach to Bayesian inference that improves robustness to perturbations from the model: rather than condition on the data exactly, one conditions on a neighborhood of the empirical distribution.  When using neighborhoods based on relative entropy estimates, the resulting ``coarsened'' posterior can be approximated by simply tempering the likelihood---that is, by raising it to a fractional power---thus, inference is often easily implemented with standard methods, and one can even obtain analytical solutions when using conjugate priors. Some theoretical properties are derived, and we illustrate the approach with real and simulated data, using mixture models, autoregressive models of unknown order, and variable selection in linear regression.

\vspace{1em}
\noindent \keywordsname. 
Bayesian inference, Mixture model, Model misspecification, Relative entropy, Robustness, Tempering.
%Approximate Bayesian inference, Coarsening, Entropy, Mixture model, Model misspecification, Robustness, Tempering, Interval censoring; 
\end{abstract}
\maketitle

\section{Introduction}
\label{section:introduction}

% INTRODUCTION

% The problem
In many applications, the most natural models are idealizations that are known to provide only an
approximation to the distribution of the observed data, due to small-scale contaminating effects that may be
complicated and not completely understood.  
One might hope that any such lack of model fit, if sufficiently small, would not significantly impact
inferences or decisions made based on the model.
Often this does seem to be the case, but sometimes, unfortunately, the likelihood is strongly affected 
by perturbations to the distribution of the observed data, especially when the sample size is large.

As a result, standard Bayesian procedures are not generally robust to contamination or misspecification. 
In particular, this may lead to underestimation of uncertainty, since the posterior
inexorably concentrates at a given rate---typically at minimal Kullback--Leibler points---regardless
of whether or not it is concentrating on something that resembles the observed data distribution.
% example:
% - batch effects in speed of light experiments
% - maybe p-value issues described in Madigan's talk

% posterior model checking
% natural robustness of exponential families due to sufficient statistics
% Gamma-minimax

This issue was, perhaps, not as severe in the past, since on small datasets, 
statistical models automatically exhibit a certain amount of robustness to misspecification,
%degree of tolerance to misspecification, 
because, in essence, there is insufficient power to discern small departures from the model. However, as
datasets grow ever larger, even slight deficiencies in our models can become disruptive. %discernible.  

The problem is especially apparent in the context of Bayesian model averaging and flexible Bayesian models,
such as nonparametric models. Under misspecification, model averaging tends to favor more complex models as the sample size grows,
even when for all practical purposes the data are very well-described by a simpler model. 
% ---even a small amount of contamination will often cause the posterior to 
% concentrate on overly complex representations.  
Since it is usually unreasonable to expect real data to come exactly from a simple
parametric model, we thus find ourselves in the awkward situation of eventually rejecting any such model.

% The problem with nonparametric models
At first, it may seem paradoxical that nonparametric models could suffer from misspecification issues,
since after all, the point of a nonparametric model is that it can fit any distribution.
%Existing nonparametric Bayesian models provide only a partial solution to the problem.
It is true that if one is solely interested in fitting the data distribution---such as in
density estimation or when the distribution is a nuisance parameter---then a nonparametric model will
serve nicely, in principle.
However, most nonparametric models still involve some parametric assumptions (such as Gaussian
mixture components) and when the data do not accord well with these assumptions,
the interpretability of parameters and latent variables (such as component parameters and cluster assignments) breaks down.
Further, the computational burden of nonparametric models can grow rapidly with the sample size.

\begin{figure}
  \centering
  \includegraphics[trim=.5cm 0 1.75cm 0, clip, width=0.49\textwidth]{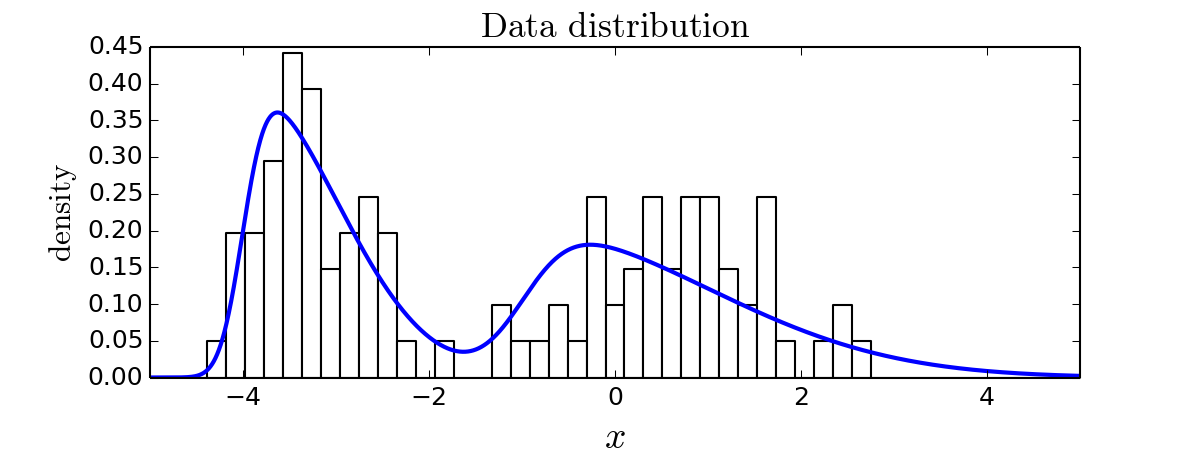} \\
  \includegraphics[trim=.5cm 0 1.75cm 0, clip, width=0.49\textwidth]{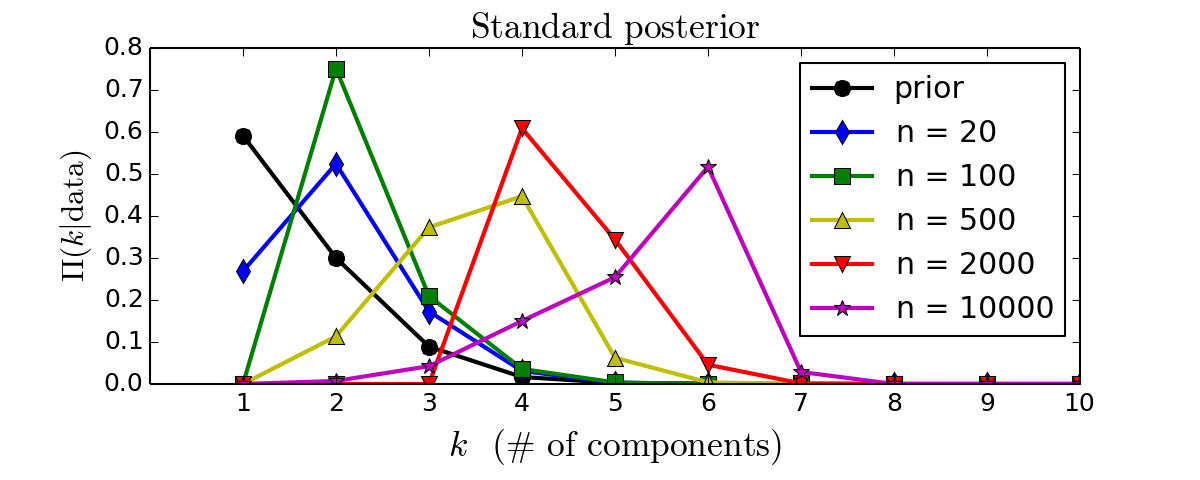}
  \includegraphics[trim=.5cm 0 1.75cm 0, clip, width=0.49\textwidth]{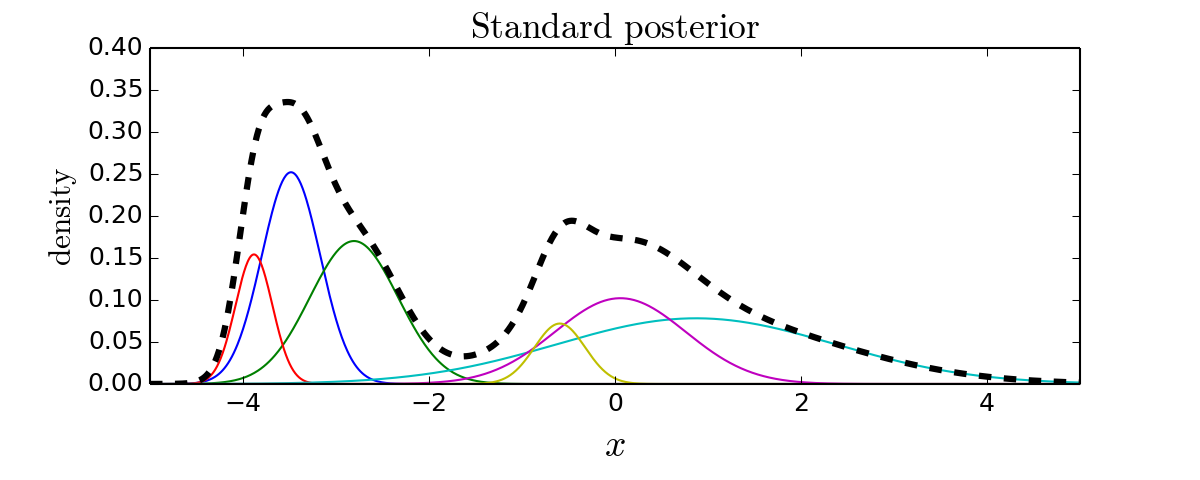}
  \includegraphics[trim=.5cm 0 1.75cm 0, clip, width=0.49\textwidth]{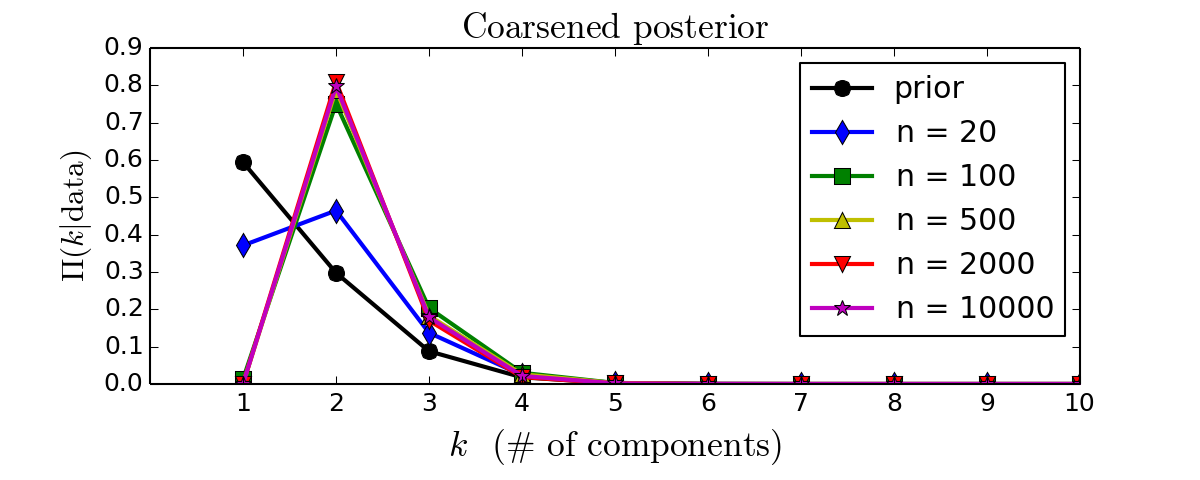}
  \includegraphics[trim=.5cm 0 1.75cm 0, clip, width=0.49\textwidth]{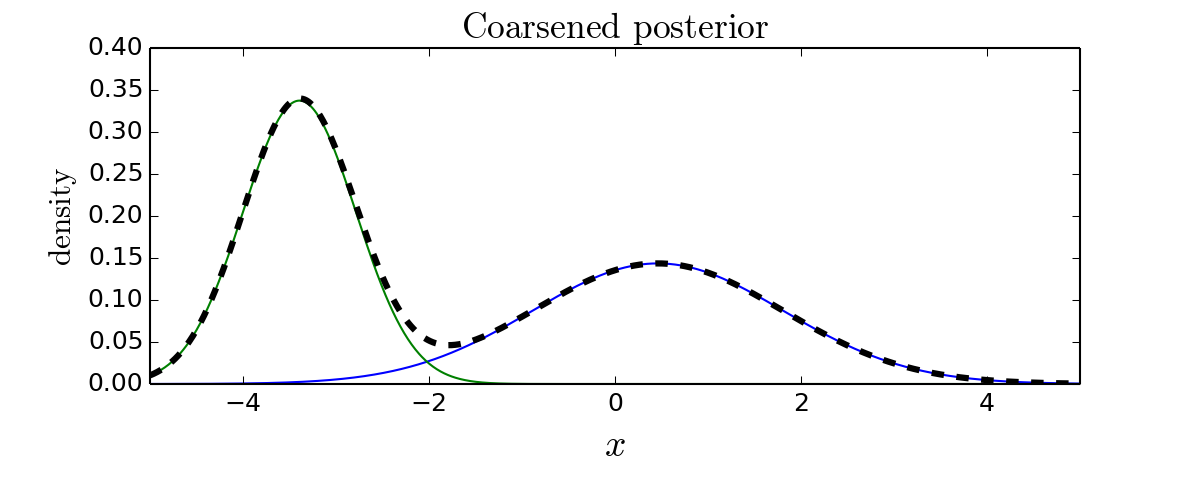}
  \caption{Gaussian mixture with a prior on the number of components $k$, applied to data from a two-component skew-normal mixture.
  Top: Density of the data distribution (blue line), and a histogram of $n=100$ samples.
  Middle left: The posterior on $k$ favors larger and larger values as $n$ increases.
  Bottom left: The coarsened posterior on $k$ stabilizes as $n$ increases, favoring the true number of components, $k = 2$.
  Middle right: Mixture density (dotted black line) and components (solid colors)  for a typical sample from the posterior when $n=10^4$.
  Bottom right: Same for the coarsened posterior.
  See Section~\ref{section:mixtures} for details.}
  \label{figure:skewmix}
\end{figure}

% Example
For example, suppose one is using a Gaussian mixture model with a prior on the number of components,
but the data come from a mixture in which the components are not exactly Gaussian. 
In order to fit the data, the posterior will introduce more and more components as the amount of data increases,
and the inferred components will not accurately reflect the true components.
To illustrate, suppose the data are from a two-component mixture of skew-normal distributions; see Figure~\ref{figure:skewmix}.
As shown in the figure, the posterior on the number of components favors larger and larger values
as the sample size increases, and a typical sample from the posterior consists of many small components,
obscuring the two large groups corresponding to the true components.
Meanwhile, using the technique introduced in this paper, one can construct a ``coarsened'' posterior for the same model,
under which the number of components does not continue to grow---see Figure~\ref{figure:skewmix}---and as a result, 
one obtains a more macroscopic interpretation of the data as coming from two large groups.
See Section~\ref{section:mixtures} for details.

Ideally, one would model all aspects of the data generating process completely correctly,
however, this is often impractical for a number of reasons.
First, it may be unrealistic to expect to have sufficient insight into the data generating process to
even write down an adequate model.  Further, this can significantly increase the time and effort
required to design the model, devise and implement reasonably efficient
inference algorithms with reliable performance,
%(which inevitably come at the expense of increased computation time),
and perhaps develop theoretical guarantees.
Even after all such efforts, one may end up with a complex hierarchical model that is unlikely to be
used in scientific applications, because scientists prefer statistical methods that have a clear and
simple interpretation, and are not comfortable with drawing inferences from complicated models in a
``black box'' fashion.

% Advantages of simple models
In fact, in some cases, a simple model may actually be more appropriate than a more complex one, even
when it does not exactly fit the observed data.  For instance, when the underlying phenomenon of
interest is well-described by a simple model, but there is a lack of fit due to contaminating artifacts,
then choosing a more complex model can be viewed as a form of overfitting, and may damage the
resulting inferences.  Meanwhile, even when the underlying phenomenon is not well-described by a simple model,
many times the purpose of a model is to provide a lens through
which to understand the data, rather than just fitting it---such as
when the model is being used as a tool for exploratory analysis---in which case it is essential 
to use interpretable models so that the parameters and latent variables provide insight
into the questions of interest.
%, e.g., can we identify cancer subtypes that are clinically informative?
%Simple models also tend to be easier to use in practice, in terms of computation time and
%implementation effort; indeed, on large datasets, it may be computationally prohibitive to use a
%model that provides a completely adequate fit to the data.

There have been advances in robustness to misspecification, with methods such as 
Gibbs posteriors \citep{jiang2008gibbs}, 
disparity-based posteriors \citep{hooker2014bayesian}, 
partial posteriors \citep{doksum1990consistent},
nonparametric approaches \citep{rodriguez2014univariate},
neighborhood methods \citep{liu2009building},
and learning rate adjustment \citep{grunwald2014inconsistency};
% and there are diagnostic tools for assessing model correctness ... Gelman's stuff, Berger \& Bayarri, etc.
see Section~\ref{section:connections} for a discussion of previous work.
However, despite growing recognition of the issue and efforts toward a solution, existing methods tend
to be either limited in scope, computationally prohibitive, or lacking a clear justification.

These considerations lead to the following questions.
Is it possible to draw coherent inferences from a model that may be slightly misspecified?
Can this be done in a computationally-tractable way?
%Can one infer the degree of misspecification?
In the context of model averaging and nonparametrics, is there a principled way to be tolerant of
models that are not exactly right, but are close enough in some sense?

%The aim of this article is to make a contribution in this direction.
In this article, we explore a novel approach to robust Bayesian inference that may provide affirmative
answers to these questions.  Instead of using the standard posterior obtained by conditioning on the event
that the observed data are generated by sampling from the model---which is clearly incorrect when the
model is misspecified---the approach we consider is, roughly speaking, to condition on the event that
the empirical distribution of the observed data is close to the empirical distribution of data sampled
from the model, with respect to some statistical distance on probability measures.  
We refer to this as a coarsened posterior, or c-posterior, for short; 
see Section~\ref{section:method} for a detailed description.

The c-posterior approach has a number of appealing features.
It has a compelling justification---it is valid Bayesian inference based on limited information.
% knowledge regarding the data.
The interpretation is conceptually clear---one does inference with the same model, but
conditioned on a different event than usual.
The c-posterior inherits the continuity properties of the chosen statistical distance, and thus,
automatically exhibits robustness to small departures from the model---that is, small changes to the data distribution
result in small changes to the c-posterior.
Asymptotically, the c-posterior takes a relatively simple form, facilitating computation and analysis.

A particularly attractive case occurs when using neighborhoods based on relative entropy estimates,
since then it turns out that the c-posterior can be approximated by
simply raising the likelihood to a certain fractional power; see Section~\ref{section:relative-entropy}.
Consequently, in this case one can often do approximate inference using standard algorithms,
with no additional computational burden---in fact, the mixing time of Markov chain Monte Carlo (MCMC)
samplers will typically be improved, since the likelihood is tempered.
Further, when using exponential families and conjugate priors, one can even obtain 
analytical expressions for quantities such as a ``robustified'' marginal likelihood.
% other quantities? posterior predictive? posterior mean? expected loss?

The main disadvantage of c-posteriors is that sometimes they are less concentrated than
one would like---for instance, if it turns out that the amount of misspecification is less than expected.

An unexpected side benefit of our investigation of c-posteriors is that it reveals an interesting
connection between Gibbs posteriors and approximate Bayesian computation (ABC), two areas of current
research in Bayesian statistics. Roughly, a Gibbs posterior can be thought of as an asymptotic
approximation to a particular ABC posterior; see Section~\ref{section:method}.

The paper is organized as follows.  
First, to demonstrate the basic idea, in Section~\ref{section:toy}
we consider a c-posterior for the simplest possible toy example: Bernoulli trials.
Then, in Section~\ref{section:method}, we describe the c-posterior approach more generally.
We discuss connections with previous work in Section~\ref{section:connections}, and
in Section~\ref{section:theory} we establish various theoretical properties of c-posteriors
regarding asymptotics and robustness. %, and an equivalent representation in terms of a certain nonparametric model.
In Section~\ref{section:extensions}, we show how the method extends to time series and regression, and in
Section~\ref{section:applications}, we apply the c-posterior approach to 
autoregressive models of unknown order, variable selection in linear regression, 
and mixture models with an unknown number of components.
We close with a brief discussion of possible directions for future work.

\section{Toy example: Bernoulli trials}
\label{section:toy}

For expository purposes, we first introduce the c-posterior in a toy example.  Suppose $X_1,\ldots,X_n$ i.i.d.\ $\sim\Bernoulli(\theta)$
represent the outcomes of $n$ replicates of a laboratory experiment,
and the team of experimenters is interested in testing $\HH_0:\theta=1/2$ versus $\HH_1:\theta\neq 1/2$.  

The standard Bayesian approach is to define a prior probability for each hypothesis, say, $\Pi(\HH_0)=\Pi(\HH_1)=1/2$,
and define a prior density for $\theta$ in the case of $\HH_1$, say, $\theta|\HH_1 \sim\Uniform(0,1)$.
Inference then proceeds based on the posterior probabilities of the hypotheses, $\Pi(\HH_0|x_{1:n})$ and
$\Pi(\HH_1|x_{1:n}) = 1 - \Pi(\HH_0|x_{1:n})$, where $x_{1:n} = (x_1,\ldots,x_n)$.
If the observed data $x_1,\ldots,x_n$ are sampled i.i.d.\ from $\Bernoulli(\theta)$,
then the posterior is guaranteed to converge to the correct
answer, that is, $\Pi(\HH_0 | x_{1:n}) \xrightarrow[]{\mathrm{a.s.}} \I(\theta = 1/2)$ as $n\to\infty$.
% Note to self: This is by Doob's theorem, or even just by Levy's 0-1 law directly (Durrett, 4.5.8).
(We use $\I(\cdot)$ to denote the indicator function: $\I(E)=1$ if $E$ is true, and $\I(E)=0$ otherwise.)

In reality, however, it is likely that the observed data do not exactly follow the assumed model. 
For instance, some of the experiments may have been conducted under slightly different conditions than others
(such as at different times or by different researchers), or 
some of the outcomes may be corrupted due to human error in carrying out the experiment.
%or the desire to obtain a certain scientific finding may unconsciously bias the outcomes.
Of course, in such a simple setting as Bernoulli trials, it would be easy to improve the model to account for issues such as these.
However, for more complex models it is often not so easy, as discussed in the introduction,
and we seek a method that works well even with complex models.
%To reiterate, the purpose of this toy example is to communicate the basic idea in the simplest possible setting.

Suppose it is known that any such corruption affects the distribution of the data by only a small amount.
We can formulate this mathematically by considering $X_{1:n}$ to represent some hypothetical ``true'' data which do follow the model,
and supposing that the observations $x_{1:n}$ are close to the true data in some distributional sense, but not necessarily equal to it.
A natural way to define distributional ``closeness'' is in terms of the relative entropy 
$D(\hat p_x\|\hat p_X) = \sum_{i = 0}^1 \hat p_x(i)\log (\hat p_x(i)/\hat p_X(i))$ between
the empirical distributions of $x_{1:n}$ and $X_{1:n}$, i.e., $\hat p_x(1) = \bar x$ in this example.

Due to the corruption, it is inappropriate to condition on the true data $X_{1:n}$ being exactly equal to the observed data $x_{1:n}$.
Instead, if it is known that $X_{1:n}$ is close to $x_{1:n}$ in the sense that $D(\hat p_x\|\hat p_X)<r$,
and nothing more is known about the nature of the corruption, then a natural Bayesian approach would be to condition on
the event that $D(\hat p_x\|\hat p_X) < r$, that is, to use  $\Pi\big(\HH_0\,\big\vert\, D(\hat p_x||\hat p_X) < r\big)$.
In other words, rather than conditioning on the data exactly,
condition on a relative entropy neighborhood of the empirical distribution of the data.

In practice, one will typically only have a rough idea about the amount of corruption, 
and thus, it makes sense to put a prior on $r$, say, $R\sim\Exponential(\alpha)$.
This leads us to consider the following ``coarsened'' posterior, or c-posterior, for inferences about $\HH_0$ and $\HH_1$:
\begin{align}\label{equation:toy-coarse}
\Pi\big(\HH_0\,\big\vert\, D(\hat p_x||\hat p_X) < R).
\end{align}
In other words, we consider $\Pi(\HH_0 | Z=1)$ where $Z = \I(D(\hat p_x||\hat p_X) < R)$.
%where $R\sim\Exp(\alpha)$, $\alpha = 1/(2\epsilon^2)$.
How should we choose $\alpha$? In this example, we can interpret the neighborhood size $r$ in terms of intuitive Euclidean notions
by using the chi-squared approximation to relative entropy,
$D(p\|q)\approx \frac{1}{2}\chi^2(p,q)$ (see Prop.\ \ref{proposition:chi-square-relative-entropy}).
In particular, when $\bar X \approx 1/2$
% (which is the important region in this example due to the fact that we are testing $\HH_0:\theta = 1/2$),
we have $D(\hat p_x||\hat p_X)\approx 2|\bar x - \bar X|^2$, and thus,
if we expect the corruption to shift the sample mean by no more than $\epsilon$ or so when $\HH_0:\theta = 1/2$ is true,
then it makes sense to choose $\alpha$ so that $\E R \approx 2\epsilon^2$.
Since $\E R = 1/\alpha$ this suggests using $\alpha = 1/(2\epsilon^2)$.
\begin{figure}
  \centering
  \includegraphics[width=0.8\textwidth]{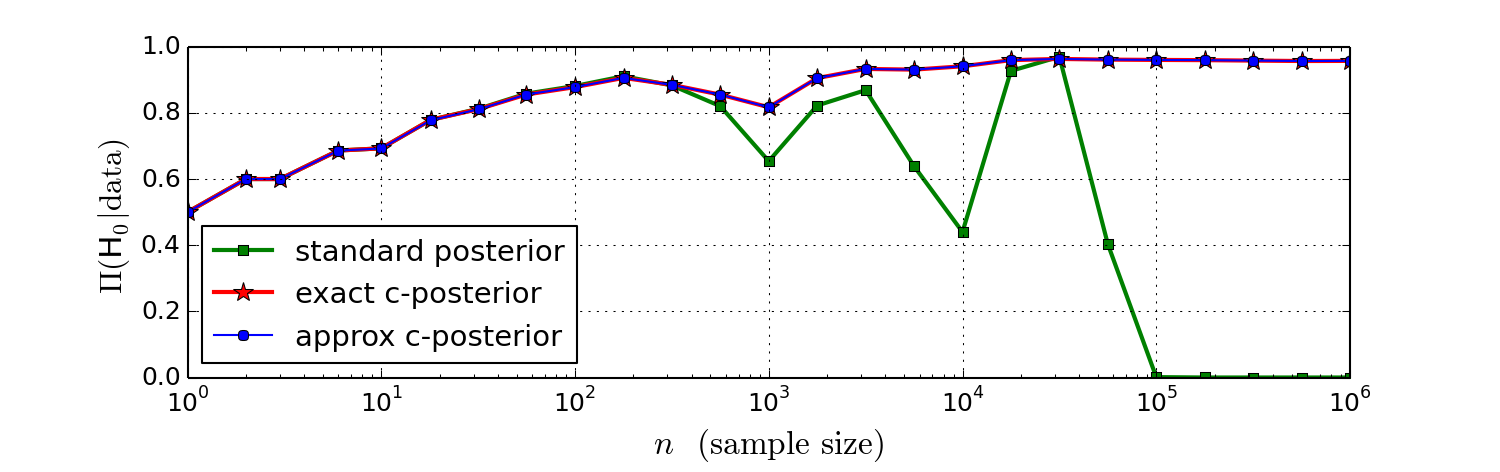}
  \includegraphics[width=0.8\textwidth]{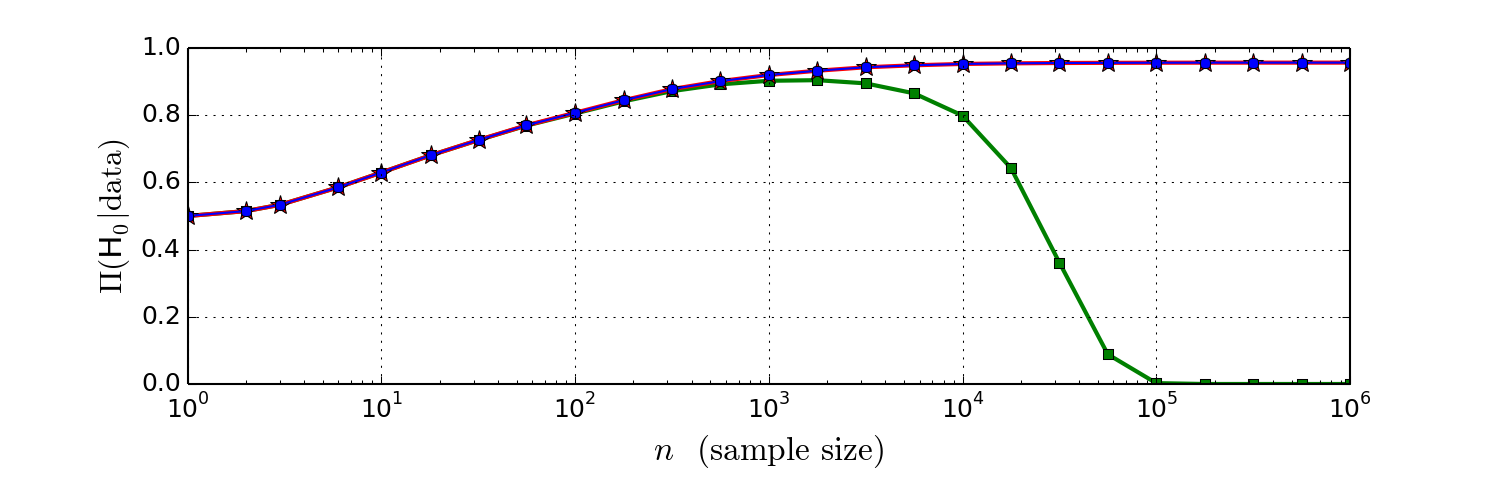}
  \includegraphics[width=0.8\textwidth]{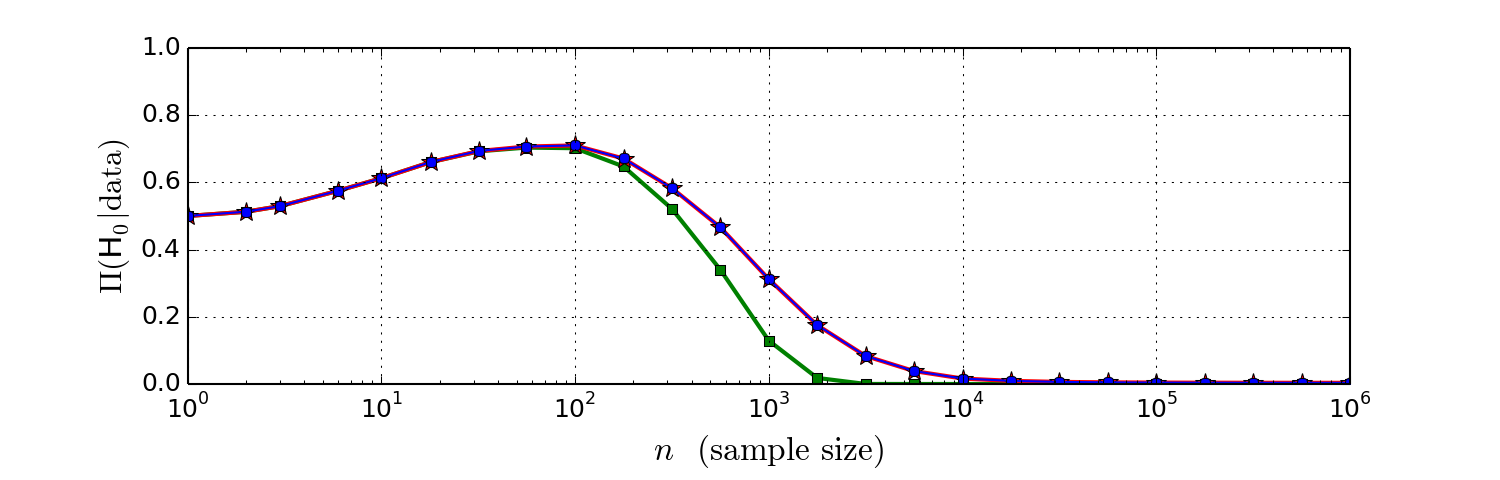}
  \caption{Bernoulli trials example. Top: Results from a single sequence $x_1,x_2,\ldots\iid\Bernoulli(0.51)$. 
  Middle: Average over 1000 sequences $x_1,x_2,\ldots\iid\Bernoulli(0.51)$.
  Bottom: Same as middle, but with 0.56 instead of 0.51. 
  In all three plots, the approximate c-posterior is indistinguishable from the exact c-posterior.}
  \label{figure:Bernoulli}
\end{figure}

In this toy example, the c-posterior in Equation \ref{equation:toy-coarse} can be computed exactly
(see Section~\ref{section:toy-details} for details),
however, in more complex cases, an approximation will be needed. In Section~\ref{section:relative-entropy}, we develop a general approximation
which, when applied to this example, yields
\begin{align}\label{equation:toy-approx}
\Pi\big(\HH_0\,\big\vert\, D(\hat p_x||\hat p_X) < R) \approx 1/\big(1 + 2^{\alpha_n} B(1+\alpha_n \bar x,\,1+\alpha_n(1-\bar x))\big)
\end{align}
where $\alpha_n = 1/(1/n +1/\alpha)$ and $B(a,b)$ is the beta function (see \ref{section:toy-details} for details).
Comparing this to the standard posterior,
\begin{align}\label{equation:toy-standard}
\Pi\big(\HH_0\,\big\vert\, X_{1:n}=x_{1:n}) = 1/\big(1 + 2^n B(1 + n \bar x,\,1 + n(1-\bar x))\big),
\end{align}
note that the only difference is that $n$ has been replaced by $\alpha_n$ in the c-posterior.
%It turns out that this is a general property of relative entropy c-posteriors---they can be approximated by simply using $\alpha_n$
%in place of the sample size $n$. This is useful both for computation and interpretation of c-posteriors.

To illustrate numerically, suppose we would like to be robust to perturbations affecting $\bar x$ by roughly 
$\epsilon = 0.02$ when $\HH_0$ is true. As described above, this corresponds to $\alpha = 1/(2\cdot 0.02^2)=1250$.
Now, suppose that in reality $\HH_0$ is indeed true, and the data are corrupted in such a way that $x_1,\ldots,x_n$ 
behave like i.i.d.\ samples from $\Bernoulli(0.51)$. 
Figure~\ref{figure:Bernoulli} (top and middle) shows the probability of $\HH_0$ under the standard posterior, the exact c-posterior, and 
the approximate c-posterior (Equations \ref{equation:toy-standard}, \ref{equation:toy-coarse}, and \ref{equation:toy-approx}, respectively),
for increasing values of the sample size $n$. 

When $n$ is small, there is not enough power to distinguish between 0.5 and 0.51, so the standard posterior favors
$\HH_0$ at first (due to the Bartlett--Lindley effect), but as $n$ increases, eventually the posterior probability of $\HH_0$ goes to 0.
(So, when $n$ is large, the standard posterior is not robust to this perturbation.)
Meanwhile, the c-posterior behaves the same way as the standard posterior when $n$ is small,
but as $n$ increases, the c-posterior probability of $\HH_0$ remains high, as desired---thus, the c-posterior remains robust for large $n$.
Note, further, that the curve for the approximate c-posterior is directly on top of the curve for the exact c-posterior---the approximation is so close that the two are indistinguishable.

What if the departure from $\HH_0$ is significantly larger than our chosen tolerance of $\epsilon = 0.02$? 
Does the c-posterior more strongly favor $\HH_1$ in such cases, as it should?
Indeed, it does. Figure~\ref{figure:Bernoulli} (bottom) shows the three posteriors on data $x_1,\ldots,x_n\iid\Bernoulli(0.56)$.
We see that in this case, the c-posterior behaves more like the standard posterior, favoring $\HH_1$ when $n$ is sufficiently large.

It is important to note that, unlike the standard posterior, the c-posterior does not concentrate as $n\to\infty$.
This is appropriate, since in the presence of corruption,
some uncertainty always remains about the true distribution, no matter how much data is observed.

\section{Method}
\label{section:method}
% \input{method}

% METHOD

% Note to self: In general, don't say that we ``propose'' the method, but rather that we ``explore'' it.
% Take a scientific perspective, rather than a sales perspective.

In this section, we describe the c-posterior approach more generally. %the method to be explored in this paper.
%First, we take a general perspective, and then specialize to the particular version applied in this paper.
For discussion of connections with previous work, see Section~\ref{section:connections};
in particular, our ideas have been influenced by \citet{lindsay2009model} and \citet{wilkinson2013approximate}.
For the time being, we assume i.i.d.\ data,
but the approach generalizes, for example, to time-series (Section~\ref{section:time-series}) and
regression (Section~\ref{section:regression}).

% We would like to have the best of
% both worlds---the interpretability and computational advantages of simple parametric models, as well
% as some of the flexibility of nonparametric models to adapt to the complexity of the data.  Roughly
% speaking, what we would like to have is a procedure that is ``satisfied'' with points in model space
% that are close enough in some sense.  What is desired is a sort of robustness property, but a rather
% unusual one since we are dealing with a model selection problem in which---for instance, in the
% nonparametric setting---the distribution of the data could actually be perfectly well fit by models
% of increasing complexity.  

Suppose we have a model $\{P_\theta:\theta\in\Theta\}$ along with a prior $\Pi$ on
$\Theta$, and suppose there is a point $\theta_I\in\Theta$ representing the parameters
of the \textit{idealized distribution} of the data.
% Note to self: Use ``idealized'' and ``observed'' rather than ``true'' (which is ambiguous)?
% Note to self: ideal distribution terminology coincides with that of Donoho and ?.
The interpretation here is that $\theta_I$ is the ``true'' state of nature about which one is interested
in making inferences; it may represent some actual underlying truth or may merely be a useful fiction.  
(For context, note that in many scientific endeavors, one employs idealized models that capture the most
important features of the phenomena of interest, without harboring any illusions that they 
completely describe every detail.)
%(For context, note that in many areas of science, there are idealized models that capture the most
%important features of the phenomenon of interest, but are not intended to completely describe every
%detail.)
% Examples include Wright--Fisher models, neutral theory of evolution, and coalescent theory in population genetics; others?
% Indeed, one could argue that this is fundamental to the scientific endeavor---knowledge and understanding
% of the real world is obtained by distilling complex phenomena down to comprehensible rules of thumb.
Now, suppose there are some unobserved \textit{idealized data}
$X_1,\dotsc,X_n\in\X$ which are i.i.d.\ from $P_{\theta_I}$,
however, the \textit{observed data} $x_1,\dotsc,x_n\in\X$ are actually a slightly corrupted version of
$X_1,\dotsc,X_n$ in the sense that $d(\PX,\Px) < r$ for some statistical distance $d(\cdot,\cdot)$ 
% (not necessarily a true metric)
and some $r > 0$, where $\hat P_{x_{1:n}}=\frac{1}{n}\sum_{i = 1}^n \delta_{x_i}$
denotes the empirical distribution of $x_{1:n} =(x_1,\dotsc,x_n)$.
%(Hereafter, we use $y$ to denote the observed values, and $x$ to denote instances of $X$.)
Suppose $x_1,\ldots,x_n$ behave like i.i.d.\ samples from some $P_o$,
and note that due to the corruption, we expect that $P_o\neq P_{\theta_I}$.
For intuition, consider the diagram in Figure~\ref{figure:diagram}.

\begin{figure}
  \centering
  \includegraphics[width=0.7\textwidth]{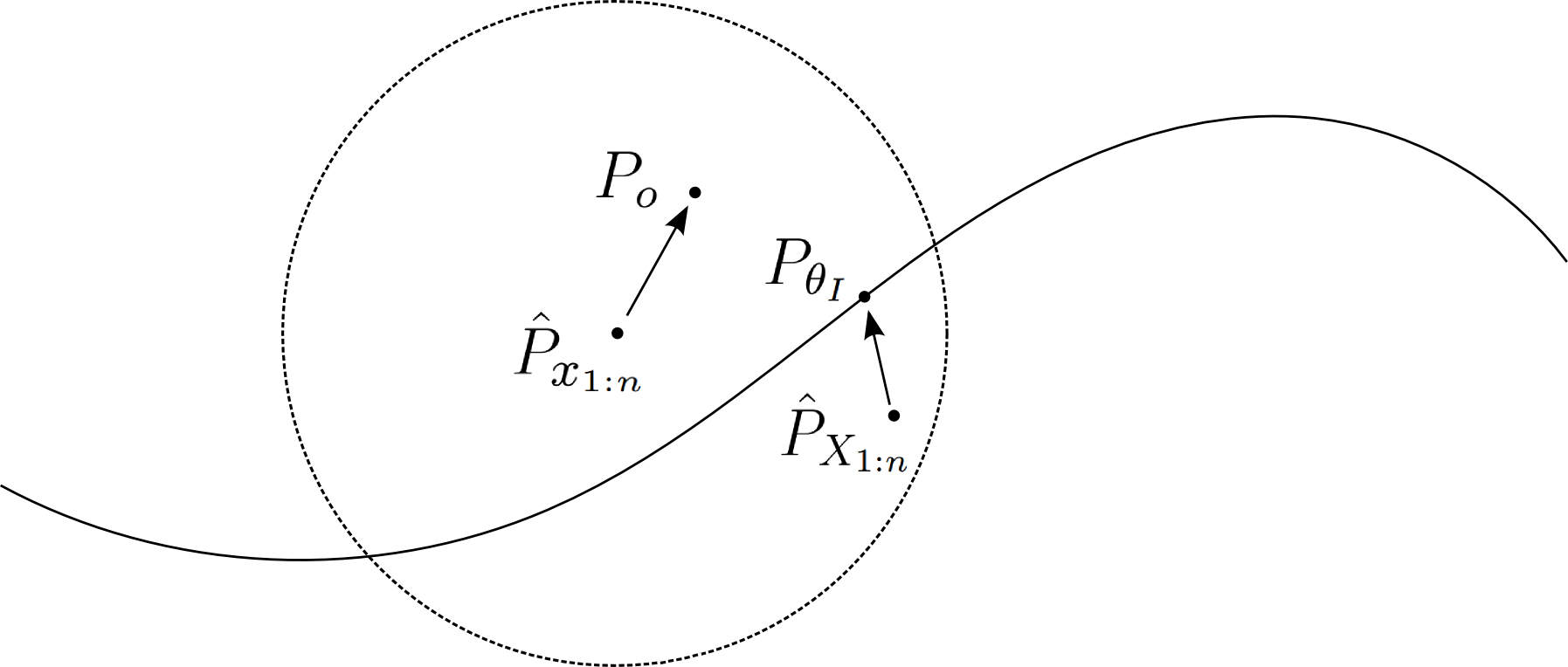}
  \caption{
  Notional schematic diagram of the idea behind the c-posterior. The ambient space is the set of probability distributions on $\X$,
  and the curve represents the subset of distributions in the parametrized family $\{P_\theta:\theta\in\Theta\}$.
  The idealized distribution $P_{\theta_I}$ is a point in this subset, and the empirical distribution $\PX$ of the true data
  converges to $P_{\theta_I}$ as $n\to\infty$.  Although $\PX$ is not observed, it is known to be within an $r$-neighborhood
  of the empirical distribution $\Px$ of the observed data, which, in turn, converges to $P_o$. 
  The basic idea of the c-posterior approach is to condition on the event that $\PX$ is within this neighborhood.
  }
  \label{figure:diagram}
\end{figure}

% If $P_{\theta_I}$ is what we really care about, then it would clearly be incorrect for the
% posterior to concentrate at $P_o$, which is, of course, what a typical nonparametric method would
% do.  Ideally, the posterior would concentrate at $P_{\theta_I}$, but this is not possible due
% to a lack of information about $P_{\theta_I}$ in the observed data.  Instead, this picture
% suggests that, asymptotically, any $P\in\bigcup_k \M_k$ close to $P_o$ should be considered as a
% plausible candidate for $P_{\theta_I}$.
% For finite sample size, we should consider as plausible any $P\in\bigcup_k \M_k$ for which there is
% a good probability that a sample $Y_1,\dotsc,Y_n\iid P$ lands within an $r$ neighborhood of $P_o$
% in the sense that $d(\Px,\PY) < r$.

% This leads us to the question of how one might keep the existing model, but apply it in a way that
% tolerates small distributional discrepancies.

% Old ideas, possibly to motivate our approach:
% - quantize the data---unappealing, etc.
% - quantize/partition the model space---unappealing/messy/arbitrary

If there were no corruption (i.e., contamination/misspecification), then % $X_{1:n}$ would be equal to $x_{1:n}$, and
we should use the standard posterior---that is, we should condition on the event that 
$X_{1:n} = x_{1:n}$---however, due to the corruption this would clearly be incorrect.  
Of course, if one could easily model the corrupting process by which 
$x_{1:n}$ is generated from $X_{1:n}$, then the most sensible approach would be to 
simply incorporate it into the model, but this may be impractical, 
as discussed at length in the introduction.
%But if this is not the case, how can one make coherent Bayesian inferences?
%In the absence of knowledge about the corrupting process, what can one do?

An alternative approach is to condition on what is known---that is, 
to condition on the event that $d(\PX,\Px) < r$.
In other words, rather than the standard posterior $\Pi(d\theta \mid X_{1:n} = x_{1:n})$, consider 
$\Pi\big(d\theta \mid d(\PX,\Px) < r\big)$.
% Note to self: Possibly also note that this is $\pi(\theta|...)$ when $\Pi$ has a density? (check to make sure)
Since usually one will not have sufficient \textit{a priori} knowledge to choose $r$,
it makes sense to put a prior on it, say $R\sim H$, independently of $\theta$ and $X_{1:n}$. 
Generalizing further, take a sequence of functions $d_n$ such that 
$d_n(X_{1:n},x_{1:n})\geq 0$ is some measure of the discrepancy between $X_{1:n}$ and $x_{1:n}$.  
\begin{definition}
We refer to $\Pi\big(d\theta \mid d_n(X_{1:n},x_{1:n}) < R\big)$ as a \textit{c-posterior}.
\end{definition}
It is useful to note that one can write the c-posterior as
\begin{align}
\Pi\big(d\theta\mid d_n(X_{1:n},x_{1:n}) < R\big)
&\propto \Pi(d\theta)\,\Pr\big(d_n(X_{1:n},x_{1:n}) < R \mid \theta\big)  \notag\\
%&= \E\big(G(d_n(X_{1:n},x_{1:n}))\mid \theta\big)\Pi(d\theta)
&= \Pi(d\theta)\int_{\X^n} G(d_n(x'_{1:n},x_{1:n})) P_\theta^n(d x'_{1:n})
\label{equation:exact-c-posterior}
\end{align}
where $G(r) = \Pr(R>r)$.
% Note to self: This shows that the c-posterior is always proper.
The intuitive interpretation is that, to use a rough analogy,
this integral is like a convolution of $P_\theta^n$
(the distribution of $X_{1:n}$) with the ``kernel'' $G(d_n(X_{1:n},x_{1:n}))$.
%This interpretation is closely connected with an (asymptotically) equivalent representation
%of the c-posterior with a certain model in which this ``convolution'' is instead effected
%by a nonparametric prior on $P_o$ given $P_\theta$, and then $x_1,\ldots,x_n$ are i.i.d.\ from $P_o$;
%see Section~\ref{section:equivalent}.
Some readers will recognize the form of the c-posterior in connection with
\textit{approximate Bayesian computation} (ABC),
where it arises due to the nature of the approximation \citep{marjoram2003markov}.
Needless to say, our motivation for using it is completely different than in ABC;
see Section~\ref{section:connections} for discussion.

%The simplest such approximations are based on the asymptotic form of the c-posterior;
%the following heuristic argument is made precise in Section~\ref{section:asymptotic}.
%For clarity, assume $d_n(X_{1:n},x_{1:n}) = d(\PX,\Px)$, and note that
%when $n$ is large, the distribution of $\PX$ is concentrated near $P_\theta$, so
%under a continuity assumption, $d(\PX,\Px) \approx d(P_\theta,\Px)$ with high probability.
%Applying this to the integral in Equation \ref{equation:exact-c-posterior}
%suggests that, for large $n$, the c-posterior can be approximated as
%\begin{align} \label{equation:asymptotic-method}
%\propto G(d(P_\theta,\Px))\Pi(d\theta).
%\end{align}
%More generally, if $d_n(X_{1:n},x_{1:n}) \approx \tilde d_n(\theta,x_{1:n})$ with high probability, 
%one can similarly plug this in instead.

While one could do inference for the c-posterior using algorithms similar to those used for ABC,
this would be very slow, and would not exploit the fact that the likelihood is tractable.
Instead, we develop approximations to the c-posterior that facilitate efficient inference.
The crudest approach is based on the asymptotics of the c-posterior (see Section~\ref{section:asymptotic}),
which show that for large $n$, % a c-posterior based on $d(\cdot,\cdot)$ can be approximated as
\begin{align} \label{equation:asymptotic-method}
\Pi\big(d\theta \mid d(\PX,\Px) < R\big)
\approxprop G(d(P_\theta,\Px))\Pi(d\theta)
\end{align}
under mild regularity conditions;
the intuition here is that when $n$ is large, $\PX\approx P_\theta$ with high probability.
%setting it equal in Equation \ref{equation:exact-c-posterior} yields this approximation.
Thus, if $d(P_\theta,\Px)$ can be easily computed, 
Equation \ref{equation:asymptotic-method} can be used to perform approximate inference for the c-posterior by,
for instance, using it as a target distribution in Metropolis--Hastings MCMC.
If $R\sim\Exponential(\alpha)$, so that $G(r) = e^{-\alpha r}$,
Equation \ref{equation:asymptotic-method} becomes $\exp(-\alpha d(P_\theta,\Px))\Pi(d\theta)$, 
which some readers will recognize as a Gibbs posterior \citep{jiang2008gibbs}; see Section~\ref{section:connections}.
More generally, similar approximations can be made when using a c-posterior based on $d_n(X_{1:n},x_{1:n})$.

A major disadvantage of an asymptotic approximation as in Equation \ref{equation:asymptotic-method}, however,
is that it is only good when $n$ is sufficiently large; in a sense, it ignores the randomness in $X_{1:n}$.
For the case of relative entropy, we develop a much better approximation that is also applicable for smaller $n$;
see Section~\ref{section:relative-entropy} below.

% Distances
There are many possible choices of statistical distance $d(\cdot,\cdot)$, % or $d_n(X_{1:n},x_{1:n})$,
and the robustness properties of the c-posterior depend on this choice;
in Section~\ref{section:robustness}, we show that as one would expect, the c-posterior 
is robust to changes in $P_o$ that are small with respect to $d(\cdot,\cdot)$.
We use the term \textit{statistical distance} very broadly, to mean any nonnegative function
for assessing discrepancy that is meaningful for a given application.
A few potential candidates would be Kolmogorov--Smirnov (in the univariate setting),
Wasserstein, or a maximum mean discrepancy \citep{gretton2006kernel}.
When $P_\theta$ and $P_o$ admit density functions,
it is also possible to accomodate distances on densities, such as 
relative entropy, Hellinger distance, and various divergences---even though
% Bregman divergences, R\'{e}nyi divergences, and other divergences---even though
they may be undefined for empirical distributions---by choosing
$d_n(X_{1:n},x_{1:n})$ to be a consistent estimator of $d(P_\theta,P_o)$.
% Other possibilities include constructing application-specific distance functions
% or loss functions as is often done with Gibbs posteriors.

In the applications presented in this paper (see Section~\ref{section:applications}),
we focus on relative entropy (and variations thereof)
as our choice of $d(\cdot,\cdot)$, since it works out exceptionally nicely in several respects.
In particular, it turns out that in this case there is a trick that makes
it unnecessary to explicitly compute $d_n(X_{1:n},x_{1:n})$.
%(with $d_n(X_{1:n},x_{1:n})$ representing relative entropy estimates)
We discuss this next. % in Section~\ref{section:relative-entropy}.

% - Exact
% - Asymptotic approximation
% --- For large sample sizes, most of the uncertainty will be regarding $P_{\theta_I}$ instead of $P_o$.

%Although this approximation is only valid for sufficiently large $n$, it could be argued that
%contamination and misspecification are issues only in the large sample regime anyway,
%since models automatically exhibit a form of robustness when $n$ is small, as discussed earlier.

\subsection{Relative entropy c-posteriors}
\label{section:relative-entropy}

In the case of relative entropy, %(and variants such as relative entropy rate and conditional relative entropy),
there is an approximation to the c-posterior that improves upon Equation \ref{equation:asymptotic-method}
in two respects: it is applicable for small $n$, and it is extremely easy to work with---in fact,
it simply amounts to tempering the likelihood.
As above, we assume i.i.d.\ data for the moment, and refer to
Sections \ref{section:time-series} and \ref{section:regression} for generalizations.

Suppose $P_o$ and $P_\theta$ (for all $\theta\in\Theta$) have densities $p_o$ and $p_\theta$, respectively,
with respect to some sigma-finite measure $\lambda$. Define
$$ d(P_\theta,P_o) = D(p_o\|p_\theta) = \int p_o(x)\Big(\log\frac{p_o(x)}{p_\theta(x)}\Big)\lambda(d x), $$
and suppose $d_n(X_{1:n},x_{1:n})$ is a consistent estimator of $D(p_o\|p_\theta)$.
If $R\sim\Exponential(\alpha)$, then asymptotically, the c-posterior based on $d_n(X_{1:n},x_{1:n})$ is proportional to
%(Equation \ref{equation:asymptotic-method-exp})
\begin{align}
\label{equation:asyptotic-relative-entropy-posterior}
\exp(-\alpha D(p_o\|p_\theta))\Pi(d\theta) 
& \propto \exp(\alpha {\textstyle\int} p_o \log p_\theta)\Pi(d\theta) \\
& \approx \exp\Big(\alpha \frac{1}{n}\sum_{i=1}^n \log p_\theta(x_i)\Big)\Pi(d\theta) \notag\\
& = \Pi(d\theta) \prod_{i=1}^n p_\theta(x_i)^{\alpha/n} \notag
\end{align}
under mild regularity conditions; see Section~\ref{section:relative-entropy-asymptotic}.
%Thus, for $n$ sufficiently large, the c-posterior can be approximated 
%simply by raising the likelihood to the power $\alpha/n$.
When $n$ is small relative to $\alpha$, however, this is unsuitable as an approximation to
the c-posterior, since in particular,
if $\alpha/n > 1$ then this makes the likelihood more concentrated, rather than less.

%Although this argument is only valid for the $|\X|<\infty$ case, the intuition behind it
%(and the scaling of $\zeta_n$ with respect to $\alpha$ and $n$) 
%suggests that it should also work reasonably well in the general case.
%It may be possible to generalize the argument.

% future: Can we use empirical process theory to generalize?
Instead, we propose a better alternative, based on a central limit theorem approximation.
The derivation is well-founded when the sample space $\X$ has finitely-many elements,
and the extension to general $\X$, while heuristic, is intuitively sensible.
When $|\X|<\infty$, a natural choice of 
$d_n(X_{1:n},x_{1:n})$ is simply $D(\hat p_{x_{1:n}}\|\hat p_{X_{1:n}})$, that is, the relative
entropy of the empirical densities. Assume $R\sim\Exponential(\alpha)$.
Then by Equation \ref{equation:exact-c-posterior}
and an approximation detailed in Section~\ref{secton:small-sample-justification},
\begin{align*}
\Pi\big(d\theta\mid d_n(X_{1:n},x_{1:n}) < R\big)
%&\propto \Pi(d\theta)\Pr\big(d_n(X_{1:n},x_{1:n}) < R \mid \theta\big)  \\
&\propto \E\big(\exp(-\alpha D(\hat p_{x_{1:n}}\|\hat p_{X_{1:n}}))\mid \theta\big)\Pi(d\theta) \\
&\approx (n\zeta_n/\alpha)^{\frac{|\X|-1}{2}} \exp(-n\zeta_n D(\hat p_{x_{1:n}}\| p_\theta))\Pi(d\theta) \\
&\propto \exp\big(\zeta_n \textstyle{\sum_{i=1}^n} \log p_\theta(x_i)\big)\Pi(d\theta)
\end{align*}
where
\begin{align}\label{equation:zeta}
\zeta_n = \frac{1/n}{1/n +1/\alpha} = \frac{1}{1 + n/\alpha}.
\end{align}
The rough idea is that $\hat p_{X_{1:n}}$ is approximately multivariate normal with mean $p_\theta$ and precision of order $n$, and
the exponentiated relative entropy approximates a normal density with mean $\hat p_{X_{1:n}}$ and precision of order $\alpha$,
so the expectation above behaves like the convolution of two normals, and the resulting precision is of order $1/(1/n + 1/\alpha) = n \zeta_n$.
As demonstrated in Figure~\ref{figure:Bernoulli}, this approximation can be quite accurate even when $n$ is small,
although more experimentation is needed to assess its accuracy in more general situations.
Note that $\zeta_n \approx \alpha/n$ when $n \gg \alpha$, and $\zeta_n \approx 1$ when $\alpha \gg n$;
thus, we refer to this as the small-sample correction to the asymptotic approximation.

%From Equation \ref{equation:exact-c-posterior},
%\begin{align*}
%\Pi\big(d\theta\mid d_n(X_{1:n},x_{1:n}) < R\big)
%%&\propto \Pi(d\theta)\Pr\big(d_n(X_{1:n},x_{1:n}) < R \mid \theta\big)  \\
%&= \Pi(d\theta)\,\E\big(\exp(-\alpha D(\hat p_{x_{1:n}}\|\hat p_{X_{1:n}}))\mid \theta\big).
%\end{align*}
%We argue in Appendix \ref{section:proofs} that
%\begin{align*}
%\E\big(\exp(-\alpha D(\hat p_{x_{1:n}}\|\hat p_{X_{1:n}}))\mid \theta\big)
%\approx (n\zeta_n/\alpha)^{\frac{|\X|-1}{2}} \exp(-n\zeta_n D(\hat p_{x_{1:n}}\| p_\theta)),
%\end{align*}
%where $\zeta_n = (1/n)/(1/\alpha + 1/n)$, yielding
%$$ \Pi\big(d\theta\mid d_n(X_{1:n},x_{1:n}) < R\big)
%\approxprop \Pi(d\theta) \exp\big(\zeta_n \sum_{i=1}^n \log p_\theta(x_i)\big), $$
%as in Equation \ref{equation:zeta-posterior}.

This leads to the following useful approximation to the relative entropy c-posterior:
\begin{align}
\label{equation:zeta-posterior}
\Pi\big(d\theta\mid d_n(X_{1:n},x_{1:n}) < R\big)
\,\,\approxprop\,\, \Pi(d\theta) \prod_{i=1}^n p_\theta(x_i)^{\zeta_n}.
\end{align}
Note, in particular, that this enables one to approximate the c-posterior
without explicitly computing the relative entropy estimates $d_n(X_{1:n},x_{1:n})$,
which would normally involve computing a density estimate of $p_o$ in order to handle the
$\int p_o \log p_o$ term in $D(p_o\|p_\theta)$.
Since this term is constant with respect to $\theta$, it is absorbed into the constant of proportionality,
allowing one to bypass this density estimation step,
which would be both computationally expensive and statistically inefficient, especially in high dimensions.
% Normally would involve computing a density estimate and estimating the integral
% - Intractable in high dimensions
% - Statistically awful, since relative entropy on density estimates can be very badly behaved:
% --- On Kullback-Leibler loss and density estimation
% --- P Hall - The Annals of Statistics, 1987

\begin{definition}
Given $\zeta\in[0,1]$, 
we refer to $\prod_{i=1}^n p_\theta(x_i)^{\zeta}$ as a \textit{power likelihood},
and to the distribution proportional to $\Pi(d\theta) \prod_{i=1}^n p_\theta(x_i)^{\zeta}$ 
as a \textit{power posterior}.
\end{definition}
%In this paper, we use $\zeta = \zeta_n = 1/(1+n/\alpha)$.
There are a number of other methods in which a power likelihood is employed (see Section~\ref{section:connections}),
however, to our knowledge, the form of power we use (i.e., $\zeta = \zeta_n$),
and its theoretical justification, are novel.
% future: Under what conditions is the power posterior proper?

A useful interpretation of our power posterior is that
%it is like choosing the sample size, in the sense that one can control how concentrated the posterior is around the
%maximum likelihood estimator. Specifically,
it corresponds to adjusting the sample size from $n$ to $n \zeta_n$.
Thus, roughly speaking, by choosing a particular value of $\alpha$,
one makes the power posterior tolerant of all $\theta$'s for which a sample of size $\alpha$
from $P_o$ could plausibly have come from $P_\theta$. 
This idea is closely related to the model credibility index of \citet{lindsay2009model}.
%Further insight into this way of thinking can be obtained by considering the following
%result from the method of types \citep[][Theorem 11.1.4]{cover2006elements}:
%if $|\X|<\infty$ and $P =\frac{1}{n}\sum_{i = 1}^n \delta_{x_i}$ then
%$\Pr(\PX=P\mid \theta) \approx \exp(-n D(P\|P_\theta))$,
%the approximation being good to first order in the exponent.
This suggests the interesting possibility of approximating the c-posterior by
taking random subsets of size $\alpha$, and combining the resulting posteriors
in some way; this might have certain advantages in terms of computation and implementation,
however, we do not explore it in this article.

Due to its simple form, inference using the power posterior is often easy,
or at least, no harder than inference using the ordinary posterior.
We discuss three commonly-occuring cases:
analytical solution in the case of exponential families with conjugate priors,
Gibbs sampling in the case of conditionally-conjugate priors, 
and Metropolis--Hastings MCMC more generally.

\subsubsection{Power posterior with conjugate priors}
\label{section:conjugate-priors}

When using exponential families with conjugate priors, one can often obtain
analytical expressions for integrals with respect to the power posterior.
Suppose $p_\theta(x) = \exp\big(\theta^\T s(x) - \kappa(\theta)\big)$,
where $s(x)=(s_1(x),\ldots,s_k(x))^\T$ are the sufficient statistics,
% and $\kappa(\theta) = \log\int_{\X} \exp\big(\theta^\T s(x)\big) \lambda(d x)$.
and suppose $\Pi(d\theta) =\pi_{\xi,\nu}(\theta)d\theta$ where
$\pi_{\xi,\nu}(\theta) = \exp\big(\theta^\T\xi - \nu\kappa(\theta) - \psi(\xi,\nu)\big)$,
noting that this defines a conjugate family.
% Then the standard posterior is proportional to
% $$ \pi_{\xi,\nu}(\theta)\prod_{i = 1}^n p_\theta(x_i)
% \propto \exp\Big(\theta^\T \big(\xi + \textstyle{\sum_i} s(x_i)\big) - (\nu + n)\kappa(\theta)\Big), $$
Then the power posterior is proportional to
\begin{align}\label{equation:power-posterior}
   \pi_{\xi,\nu}(\theta)\prod_{i = 1}^n p_\theta(x_i)^{\zeta_n}
\propto \exp\Big(\theta^\T \big(\xi + \zeta_n\textstyle{\sum_i} s(x_i)\big) - (\nu + n\zeta_n)\kappa(\theta)\Big)
\propto \pi_{\xi_n,\nu_n}(\theta),
\end{align}
where $\xi_n = \xi + \zeta_n\sum_i s(x_i)$ and $\nu_n =\nu + n\zeta_n$,
and thus, the power posterior remains in the conjugate family.

For most conjugate families used in practice, simple analytical expressions are available for
the log-normalization constant $\psi(\xi,\nu)$
as well as for many integrals with respect to $\pi_{\xi,\nu}(\theta)$.
This enables one to obtain analytical expressions for many quantities of inferential interest
under the power posterior, thus providing approximations to the corresponding quantities under the
relative entropy c-posterior.
For instance, one obtains a marginal power likelihood,
$$ \int_\Theta \pi_{\xi,\nu}(\theta)\prod_{i = 1}^n p_\theta(x_i)^{\zeta_n} d\theta
= \exp\big(\psi(\xi_n,\nu_n) - \psi(\xi,\nu)\big),$$
which can be used to compute robustified Bayes factors or a 
robust posterior on models, in the context of model inference.
This is robust to small perturbations to $P_o$ (in the sense of relative entropy),
whereas the usual model inference/selection procedures can be very sensitive to such perturbations, for large $n$;
see Section~\ref{section:robustness} for details.
%In general, posterior expectations $\int_\Theta h(\theta) \pi_{\xi_n,\nu_n}(\theta)d\theta$
%with respect to the power posterior are robust in the same sense.

In Section~\ref{section:toy}, we used this approach in the toy example involving Bernoulli trials.
In Section~\ref{section:autoregressive}, we apply it to perform robust inference
for the order of an autoregressive model.

%- implementation is usually trivial---typically, nearly identical to using the ordinary posterior
%- consequently, computation is usually easy
%- analytical solutions when using exp fams with conj priors
%--- formula to explain briefly
%--- robustified Bayes factors / posterior on k in model selection setting
%--- refer to autoreg example

\subsubsection{MCMC on the power posterior}

Often, it is desirable to place conditionally-conjugate priors on the parameters of an
exponential family---for instance, placing independent normal and inverse-Wishart
priors on the mean and covariance of a normal distribution. In such cases, 
one can use Gibbs sampling on the power posterior, because for each parameter given the others,
we are back in the case of a conjugate prior, and thus---as shown by
Equation \ref{equation:power-posterior}---the full conditionals belong to the conjugate family,
making them easy to sample from.
In Section~\ref{section:variable-selection}, we use Gibbs sampling for variable selection in linear regression 
with the power posterior.

More generally, samples can be drawn from the power posterior by using Metropolis--Hastings MCMC,
with the power likelihood in place of the usual likelihood.
In Section~\ref{section:mixtures}, we use Metropolis--Hastings for inference in mixtures with a prior on the number of components,
with the power posterior.

By a stroke of luck, the mixing time for MCMC with the power posterior 
will often be better than with the standard posterior, since
raising the likelihood to a fractional power (i.e., between 0 and 1) has the effect of flattening it,
enabling the sampler to more easily move through the space, particularly when
there are multiple modes and $n$ is large. Indeed, raising the likelihood to a 
fractional power---also known as tempering---is sometimes done in 
more complex MCMC schemes in order to improve mixing time, a well-known example
being MC\textsuperscript{3} \citep{geyer1991markov}.

Thus, generally speaking, it is a straightforward matter to use MCMC
for sampling from the power posterior. However, there is a subtle point
that should be carefully noted. Often, latent variables are introduced 
into an MCMC scheme in order to facilitate moves or to improve mixing, and sometimes,
such latent variables do not work in the same way for the power posterior.
For example, in a mixture model, say, $\sum_{i = 1}^k w_i f_{\varphi_i}(x)$, latent variables $z_1,\ldots,z_n$ indicating which
component each datapoint comes from are often introduced, so that the full conditional distributions for 
Gibbs sampling from $w$, $\varphi$, and $z$ take nice and simple forms.
However, when using the power posterior, the likelihood is $\prod_{j = 1}^n \big(\sum_{i = 1}^k w_i f_{\varphi_i}(x_j)\big)^{\zeta_n}$,
and it seems that introducing $z_1,\ldots,z_n$ no longer leads to nice full conditionals.
On the other hand, it may be possible to use a different set of latent variables; see \citet{walker13bayesian} for the case of mixtures.

\section{Connections with previous work}
\label{section:connections}
% \input{conn}

% CONNECTIONS

The c-posterior is mathematically equivalent to the type of posterior approximation resulting from approximate Bayesian computation (ABC)
\citep{tavare1997inferring,marjoram2003markov,beaumont2002approximate,wilkinson2013approximate}---indeed,
ABC provided part of our inspiration for considering this form of posterior.  However, there are some crucial distinctions to note. 
First, the motivation here is completely different than with ABC: we are concerned with robustness to misspecification,
while ABC is concerned with inference in models with intractable likelihoods.
Generally speaking, we assume the likelihood is easily computed, which makes inference much more computationally efficient.
Another major difference is that in ABC, the coarsened posterior is viewed as an undesirable side effect of the approximate nature of ABC,
while from our perspective, it is precisely the object of interest---in other words, for us it is an asset, not a liability. 
%It should also be noted that we (usually) are interested in choosing $d(\cdot,\cdot)$ to be a distance between probability distributions, rather than Euclidean distance between summary statistics (but this is a less consequential distinction).

%ABC is a method of approximate inference in models with intractable likelihoods, in which (in the simplest version) one draws a large number of samples $\theta_1,\dotsc,\theta_N\iid\Pi$, draws $X_1^{(i)},\dotsc,X_n^{(i)}\iid \theta_i$ for each $i=1,\dotsc,N$, and keeps as posterior samples only those $\theta_i$'s for which $d(s(X_{1:n}^{(i)}),s(x_{1:n}^o)) \leq \epsilon$ for some summary statistics $s$ and some metric $d(\cdot,\cdot)$, typically Euclidean distance; this turns out to be equivalent to sampling from $\Pi(\theta \mid d(s(X_{1:n}),s(x_{1:n}^o)) \leq \epsilon)$.
% --- for some distances, Wilkinson's interpretation breaks down due to infinite volume

The c-posterior can also be viewed as conditioning on partial information, a technique which is often used to improve robustness
\citep{doksum1990consistent,pettitt1983likelihood,hoff2007extending,dunson2005approximate}; also see \citet{cox1975partial}.
Usually, however, this is done by conditioning on some insufficient statistic; for example, \citet{doksum1990consistent}
perform robust Bayesian inference for a location parameter by conditioning only on the sample median, rather than the whole sample.
Our approach of conditioning on a distributional neighborhood is quite different.

%Conditioning on partial information
%- Doksum \& Lo --- fixes Diaconis and Freedman's inconsistency example?
%- Pratt 1965
%- Pettitt - ``Likelihood-based inference using signed ranks...'', 1983
%- MacEachern?

Gibbs posteriors have recently been introduced as a very general framework for updating prior beliefs using 
a generalized ``likelihood'' \citep{jiang2008gibbs,zhang2006information,li2014general,bissiri2013general}.
%As mentioned in Section~\ref{section:method}, 
Under certain conditions, for $n$ sufficiently large, 
the c-posterior is approximately proportional to $\exp(-\alpha d(P_\theta,\Px))\Pi(d\theta)$, 
which can be viewed as a Gibbs posterior with ``risk'' $d(P_\theta,\Px)$.
In research involving Gibbs posteriors, an issue of current interest is how to choose $\alpha$
so that the concentration of the posterior is appropriately calibrated.
The fact that Gibbs posteriors can be interpreted as an approximation to a coherent Bayesian procedure (the c-posterior)
may provide insight into this calibration problem.

%Gibbs posteriors
%- Gives an interpretation of Gibbs posteriors as asymptotic approx of a coherent Bayesian procedure.
%--- Standard interpretation of Gibbs posteriors is as minimizing a loss (but I don't find it very compelling).
%- May facilitate calibration.
%- Using deFinetti's, I think any Gibbs ``likelihood'' that is symmetric can be viewed as the marginal likelihood obtained from integrating out distn on prob measures.

\iffalse
Statistical distances have long been used for robust inference using minimum distance estimation (MDE)
\citep{donoho1988automatic,parr1980minimum,basu2011statistical}, %wolfowitz?  %william1981minimum
with a particular emphasis on Hellinger distance, due to its combined robustness and statistical efficiency \citep{lindsay1994efficiency}.
This has recently been extended to the Bayesian setting via Gibbs posteriors \citep{hooker2014bayesian,ghosh2014robust}.
%and in particular, the $R^{(\alpha)}$-posterior of \citet{ghosh2014robust} bears some resemblence to the power posterior we use.
%(although there are significant differences).
\fi

%Minimum distance estimation
%- Wolfowitz
%- Donoho
%- focus particularly on Hellinger distance, due to its efficiency (and robustness)
%- Recently extended to the Bayesian setting through Gibbs posteriors (cite paper using Hellinger Gibbs posterior)
%--- looks very interesting, but computationally expensive!
%- However, MDE-based approaches have not seen much use in applications, since they tend to be computationally expensive (for instance, Hellinger-based methods require nonparametric density estimation as an intermediate step to compute the distance (I think)), and also it's not clear how to implement them in more complex hierarchical models.

A number of researchers have employed a form of power likelihood obtained by raising the likelihood to a power between 0 and 1.
Usually, this is done for reasons completely unrelated to robustness, such as
marginal likelihood approximation \citep{friel2008marginal},
improved MCMC mixing \citep{geyer1991markov},
consistency in nonparametric models \citep{walker2001bayesian,zhang2006eps},
discounting historical data \citep{ibrahim2000power}, or
objective Bayesian model selection \citep{ohagan1995fractional}. % objective Bayes factors
However, recently, the robustness properties of power likelihoods have started to be noticed:
\citet{grunwald2014inconsistency} provide an in-depth study of a simulation example in which a power posterior
exhibits improved robustness to misspecification, and they propose a method for choosing the power;
also see \citet{grunwald2011safe,grunwald2012safe}.
% Note to self: I took a look at grunwald2014inconsistency, and while it doesn't compete with us, it would be worth 
% taking a closer look to see if they have any useful insights.
Nonetheless, in all such previous research, a fixed power is used, rather than one tending to 0 as $n\to\infty$. % todo: check this
It seems that neither the form of power likelihood we use, nor the theoretical motivation for it, have appeared in any prior work.

Conceptually speaking, the existing methods that are perhaps most similar to the idea of the c-posterior are
goodness-of-fit tests that assess whether the data distribution is within a neighborhood of the model space
\citep{rudas1994new,goutis1998model,dette2003some,liu2009building}, 
however, the methods used previously are very different from ours. %, and tend to be focused on particular domains of application.
%the tubular technique of \citet{liu2009building}, in which they employ a hypothesis test for whether 
%the true distribution is within a KL neighborhood of the model space;
%however, their method is very different from ours, and is restricted to the multinomial setting.
Closely related to such work is the model credibility index of \citet{lindsay2009model}, a concept which has
heavily influenced our thinking in the development of the c-posterior.

\section{Theory}
\label{section:theory}
% \input{theory}

% THEORY

% Random nbd
% - Asymptotics

% Fixed nbd (future)
    % For fixed nbd, asymptotics within nbd:
    % - General robustness property ("automatic robustness", in the terminology of Donoho paper) for fixed eps:
    % --- revert to the prior on the eps nbd of P_obs
    % For fixed nbd, asymptotics outside nbd:
    % - Sanov's
    % - Chernoff point, geodesics, etc.
    
% - Robustness
    % - Note that the volume of higher-dimensional spaces induces a penalty---the same principle is at work as in BIC.

% Equivalent representations/perspectives
    % Not quite a contamination model, due to the normalization constant (connection with Rich Wilkinson's paper).
    % - Also, nbds may have infinite mass, in which case the Wilkinson interpretation breaks down.
    % Actually (approximately) equivalent to using a certain nonparametric model with a different prior.

% Consistency for groups (future)
%   On grouped data, consistency when using KL, as the number of groups goes to infinity.
   
% Frequentist coverage?    
% (future) Is there some kind of frequentist guarantee in terms of coverage?

In this section, we establish the asymptotic form of c-posteriors (Section~\ref{section:asymptotic})
and their robustness properties (Section~\ref{section:robustness}).
%, and an equivalent representation in terms of a certain nonparametric model (Section~\ref{section:equivalent}).
%, and consistency properties in the case of multiple groups of data.
Let $\X$ and $\Theta$ be standard Borel spaces,
and let $\M$ denote the space of probability measures on $\X$,
equipped with the weak topology. Let $\{P_\theta:\theta\in\Theta\}\subseteq\M$ be a family of probability
measures on $\X$ such that $\theta\mapsto P_\theta(A)$ is measurable for all measurable subsets $A\subseteq\X$.
Let $\Pi$ be a prior measure on $\Theta$, and consider the following model:
\begin{align*}
    & \btheta \sim \Pi, \\
    & X_1,\ldots,X_n|\btheta \iid P_\bbtheta, \mbox{ and} \\
    & R \sim H, \mbox{ independently of } \btheta,X_{1:n},
\end{align*}
where $H$ is a distribution on $[0,\infty)$.  Note that we use (bold) $\btheta$ for the random variable, and $\theta$ for particular values. 
Define % the survival function
    $$G(r) = \Pr(R > r).$$
Now, suppose the observed data $x_1,\ldots,x_n\in\X$ behave like i.i.d.\ samples from some $P_o\in\M$. 
%note, however, that we do not put a distribution on $x_{1:n}$ in the model. %---it is taken to be deterministic.
%Let $P_o\in\M$; for interpretation, $P_o$ will typically be the weak limit of the empirical distribution of $x_{1:n}$, but this is not assumed.
%let $\hat P_{x_{1:n}}$ be the empirical distribution of $x_{1:n}=(x_1,\ldots,x_n)$,
%and assume $\hat P_{x_{1:n}} \Longrightarrow P_o$ 
%as $n\to\infty$, for some $P_o\in\M$.
Let $d:\M\times\M\to[0,\infty]$, and for $n\in\{1,2,\ldots\}$, let $d_n:\X^n\times\X^n\to[0,\infty]$.
It is assumed that $\theta \mapsto d(P_\theta,P)$ is measurable for all $P\in\M$, and $d_n(\cdot,\cdot)$ is measurable for each $n$.
% Note to self: See my notes for proof of the first part of this previous sentence under certain conditions.

\subsection{Asymptotic form of the c-posterior}
\label{section:asymptotic}

The c-posterior takes a simple form as $n\to\infty$, under mild regularity conditions.
We prove a general convergence theorem for c-posteriors (Theorem \ref{theorem:asymptotic}) and 
then apply it to c-posteriors derived from relative entropy (Corollary \ref{corollary:relative-entropy-asymptotic})
and weakly-continuous distances (Corollary \ref{corollary:weakly-cts-asymptotic}).

%For instance, when $R\sim\Exponential(\alpha)$, and $d(\cdot,\cdot)$
%is continuous in the weak topology, the asympotic c-posterior is proportional to 
%$$\exp(-\alpha d(P_\theta,P_o))\Pi(d\theta).$$
%Theorem \ref{theorem:asymptotic} establishes a similar form under more general conditions.

\subsubsection{Convergence theorem}

The following basic lemma captures the underlying principle at work in establishing both the asymptotic form of the c-posterior
(Theorem \ref{theorem:asymptotic}) as well as its robustness (Theorem \ref{theorem:continuity}).
%its proof is a straightforward application of the dominated convergence theorem.  

\begin{lemma}
    \label{lemma:limit}
    If $U,U_n,V,W\in\R\cup\{\infty\}$ are random variables such that $U_n \xrightarrow[n\to\infty]{\mathrm{a.s.}} U$,
    $\Pr(U=V)=0$, $\Pr(U<V)>0$, and $\E|W|<\infty$, then $$\E(W\mid U_n<V) \xrightarrow[n\to\infty]{} \E(W\mid U<V).$$
\end{lemma}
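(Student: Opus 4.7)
The plan is to reduce the conditional expectation to a ratio and attack the numerator and denominator separately by dominated convergence, using the hypothesis $\Pr(U=V)=0$ as the continuity point needed to push the almost-sure convergence $U_n\to U$ through the indicator $\I(\cdot<V)$.

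First I would write
\[
    \E(W\mid U_n<V) \,=\, \frac{\E\big(W\,\I(U_n<V)\big)}{\Pr(U_n<V)},
\]
which is well-defined for all $n$ large enough once I show the denominator is eventually positive. The key observation is that on the event $\{U\ne V\}$, whose complement has probability zero by hypothesis, either $U<V$ (in which case $U_n<V$ for all sufficiently large $n$, by $U_n\to U$ a.s., taking into account that $V$ may equal $+\infty$) or $U>V$ (in which case $U_n>V$ eventually, again allowing $V=-\infty$ is not needed since $U_n,U\in\R\cup\{\infty\}$). Either way,
\[
    \I(U_n<V) \xrightarrow[n\to\infty]{\mathrm{a.s.}} \I(U<V).
\]
Bounded convergence then gives $\Pr(U_n<V)\to\Pr(U<V)$, which is strictly positive by hypothesis, so the denominator behaves as desired and is bounded away from zero for large $n$.

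For the numerator, the same a.s.\ convergence yields
\[
    W\,\I(U_n<V) \xrightarrow[n\to\infty]{\mathrm{a.s.}} W\,\I(U<V),
\]
and the dominating bound $|W\,\I(U_n<V)|\le|W|$ together with $\E|W|<\infty$ lets me apply the dominated convergence theorem to conclude $\E(W\,\I(U_n<V))\to\E(W\,\I(U<V))$. Combining the two limits gives
\[
    \E(W\mid U_n<V) \longrightarrow \frac{\E(W\,\I(U<V))}{\Pr(U<V)} = \E(W\mid U<V),
\]
which is the desired conclusion.

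The only point that requires real care is the a.s.\ convergence of the indicators, where the hypothesis $\Pr(U=V)=0$ is essential: without it, sample paths with $U_n$ oscillating across the level $V=U$ would prevent $\I(U_n<V)$ from settling. Everything else is a routine application of bounded and dominated convergence, and the possibility that $U$, $U_n$, $V$, or $W$ take the value $\infty$ is handled without extra work since the indicators are bounded and $W$ is integrable.
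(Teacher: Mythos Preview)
Your proof is correct and follows essentially the same route as the paper's own argument: write the conditional expectation as a ratio, use $\Pr(U=V)=0$ to get $\I(U_n<V)\to\I(U<V)$ a.s., then apply bounded/dominated convergence to the denominator and numerator separately. The only difference is that you spell out in more detail why the indicators converge pointwise on $\{U\ne V\}$, which the paper leaves implicit.
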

All proofs for this section have been placed in Section~\ref{section:theory-proofs}.
The following condition is necessary to avoid certain pathologies; it is always satisfied, for instance, when
$d(P_\bbtheta,P_o) < \infty$ with positive probability
and $R$ has a density with respect to Lebesgue measure that is positive on $[0,\infty)$.

\begin{condition}
\label{condition:R}
Assume $\Pr(d(P_\bbtheta,P_o) = R) = 0$ and $\Pr(d(P_\bbtheta,P_o) < R) > 0$.
\end{condition}

We use $\Rightarrow$ to denote convergence with respect to the weak topology.

\begin{theorem}
\label{theorem:asymptotic}
If $d_n(X_{1:n},x_{1:n})\xrightarrow[]{\mathrm{a.s.}} d(P_\bbtheta,P_o)$ as $n\to\infty$ and Condition \ref{condition:R} is satisfied, then
\begin{align}
    \label{equation:asymptotic-posterior}
    \Pi\big(d\theta\mid d_n(X_{1:n},x_{1:n}) < R\big)
    % \underset{n\to\infty}{\Longrightarrow}
    \xLongrightarrow[n\to\infty]{}
    \Pi\big(d\theta\mid d(P_\bbtheta,P_o) < R\big)
    %\underset{\theta}{\propto}
    \propto
    G\big(d(P_\theta,P_o)\big) \Pi(d\theta),
\end{align}
and in fact,
\begin{align}
    \label{equation:asymptotic-posterior-expectation}
    \E\big(h(\btheta)\mid d_n(X_{1:n},x_{1:n}) < R\big)
    \xrightarrow[n\to\infty]{}
    \E\big(h(\btheta)\mid d(P_\bbtheta,P_o) < R\big)
    = \frac{\E h(\btheta) G\big(d(P_\bbtheta,P_o)\big)}{\E G\big(d(P_\bbtheta,P_o)\big)}
\end{align}
for any $h\in L^1(\Pi)$, i.e., any measurable $h:\Theta\to\R$ such that $\int|h(\theta)|\Pi(d\theta) < \infty$.
\end{theorem}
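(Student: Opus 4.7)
The strategy is to reduce \eqref{equation:asymptotic-posterior-expectation} to a single application of Lemma~\ref{lemma:limit} with the choices $U_n = d_n(X_{1:n},x_{1:n})$, $U = d(P_\bbtheta, P_o)$, $V = R$, and $W = h(\btheta)$, where $h \in L^1(\Pi)$ is arbitrary. All four random variables live on the joint probability space carrying $(\btheta, X_{1:n}, R)$. The lemma's hypotheses check out immediately: $U_n \to U$ almost surely is exactly the assumption of the theorem; $\Pr(U=V)=0$ and $\Pr(U<V)>0$ are Condition~\ref{condition:R}; and since $\btheta$ has marginal law $\Pi$, $\E|W| = \int|h|\,d\Pi < \infty$. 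Invoking Lemma~\ref{lemma:limit} then yields
$$
\E\bigl(h(\btheta)\mid d_n(X_{1:n},x_{1:n}) < R\bigr) \xrightarrow[n\to\infty]{} \E\bigl(h(\btheta)\mid d(P_\bbtheta,P_o) < R\bigr),
$$
which is the first equality in \eqref{equation:asymptotic-posterior-expectation}.

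To obtain the closed-form expression on the right of \eqref{equation:asymptotic-posterior-expectation}, I would use the independence of $R$ from $\btheta$. Conditioning on $\btheta$,
$$
\Pr\bigl(d(P_\bbtheta, P_o) < R \,\big|\, \btheta\bigr) = G\bigl(d(P_\bbtheta, P_o)\bigr),
$$
since $G$ is the survival function of $R$. Pairing with $h(\btheta)$ via the tower property gives
$$
\E\bigl(h(\btheta)\I(d(P_\bbtheta,P_o) < R)\bigr) = \E\bigl(h(\btheta)\,G(d(P_\bbtheta,P_o))\bigr), \qquad \Pr\bigl(d(P_\bbtheta,P_o) < R\bigr) = \E\,G(d(P_\bbtheta,P_o)),
$$
and the ratio is the second equality of \eqref{equation:asymptotic-posterior-expectation}; the denominator is strictly positive by Condition~\ref{condition:R}. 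Specializing to $h = \I_A$ for measurable $A \subseteq \Theta$ identifies the limiting conditional distribution of $\btheta$ as the measure proportional to $G(d(P_\theta, P_o))\,\Pi(d\theta)$, which is the identity in \eqref{equation:asymptotic-posterior}.

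Finally, the weak-convergence assertion in \eqref{equation:asymptotic-posterior} falls out by restricting $h$ to the class $C_b(\Theta)$ of bounded continuous functions on $\Theta$: each such $h$ is automatically in $L^1(\Pi)$ (since $\Pi$ is a probability measure), so the expectation convergence above applies to every $h \in C_b(\Theta)$, which is the defining criterion for weak convergence of the corresponding conditional distributions on the standard Borel space $\Theta$.

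I do not anticipate a serious obstacle: Lemma~\ref{lemma:limit} does the heavy lifting via a standard bounded-convergence-with-conditioning argument, and everything after that is the Fubini/tower-property computation that identifies the limit. The only subtleties worth flagging are that the argument is carried out pointwise in $h$, so no uniform integrability across $h$ is required, and that values of $\theta$ with $d(P_\theta, P_o) = \infty$ cause no trouble since $G(\infty) = 0$ and they automatically receive zero mass in the limiting posterior.
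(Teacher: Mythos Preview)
Your proposal is correct and follows essentially the same approach as the paper: both apply Lemma~\ref{lemma:limit} with $U_n = d_n(X_{1:n},x_{1:n})$, $U = d(P_\bbtheta,P_o)$, $V = R$, $W = h(\btheta)$, verify its hypotheses via the theorem's assumptions and Condition~\ref{condition:R}, then use independence of $R$ from $\btheta$ and the tower property to obtain the closed-form limit, deducing weak convergence by restricting to bounded continuous $h$. Your version conditions on $\btheta$ rather than on $(W,U)$ as the paper does, but this is a cosmetic difference.
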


%There are many possible choices for the distribution of $R$ and the corresponding $G$.
As noted earlier, a case of particular interest arises when $R\sim\Exponential(\alpha)$,
since then $G(r)=e^{-\alpha r}$ and the resulting asymptotic c-posterior is proportional to $\exp(-\alpha d(P_\theta,P_o))\Pi(d\theta)$,
by Theorem \ref{theorem:asymptotic}.
This is asymptotically equivalent to $\exp(-\alpha d(P_\theta,\Px))\Pi(d\theta)$,
provided that $d(P_\bbtheta,\Px)\xrightarrow[]{\mathrm{a.s.}} d(P_\bbtheta,P_o)$,
and as discussed in Section~\ref{section:connections}, this is precisely the form of a Gibbs posterior;
thus, a Gibbs posterior can be interpreted as a large-sample approximation to a c-posterior.
It is also worth noting that if $R = r_0$ with probability $1$ for some $r_0>0$, then $G(r) = \I(r<r_0)$,
and by Theorem \ref{theorem:asymptotic} the asymptotic c-posterior is proportional to $\I(d(P_\theta,P_o)<r_0)\Pi(d\theta)$,
i.e., it is zero outside the $r_0$ neighborhood of $P_o$ and reverts to the prior inside.

\subsubsection{Application to relative entropy}
\label{section:relative-entropy-asymptotic}

Suppose $P_o$ has density $p_o$ and $P_\theta$ has density $p_\theta$ for each $\theta\in\Theta$.

\begin{corollary}
\label{corollary:relative-entropy-asymptotic}
    Suppose $d_n(X_{1:n},x_{1:n})$ is an almost-surely consistent estimator of $D(p_o\|p_\bbtheta)$,
    i.e., $d_n(X_{1:n},x_{1:n}) \overset{\mathrm{a.s.}}{\longrightarrow} D(p_o\|p_\bbtheta)$.
    If $d(P_\theta,P_o) = D(p_o\|p_\theta)$ and Condition \ref{condition:R} is satisfied, then
    Equations \ref{equation:asymptotic-posterior} and \ref{equation:asymptotic-posterior-expectation} hold.
\end{corollary}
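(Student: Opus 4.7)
The plan is to deduce Corollary \ref{corollary:relative-entropy-asymptotic} as a direct instance of Theorem \ref{theorem:asymptotic}. The theorem requires two inputs: the almost-sure convergence $d_n(X_{1:n},x_{1:n}) \xrightarrow{\mathrm{a.s.}} d(P_\bbtheta, P_o)$, and Condition \ref{condition:R}. The second is assumed outright in the corollary, so only the first requires a (very short) verification.

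For the convergence, the identity $d(P_\theta, P_o) = D(p_o \| p_\theta)$ is posited to hold as an equality of measurable functions on $\Theta$. Substituting the random parameter $\bbtheta \sim \Pi$ yields the random-variable identity $d(P_\bbtheta, P_o) = D(p_o \| p_\bbtheta)$. Combined with the corollary's standing hypothesis $d_n(X_{1:n},x_{1:n}) \xrightarrow{\mathrm{a.s.}} D(p_o \| p_\bbtheta)$, this gives exactly the a.s.\ convergence required by Theorem \ref{theorem:asymptotic}. Applying the theorem then delivers Equations \ref{equation:asymptotic-posterior} and \ref{equation:asymptotic-posterior-expectation}, and the proof is complete.

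I expect no substantive obstacle: the work of the corollary is really done already in Theorem \ref{theorem:asymptotic}, and the corollary merely re-expresses the hypothesis in the convenient form of relative-entropy consistency. The only delicate point worth mentioning is the measurability requirement on $\theta \mapsto d(P_\theta, P_o) = D(p_o \| p_\theta)$ that is tacitly assumed at the beginning of Section \ref{section:theory}; since the assumed identification passes measurability from $d(\cdot, P_o)$ to the relative entropy functional, this is automatic.
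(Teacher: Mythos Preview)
Your proposal is correct and matches the paper's approach: the corollary is an immediate specialization of Theorem~\ref{theorem:asymptotic} with $d(P_\theta,P_o)=D(p_o\|p_\theta)$, and indeed the paper does not even write out a separate proof for it. Your remark on measurability is a harmless elaboration; otherwise there is nothing to add.
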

This establishes the asymptotic form of the relative entropy c-posterior as claimed in
Equation \ref{equation:asyptotic-relative-entropy-posterior}.

% future: Discuss existence of a.s.\ consistent estimators of relative entropy? See my notes.

\subsubsection{Application to weakly-continuous distances}
\label{section:weakly-cts-asymptotic}

Recall that $\hat P_{x_{1:n}} = \frac{1}{n}\sum_{i=1}^n \delta_{x_i}$ denotes the empirical distribution of $x_{1:n}$.
% Note to self: The following theorem applies when $d(\cdot,\cdot)$ is continuous with respect to the product topology on $\M\times\M$
% (recalling that each $\M$ is given the weak topology). This is because if P_n->P and Q_n->Q then (P_n,Q_n)->(P,Q), by Munkres, Exercise 19.6.

\begin{corollary}
\label{corollary:weakly-cts-asymptotic}
    Suppose $d:\M\times\M\to[0,\infty]$ has the property that
    $d(P_n,Q_n)\to d(P,Q)$ whenever $P_n\Rightarrow P$ and $Q_n\Rightarrow Q$.
    If Condition \ref{condition:R} is satisfied, then
    Equations \ref{equation:asymptotic-posterior} and \ref{equation:asymptotic-posterior-expectation} hold
    when $d_n(X_{1:n},x_{1:n}) = d(\hat P_{X_{1:n}},\hat P_{x_{1:n}})$.
\end{corollary}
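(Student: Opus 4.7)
The plan is to reduce to Theorem \ref{theorem:asymptotic} by verifying its non-trivial hypothesis, namely that $d_n(X_{1:n},x_{1:n}) = d(\PX,\Px)$ converges almost surely to $d(P_\bbtheta,P_o)$. Given the weak-continuity assumption on $d$, it suffices to establish the two weak-convergence statements $\PX \Rightarrow P_\bbtheta$ a.s.\ and $\Px \Rightarrow P_o$ a.s.; the weak-continuity hypothesis applied jointly then delivers convergence of $d$.

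Both weak convergences are instances of Varadarajan's theorem, i.e., the empirical-measure law of large numbers: on any Polish space---in particular on the standard Borel space $\X$ equipped with a compatible Polish topology---the empirical distribution of an i.i.d.\ sample converges weakly to the common law, with probability one. For $\Px$ this applies directly, since $x_1,x_2,\ldots$ behave like i.i.d.\ samples from $P_o$. For $\PX$, I apply Varadarajan's theorem conditionally on $\bbtheta$---using that $X_1,\ldots,X_n \mid \bbtheta \iid P_\bbtheta$---to obtain that for $\Pi$-almost every $\theta$, $\PX \Rightarrow P_\theta$ almost surely under the conditional law. Since $\M$ with the weak topology is itself Polish (e.g., under the L\'evy--Prokhorov metric), the event $\{\PX \Rightarrow P_\bbtheta\}$ is jointly measurable in $(\bbtheta, X_{1:n})$, and a Fubini/tower argument transfers the conditional a.s.\ statement to the unconditional a.s.\ statement $\PX \Rightarrow P_\bbtheta$.

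Combining these two convergences with the weak-continuity hypothesis on $d$ gives $d(\PX,\Px) \xrightarrow[n\to\infty]{\mathrm{a.s.}} d(P_\bbtheta,P_o)$. Together with Condition \ref{condition:R}, this is exactly the input required by Theorem \ref{theorem:asymptotic}, which then delivers both Equations \ref{equation:asymptotic-posterior} and \ref{equation:asymptotic-posterior-expectation}. There is no serious obstacle; the only subtlety worth flagging is the Fubini/measurability step used to pass from conditional to unconditional almost-sure weak convergence of $\PX$, and this is routine given the Polish structure of $\M$.
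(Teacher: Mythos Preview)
Your proposal is correct and follows essentially the same approach as the paper: both invoke the empirical-measure law of large numbers (Varadarajan's theorem, cited in the paper as \citet[Theorem 11.4.1]{dudley2002real}) to obtain $\PX \Rightarrow P_\bbtheta$ a.s.\ and $\Px \Rightarrow P_o$, then apply the joint weak-continuity of $d$ and invoke Theorem~\ref{theorem:asymptotic}. You are a bit more explicit than the paper about the Fubini/measurability step needed to pass from conditional to unconditional almost-sure convergence of $\PX$, but this is a refinement of the same argument rather than a different route.
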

% future: It can be shown that $\theta\mapsto d(P_\theta,P_o)$ and $d_n$ are measurable (I think) --- see my notes.

% future: The preceding corollary applies to Prohorov?, others???

\subsection{Robustness properties} %of c-posteriors}
\label{section:robustness}

Here, we show that when $n$ is large, the standard posterior can be strongly affected by small changes
to the observed data distribution $P_o$, particularly when performing model inference (Section~\ref{section:sensitivity-model}),
while c-posteriors are robust to small changes in $P_o$ (Section~\ref{section:continuity}).
%inherit the continuity properties of $d(\cdot,\cdot)$, and thus, 
To see roughly why the standard posterior is not robust, note that
\begin{align*}
    \Pi(d\theta\mid X_{1:n}=x_{1:n})&\propto \exp\Big(\sum_{i = 1}^n \log p_\theta(x_i)\Big)\Pi(d\theta)
\approx \exp\big(n {\textstyle\int} p_o \log p_\theta\big)\Pi(d\theta) \\
&\propto \exp(-n D(p_o \| p_\theta))\Pi(d\theta),
\end{align*}
assuming the densities $p_\theta$ and $p_o$ exist.
% Note to self: This approximation can be justified by considering the variance of a sample mean, or the law of the iterated logarithm.
% Thus, the approx is good to "first order in the exponent" (in the terminology of Cover & Thomas).
Due to the $n$ in the exponent, even a slight change to $p_o$ can dramatically change the posterior.
On the other hand, by comparison, the relative entropy c-posterior with
$R\sim\Exponential(\alpha)$ is asymptotically proportional to $\exp(-\alpha D(p_o \| p_\theta))\Pi(d\theta)$,
suggesting that the c-posterior should remain stable in the limit as $n\to\infty$;
below we make this precise, for a general choice of $d(\cdot,\cdot)$.

%We first observe why the standard posterior is not robust for model inference, then we show that the c-posterior inherits the continuity properties of $d$, and that consequently it enables inferences that are robust to $P_o$, including robust model inference.

\subsubsection{Lack of robustness of model inference under the standard posterior}
\label{section:sensitivity-model}
%Suppose we have models $\M_1,\M_2,\ldots$ with parameter spaces $\Theta_1,\Theta_2,\ldots$, respectively, 

% Note to self: We use the notation $\M_k = \{P_\theta : \theta\in\Theta_k\}$.
Suppose that for each $k$ in some countable index set,
we have a model $\M_k = \{P_\theta : \theta\in\Theta_k\}$, where $\Theta_k$ is a $t_k$-dimensional Euclidean space. 
Let $\pi(k)$ be a prior on the model index $k$, and for each $k$, let $\pi_k$ be a probability
density with respect to Lebesgue measure on $\Theta_k$; this induces a prior $\Pi$
%\Pi(d\theta) = \sum_k \pi(k)\pi_k(d\theta)$
on the disjoint union $\Theta = \bigcup_k \Theta_k$.

It is well-known that, under mild regularity conditions, the marginal likelihood
$p(x_{1:n}|k) = \int_{\Theta_k} p(x_{1:n}|\theta)\pi_k(\theta) d\theta$ has the asymptotic representation
$$ p(x_{1:n}|k)\sim \frac{p(x_{1:n}|\theta_k^n)\pi_k(\theta_k^*)}{|\det H(\theta_k^*; p_o)|^{1/2}}\Big(\frac{2\pi}{n}\Big)^{t_k/2}, $$
as $n\to\infty$, where $\theta_k^n =\argmax_{\theta\in\Theta_k} p(x_{1:n}|\theta)$ is the maximum likelihood estimator for model $k$,
$\theta_k^*=\argmin_{\theta\in\Theta_k} D(p_o\|p_\theta)$ is the minimal Kullback--Leibler (KL) point within model $k$,
and $H(\theta; p_o) = -\int p_o \big(\nabla_\theta^2 \log p_\theta \big)$.
% This is known as the Laplace approximation. 
Here, $a_n\sim b_n$ means $a_n/b_n\to 1$.
Letting $f_n(k) = -\frac{1}{n}\log p(x_{1:n}|\theta_k^n)$, this implies that
\begin{align}\label{equation:laplace-type}
p(x_{1:n}|k)\sim c_k e^{- n f_n(k)} n^{- t_k/2}
\end{align}
for a constant $c_k$ not depending on $n$ or $x_{1:n}$. Typically, $f_n(k)\to f(k) := D(p_o\|p_{\theta_k^*}) -\int p_o \log p_o$.
Note that $f(k')<f(k)$ if and only if model $k'$ is closer to $p_o$ than model $k$ in terms of minimal KL divergence;
also, note that the marginal likelihood automatically penalizes more complex models via the $n^{-t_k/2}$ factor.

Given such an asymptotic representation, it is easy to see that
for any $k$, if there exists $k'$ such that $f(k')<f(k)$, then $\pi(k|x_{1:n})\to 0$ as $n\to\infty$.
\iffalse
\begin{proof}
\begin{align*}
\frac{p(x_{1:n}|k)}{p(x_{1:n}|k')}\sim
\frac{c_k e^{- n f_n(k)} n^{- t_k/2}}{c_{k'} e^{- n f_n(k')} n^{- t_{k'}/2}}
=\frac{c_k}{c_{k'}} e^{- n (f_n(k) - f_n(k'))} n^{-(t_k - t_{k'})/2}\longrightarrow 0
\end{align*}
as $n\to\infty$, since $f_n(k) - f_n(k')\to f(k) - f(k')>0$.  Therefore,
$$\pi(k|x_{1:n}) =\frac{p(x_{1:n}|k)\pi(k)}{\sum_j p(x_{1:n}|j)\pi(j)}\leq\frac{p(x_{1:n}|k)\pi(k)}{p(x_{1:n}|k')\pi(k')}\longrightarrow 0.$$
% \qed
\end{proof}
\fi
%Meanwhile, if the number of models is infinite, and $\inf_k f(k)$ is not attained at any $k$,
%then the posterior diverges in the sense that $\pi(k|x_{1:n})\to 0$ for all $k$.
Consequently, even the slightest change to $p_o$ can result in major shifts in the posterior on $k$, when $n$ is large.
For instance, it often happens that the models are nested, e.g., $\M_1\subseteq\M_2\subseteq\cdots$ and $t_1<t_2<\cdots$.  
This is the case, for example, when $\M_k$ consists of $k$-component mixtures, or $k$th-order autoregressive models;
variable selection is slightly more complicated but ultimately similar.
If the collection of models is correctly specified with respect to $p_o$,
then there is some minimal $k'$ such that $D(p_o\|p_{\theta_{k'}^*})=0$,
and thus $\pi(k|x_{1:n})\to 0$ for all $k<k'$
(and typically, the posterior on $k$ will concentrate at this $k'$).
However, even the slightest perturbation to $p_o$ will usually result in either
(a) an increase in this minimal $k'$, %causing the posterior to no longer concentrate on the same model,
or (b) a situation where $\inf_k D(p_o\|p_{\theta_k^*})$ is not attained at any $k$, causing the posterior on $k$ to diverge,
in the sense that $\pi(k|x_{1:n})\to 0$ for all $k$.
Hence, model inference with the standard posterior is not robust.

\subsubsection{Robustness of the c-posterior}
\label{section:continuity}

The definition of robustness, roughly speaking,
is that small changes to the distribution of the data result in small changes to the resulting inferences.
This can be formalized by requiring that asymptotically, the outcome of an inference procedure
be continuous as a function of $P_o$, with respect to some topology (the weak topology being a standard choice) \citep{huber2004robust}.  
% Note to self: See Huber pp. 8-10, especially Eqn 3.4.
% Note to self: It would be nice if we could show the equicontinuous definition in Huber Section~2.6, **HOWEVER**,
%     this is not essential, since in practice, we will typically be interested in the difference between 
%     the c-posteriors for two data distns: P_{\theta_I} and P_o. The result below is enough to show that if P_o is close to P_{\theta_I},
%     then the c-posterior will be similar to what it would be if $P_o = P_{\theta_I}$.
From this perspective, the lack of robustness of the standard posterior can be thought of as a lack of continuity with respect to $P_o$,
asymptotically. 
%; for instance, in the model inference setting the posterior on $k$ becomes discontinuous in $P_o$ in the limit as $n\to\infty$.
% Note to self: In the discrete setting---such as for the posterior on k---weak convergence is equivalent to pointwise convergence.

We show in the following theorem that 
the asymptotic c-posterior inherits the continuity properties of whatever distance $d(\cdot,\cdot)$ is used to define it.
Consequently, the c-posterior will be robust to perturbations to $P_o$, provided that $d(\cdot,\cdot)$ is chosen appropriately.
In the terminology of Section~\ref{section:method},
if the observed data distribution $P_o$ is close to the ideal data distribution $P_{\theta_I}$, then the c-posterior 
will be close to what it would be if $P_o = P_{\theta_I}$.
%a small change to $P_o$ will result in a correspondingly small change to the c-posterior.  

To interpret the theorem, recall that on any metric space, a function $f(x)$ is continuous if and only if
$x_m\to x$ implies $f(x_m) \to f(x)$. % Note to self: This is by Munkres, Theorem 21.3.
Thus, to show continuity as a function of $P_o$ (in some topology), one must show that if $P_m \to P_o$,
then the resulting sequence of asymptotic c-posteriors converges as well. In fact, if $d(\cdot,\cdot)$ is continuous (in this same topology), 
then the theorem shows a bit more than that, since then $P_m \to P_o$ implies $d(P_\theta,P_m) \to d(P_\theta,P_o)$. 
% todo: check --- I don't recall but make sure this applies to the relative entropy case.

\begin{theorem}
\label{theorem:continuity}
If $P_1,P_2,\ldots\in\M$ such that $d(P_\theta,P_m)\xrightarrow[m\to\infty]{} d(P_\theta,P_o)$
for $\Pi$-almost all $\theta\in\Theta$,
and Condition \ref{condition:R} is satisfied, then for any $h\in L^1(\Pi)$,
$$ \E\big(h(\btheta)\mid d(P_\bbtheta,P_m) < R\big) \longrightarrow \E\big(h(\btheta)\mid d(P_\bbtheta,P_o) < R\big)$$
as $m\to\infty$, and in particular,
$\Pi\big(d\theta \mid d(P_\bbtheta,P_m) < R\big) \Longrightarrow \Pi\big(d\theta \mid d(P_\bbtheta,P_o) < R\big)$.
\end{theorem}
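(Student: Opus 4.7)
The plan is to reduce Theorem \ref{theorem:continuity} directly to the general principle in Lemma \ref{lemma:limit}, mirroring the strategy already used to establish Theorem \ref{theorem:asymptotic}. The role previously played by the sample size $n$ (through the convergence $d_n(X_{1:n},x_{1:n})\to d(P_\bbtheta,P_o)$) will here be played by the perturbation index $m$ (through the hypothesized convergence $d(P_\bbtheta,P_m)\to d(P_\bbtheta,P_o)$).

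First I would set up the relevant random variables on the underlying probability space (on which $\btheta\sim\Pi$ independently of $R\sim H$) by letting $U_m := d(P_\bbtheta,P_m)$, $U := d(P_\bbtheta,P_o)$, $V := R$, and $W := h(\btheta)$. These are all measurable: $U_m$ and $U$ by the standing assumption that $\theta\mapsto d(P_\theta,P)$ is measurable for every $P\in\M$, and $W$ by measurability of $h$.

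Next I would verify the four hypotheses of Lemma \ref{lemma:limit}. The assumption that $d(P_\theta,P_m)\to d(P_\theta,P_o)$ for $\Pi$-almost every $\theta$ translates, via $\btheta\sim\Pi$, into $U_m\to U$ almost surely. Condition \ref{condition:R} is precisely the pair $\Pr(U=V)=0$ and $\Pr(U<V)>0$. Finally, $\E|W|=\int|h(\theta)|\,\Pi(d\theta)<\infty$ since $h\in L^1(\Pi)$. Lemma \ref{lemma:limit} then yields
$$\E\bigl(h(\btheta)\mid d(P_\bbtheta,P_m)<R\bigr) \longrightarrow \E\bigl(h(\btheta)\mid d(P_\bbtheta,P_o)<R\bigr),$$
which is the first assertion. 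To derive the weak convergence of the c-posteriors, I would apply this expectation convergence to every bounded continuous $h:\Theta\to\R$; any such $h$ lies in $L^1(\Pi)$ because $\Pi$ is a probability measure, and convergence of $\int h\,d\Pi_m$ to $\int h\,d\Pi_\infty$ for all bounded continuous $h$ is exactly weak convergence of $\Pi_m$ to $\Pi_\infty$, where $\Pi_m$ and $\Pi_\infty$ denote the c-posteriors based on $P_m$ and $P_o$, respectively.

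I do not anticipate any substantive obstacle beyond the bookkeeping just outlined, since Lemma \ref{lemma:limit} does the analytic work. The one step worth flagging is the translation of the $\Pi$-almost-everywhere hypothesis into almost-sure convergence of the random variable $U_m$; this is immediate, as a $\Pi$-null set of $\theta$ corresponds to a $\Pr$-null event under $\btheta\sim\Pi$. Everything else is a direct specialization of the template already used in the proof of Theorem \ref{theorem:asymptotic}.
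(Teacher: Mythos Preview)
Your proposal is correct and follows exactly the paper's own approach: the paper's proof of Theorem~\ref{theorem:continuity} is the one-liner ``Apply Lemma~\ref{lemma:limit} with $U = d(P_\bbtheta,P_o)$, $U_m = d(P_\bbtheta,P_m)$, $V = R$, and $W = h(\btheta)$,'' and you have simply spelled out the verification of the lemma's hypotheses and the passage to weak convergence via bounded continuous $h$.
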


%The theorem establishes the continuity of posterior expectations under the asymptotic c-posterior; 
%in particular, this implies continuity of the posterior on model index $k$ in the model inference setting,

In particular, Theorem \ref{theorem:continuity} implies that the c-posterior is robust in the context of model inference, since taking
$h(\theta) = \I(\theta\in\Theta_k)$, we have
$$ \Pi\big(\Theta_k\mid d(P_\bbtheta,P_m) < R\big) \longrightarrow \Pi\big(\Theta_k\mid d(P_\bbtheta,P_o) < R\big)$$
as $m\to\infty$, under the assumptions of the theorem.

\section{Extensions}
\label{section:extensions}
% \input{ext}

% future: The derivations of the power posteriors here could be more compelling ---- it would be preferable to 
% obtain them directly, as in the i.i.d. case, rather than as a modification of the asymptotic form.

\subsection{Time-series c-posterior based on relative entropy rate}
\label{section:time-series}

Suppose the sequence of observed data $(x_1,\ldots,x_n)$ is a partial sample from a stationary and ergodic process
with distribution $P_o$,
and suppose the model $\{P_\theta:\theta\in\Theta\}$ consists of stationary finite-order Markov processes.
Assume that for some sigma-finite measure $\mu$ on $\X$, for all $n\in\{1,2,\ldots\}$ and all $\theta\in\Theta$,
the finite-dimensional distributions have densities $p_o(x_1,\ldots,x_n)$ and
$p_\theta(x_1,\ldots,x_n)$ with respect to the product measure $\mu^n$,
and assume $\E_{P_o}|\log p_o(X_{1:n})|<\infty$ and $\E_{P_o}|\log p_\theta(X_{1:n})|<\infty$.
%further, assume the finite-dimensional
%distributions of $P_o$ are absolutely continuous with respect to those of $P_\theta$, for any $\theta$.
% For notational simplicity, assume the finite-dimensional distributions have densities with respect to Lebesgue measure.  

A natural way of assessing the discrepancy between the processes $P_o$ and $P_\theta$
is by the relative entropy rate \citep{gray1990entropy}, % \citep{barron1985strong},
$$ \D(P_o\|P_\theta) = \lim_{n\to\infty} \frac{1}{n} D\big(p_o(x_{1:n})\|p_\theta(x_{1:n})\big). $$
%see Section~\ref{section:relative-entropy-rate} for details.
Suppose $d_n(X_{1:n},x_{1:n})$ is an a.s.-consistent estimator of $\D(P_o\|P_\theta)$
when $(X_1,X_2,\ldots)\sim P_\theta$ and $(x_1,x_2,\ldots)\sim P_o$, and consider the c-posterior
$\Pi\big(d\theta\mid d_n(X_{1:n},x_{1:n})<R\big)$, with $R\sim\Exponential(\alpha)$. 
Then by Lemma \ref{lemma:limit}, the asymptotic c-posterior is 
$$ \Pi(d\theta\mid \D(P_o\| P_\theta)<R)\propto \exp(-\alpha \D(P_o\|P_\theta))\Pi(d\theta). $$
If $P_\theta$ is $k$th-order Markov, then
\begin{align*}
%\label{equation:relative-entropy-rate}
\D(P_o\|P_\theta) = -\Hcal(P_o) - \E_{P_o}\log p_\theta(X_{k +1}|X_1,\ldots,X_k)
\end{align*}
where $\Hcal(P_o)$ is the entropy rate of $P_o$, which we assume is finite \citep[][Lemma 2.4.3]{gray1990entropy}.
Further, when $(x_1,x_2,\ldots)\sim P_o$,
$$ \frac{1}{n}\sum_{i = 1}^n\log p_\theta(x_i|x_1,\ldots,x_{i-1})
\xrightarrow[n\to\infty]{}  \E_{P_o}\log p_\theta(X_{k +1}|X_1,\ldots,X_k)$$
with probability 1, by the ergodic theorem \citep[][6.28]{Breiman_1968}.
% Note to self: Specifically, by Breiman 6.6 and 6.31, $(Y_1,Y_2,...)$ is stationary and ergodic, 
% where $Y_i = \log p_\theta(x_{i+k}|x_i,\ldots,x_{i+k-1})$, and by Breiman 6.28 (plus the comment directly
% after Definition 6.30), the sample averages converge to the mean.
Combining this with the small-sample correction (applied heuristically in this setting) suggests the approximation
\begin{align*}
\Pi\big(d\theta\mid d_n(X_{1:n},x_{1:n})<R\big) &\approxprop 
% \exp\big(-n\zeta_n\big[-\Hcal(P_o)-\frac{1}{n}\textstyle{\sum_{i=1}^n}
%     \log p_\theta(x_i|x_1,\ldots,x_{i-1})\big]\big)
\exp\Big(-n\zeta_n\Big[-\Hcal(P_o)-\frac{1}{n}\sum_{i=1}^n \log p_\theta(x_i|x_1,\ldots,x_{i-1})\Big]\Big) \Pi(d\theta) \\
&\propto \Pi(d\theta)\prod_{i = 1}^n p_\theta(x_i|x_1,\ldots,x_{i-1})^{\zeta_n}.
\end{align*}
Thus, as in the i.i.d.\ case, the end result is an approximation obtained by simply 
raising the likelihood to the power $\zeta_n$.
In Section~\ref{section:autoregressive}, we apply this to perform robust inference for the order of an autoregressive model.

%All of the nice properties of power posteriors also apply in the time-series setting.

%this provides a computationally and statistically efficient method of approximating the c-posterior

\subsection{Regression c-posterior based on conditional relative entropy}
\label{section:regression}

% Derivation of the conditional relative entropy c-posterior
In regression, one observes covariates/predictors $x_1,\ldots,x_n$ associated with target values $y_1,\ldots,y_n$,
and models the conditional distribution of $y$ given $x$. As in the i.i.d.\ setting, in order to allow for contamination/misspecification,
let us suppose that $Y_i|x_i$ is drawn from the model $p_\theta(y|x)$ for $i = 1,\ldots,n$,
and the observed values $y_{1:n}$ are a slightly corrupted version of $Y_{1:n}$,
in the sense that $d_n(Y_{1:n},y_{1:n}|x_{1:n})<R$ for some measure of discrepancy $d_n(\cdot,\cdot|\cdot)$.
Suppose $(x_1,y_1),\ldots,(x_n,y_n)$ behave like i.i.d.\ samples from some $p_o(x,y)$.
For notational clarity, let us assume that these densities on $x$ and $y$ are with respect to measures
that we will denote by $d x$ and $d y$, respectively. 

%Rather than explicitly designing $d_n$, it is simpler to first choose 
A natural choice of discrepancy between the conditional distributions $p_o(y|x)$ and $p_\theta(y|x)$ is the conditional relative entropy,
%$$D_\theta := D\big(p_o(y|x)\|p_\theta(y|x)\mid p_o(x)\big)
$$D_\theta := \int p_o(x,y)\log\frac{p_o(y|x)}{p_\theta(y|x)}\,d x\,d y,$$
and in turn, an a.s.-consistent estimator of this quantity is a sensible choice for $d_n(\cdot,\cdot|\cdot)$.
Then, by Lemma \ref{lemma:limit}, the resulting c-posterior converges to a nice asymptotic form:
\begin{align*}
\Pi\big(d\theta\mid d_n(Y_{1:n},y_{1:n}|x_{1:n})<R\big)
&\Longrightarrow \Pi\left(d\theta\mid D_\theta < R\right) \propto \exp(-\alpha D_\theta) \Pi(d\theta) \\
&\propto \exp\Big(\alpha \int p_o(x,y)\log p_\theta(y|x)\,d x\,d y\Big) \Pi(d\theta)
\end{align*}
if we take $R\sim\Exponential(\alpha)$ as usual. To obtain an approximation that is applicable for smaller $n$ as well,
we apply the same small-sample correction as before, replacing $\alpha$ by $n\zeta_n$.
Combining this with an empirical approximation to the integral suggests using
\begin{align*}
\Pi(d\theta\mid d_n(Y_{1:n},y_{1:n}|x_{1:n})<R)
&\approxprop \exp\big(\zeta_n \textstyle{\sum_i} \log p_\theta(y_i|x_i)\big) \Pi(d\theta) \\
&= \Pi(d\theta) \prod_{i = 1}^n p_\theta(y_i|x_i)^{\zeta_n}.
\end{align*}
Consequently, once again, we arrive at a power posterior approximation to the c-posterior, 
allowing us to bypass the computation of $d_n(\cdot,\cdot|\cdot)$.
% (which would normally involve estimating $p_o(y|x)$, and would thus be computational and statistically disadvantageous).
In Section~\ref{section:variable-selection}, we apply this to perform robust variable selection in linear regression.

\section{Applications}
\label{section:applications}
% \input{apps}

% APPLICATIONS

% We focus on model selection, in particular, since the effects of misspecification are often very apparent in this context.

\subsection{Autoregressive models of unknown order}
\label{section:autoregressive}

We illustrate using the c-posterior to perform inference for the order of an autoregressive model in a way that is robust,
not only to the form of the
distribution of noise/shocks, but also to misspecification of the structure of the model, such as time-varying noise.
This serves as a nice demonstration of how the robustified marginal likelihood can be computed in closed form when using 
conjugate priors, and provides some insight into why coarsening works.

Consider an $\mathrm{AR}(k)$ model, that is, a $k$th-order autoregressive model:
%$$ X_t|X_1,\ldots,X_{t-1} \,\sim\,\N\big(\textstyle\sum_{\ell=1}^k \theta_\ell X_{t-\ell},\,\sigma^2\big) $$
$$ X_t = \sum_{\ell=1}^k \theta_\ell X_{t-\ell} + \epsilon_t $$
for $t = 1,\ldots,n$, where $\epsilon_1,\ldots,\epsilon_n\iid\N(0,\sigma^2)$ and $X_t = 0$ for $t\leq 0$ by convention.
Let $\pi(k)$ be a prior on the order $k$, let $\theta_1,\ldots,\theta_k|k \iid\N(0,\sigma_0^2)$,
and for simplicity, assume $\sigma^2$ is known.

We apply the time-series c-posterior % based on relative entropy rate neighborhoods, as
developed in Section~\ref{section:time-series}
to obtain robustness to perturbations that are small in the sense of relative entropy rate.
%As described there, this c-posterior can be approximated by simply raising the likelihood to the power $\zeta_n = 1/(1 + n/\alpha)$. 
Since $\beta|k$ has been given a conjugate prior, then as described in Section~\ref{section:conjugate-priors}, 
we can analytically compute the resulting marginal power likelihood,
\begin{align*}
p_c(x_{1:n}|k) & := \int_{\R^k} p(x_{1:n}|\theta,k)^{\zeta_n} \pi(\theta|k) d\theta \\
& = \int_{\R^k} \Big(\prod_{t=1}^n \N\big(x_t \,\big\vert\, {\textstyle\sum_{\ell=1}^k} \theta_\ell x_{t-\ell},\,\sigma^2\big)\Big)^{\zeta_n}
    \N(\theta \mid 0,\sigma_0^2 I_{k\times k}) d\theta \\
    %\Big(\prod_{\ell=1}^k \N(\theta_\ell | 0,\sigma_0^2)\Big) d\theta_1\cdots d\theta_k \\
& = \frac{\exp(\tfrac{1}{2}\zeta_n^2 v^\T \Lambda^{-1} v)}{\sigma_0^k |\Lambda|^{1/2}} \N(x_{1:n}\mid0,\sigma^2 I_{n\times n})^{\zeta_n}
\end{align*}
where $\Lambda = \zeta_n M + \sigma_0^{-2} I_{k\times k}$, 
$M_{i j} = \sum_{t = 1}^n x_{t-i} x_{t-j} / \sigma^2$, and $v_i = \sum_{t = 1}^n x_t x_{t-i} / \sigma^2$, by straightforward calculation.
This, in turn, can be used to compute a robustified posterior on the model order $k$, defined as
$\pi_c(k|x_{1:n}) \propto p_c(x_{1:n}|k) \pi(k)$.
This is expected to be robust to departures from the $\mathrm{AR}(k)$ model that require more than $\alpha$ samples to distinguish,
and thus, it will favor values of $k$ that are consistent with the data to within this specified tolerance.

\begin{figure}
  \centering
  \includegraphics[trim=.5cm 1cm 1.6cm 0, clip, width=0.49\textwidth]{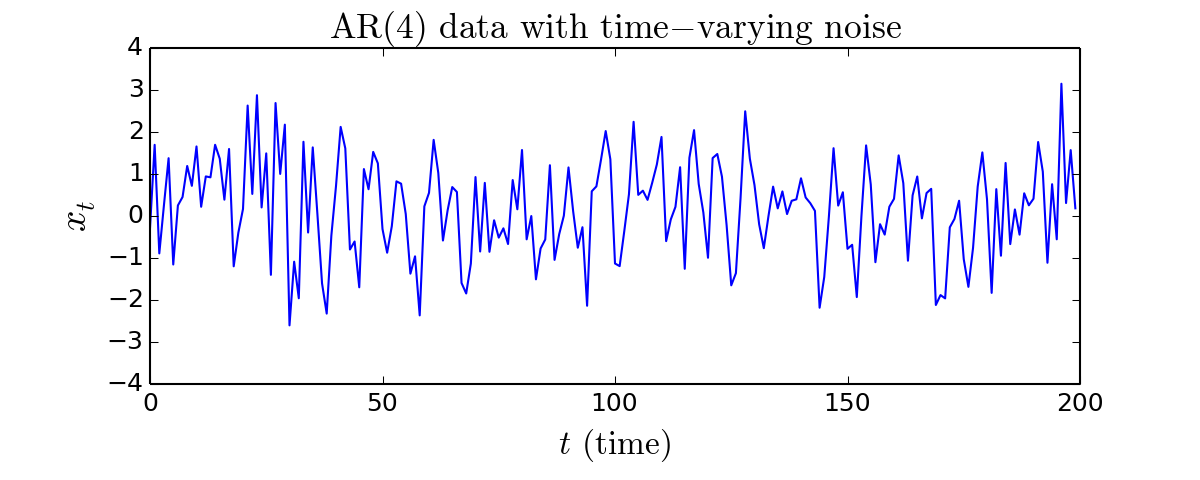}
  \includegraphics[trim=.5cm 1cm 1.6cm 0, clip, width=0.49\textwidth]{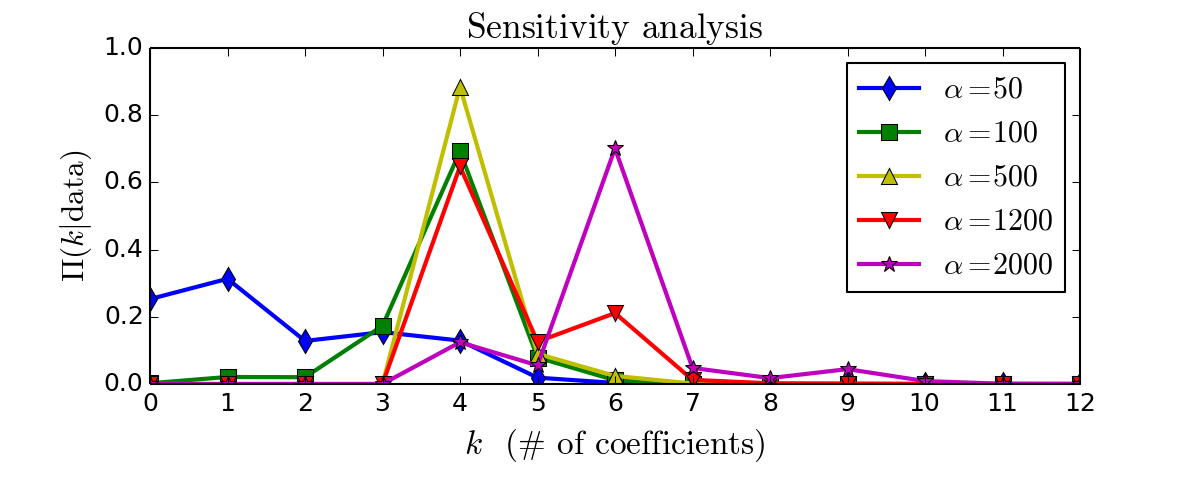}
  \includegraphics[trim=.5cm 1cm 1.75cm 0, clip, width=0.49\textwidth]{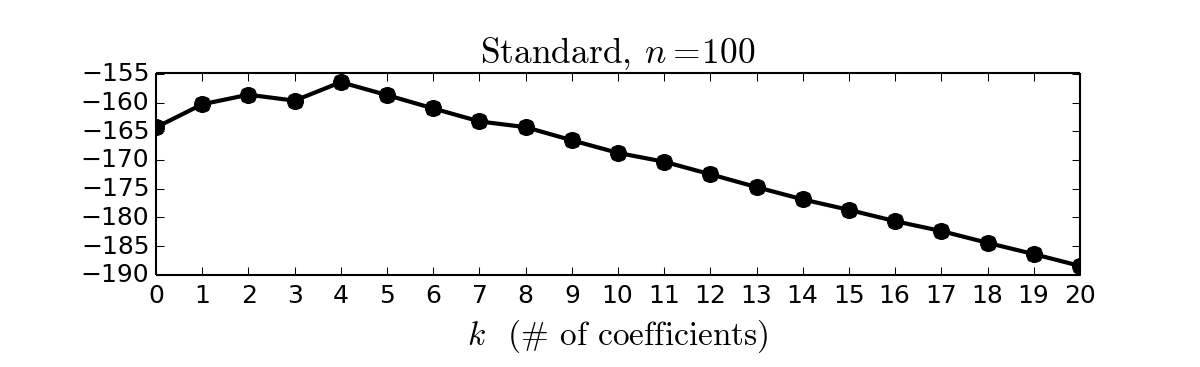}
  \includegraphics[trim=.5cm 1cm 1.75cm 0, clip, width=0.49\textwidth]{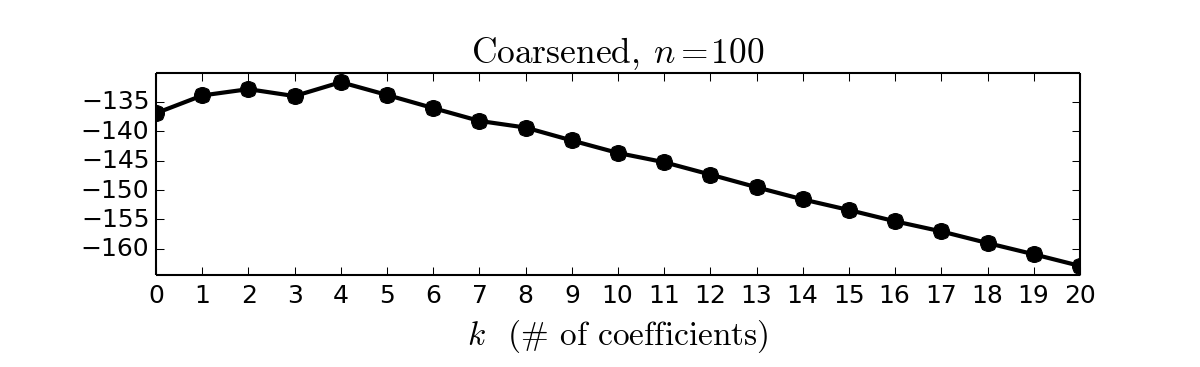}
  \includegraphics[trim=.5cm 1cm 1.75cm 0, clip, width=0.49\textwidth]{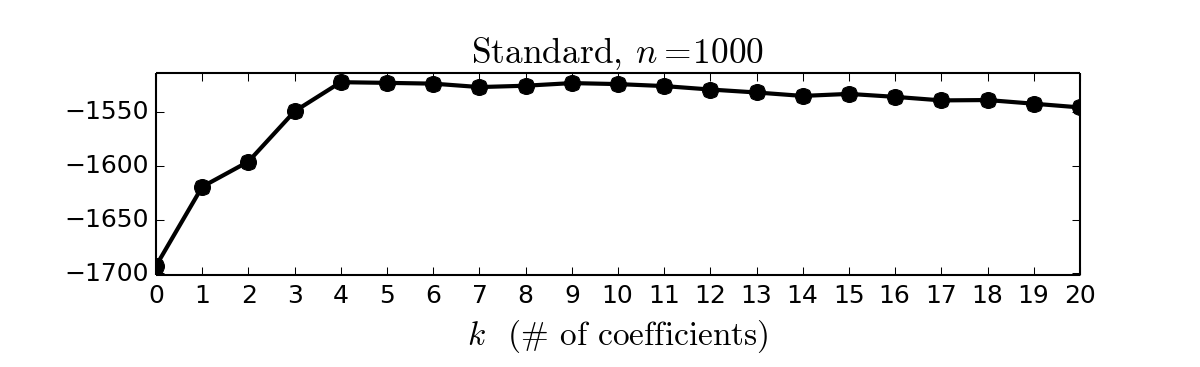}
  \includegraphics[trim=.5cm 1cm 1.75cm 0, clip, width=0.49\textwidth]{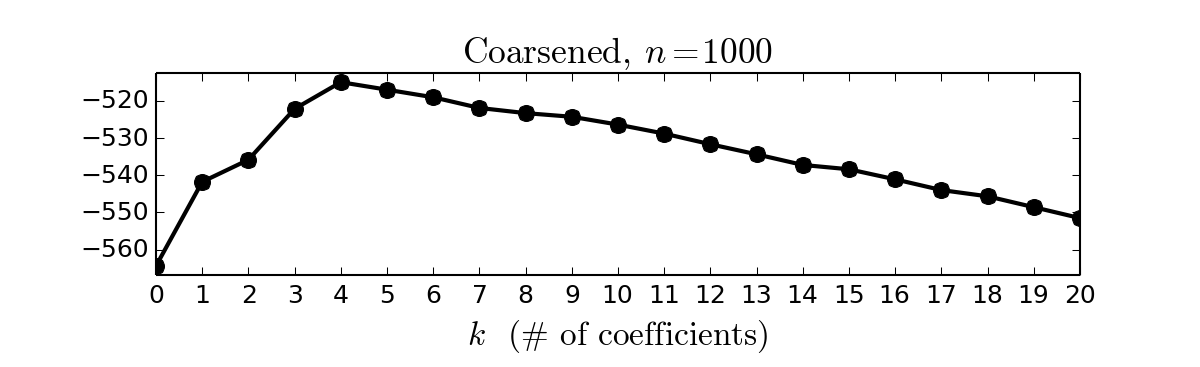}
  \includegraphics[trim=.5cm 0cm 1.75cm 0, clip, width=0.49\textwidth]{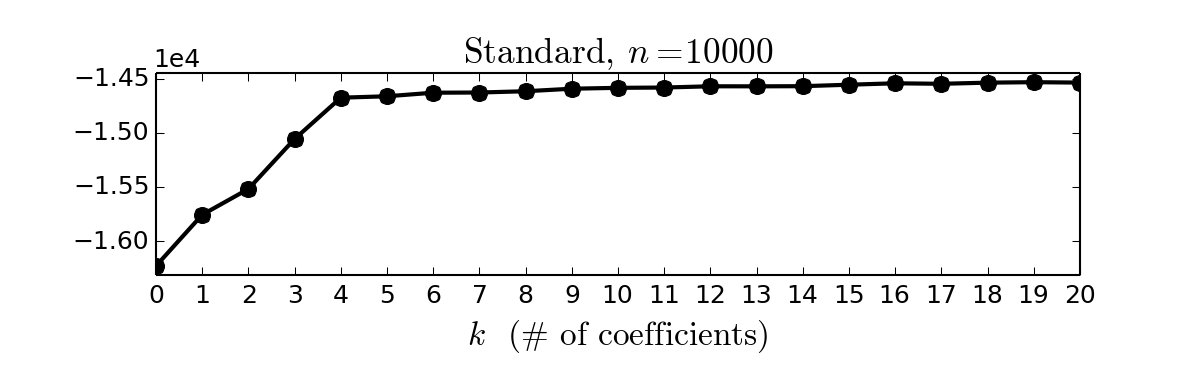}
  \includegraphics[trim=.5cm 0cm 1.75cm 0, clip, width=0.49\textwidth]{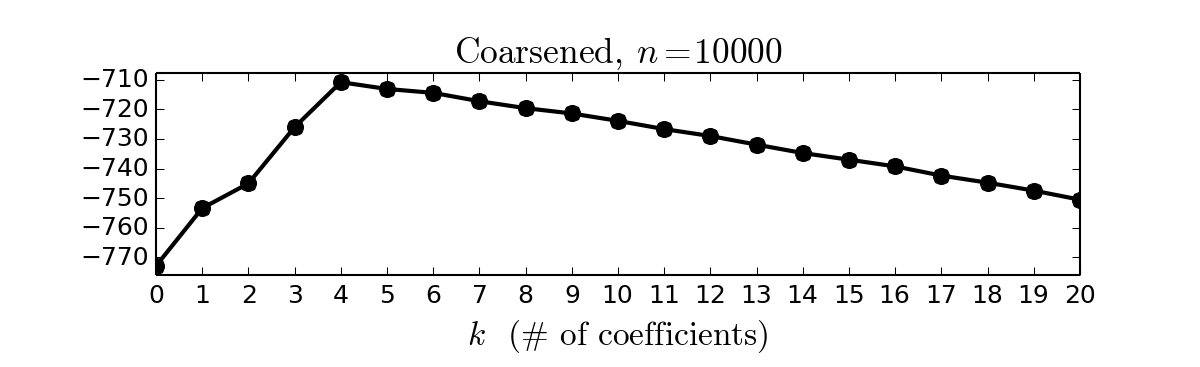}
  \caption{Autoregression example. 
  Upper left: Data sampled from the process in Equation \ref{equation:time-misspec}.
  Upper right: Sensitivity analysis, displaying the c-posterior on $k$ as $\alpha$ varies, when $n = 10^4$.
  Lower left: Log marginal likelihood of $\mathrm{AR}(k)$ model for $k = 0,1,\ldots,20$, on increasing amounts of data from this process.
  Lower right: Log of coarsened marginal likelihood for the same model, on the same data.
  }
  \label{figure:AR}
\end{figure}

To demonstrate empirically, we generate data from a process that is close to $\mathrm{AR}(4)$ but exhibits time-varying noise
that cannot be captured by the model:  % todo: prove?
\begin{align}\label{equation:time-misspec}
x_t = \sum_{\ell=1}^4 \theta_\ell x_{t-\ell} + \epsilon_t + \tfrac{1}{2} \sin t
%x_t = \tfrac{1}{4} x_{t-1} + \tfrac{1}{4} x_{t-2} - \tfrac{1}{4} x_{t-3} + \tfrac{1}{4} x_{t-4} + \epsilon_t + \tfrac{1}{2} \sin t
\end{align}
where $\theta = (1/4,1/4,-1/4,1/4)$, $\epsilon_t\iid\N(0,1)$, and $x_t = 0$ for $t\leq 0$.  %See Figure~\ref{figure:AR} (upper left).
We apply the model above to such data, and compare the standard Bayesian approach to the coarsened approach.

For the model parameters, we set $\sigma^2 = 1$ to match the true value, and take $\sigma_0^2 = 1$.
% Note: Should we actually set sigma to match the std dev including the sinusoidal offsets? No, this is fine --- after all, it is misspec.
% future: Provide justification for choice of $\alpha$ in terms of KL.
If one expects a particular amount of misspecification, this can be used to make a principled choice of $\alpha$;
see Section~\ref{section:variable-selection}. Here, we assess sensitivity to the choice of $\alpha$, by considering the
c-posterior on $k$ as $\alpha$ varies, when $n = 10^4$, with an improper uniform prior on $k$; see Figure~\ref{figure:AR} (upper right).
There is a fairly wide range of values that give similar results---for
$\alpha$'s between 100 and 1200, the large majority of the mass is on the correct value of $k$, namely $k = 4$.
% As $\alpha$ increases beyond this, the c-posterior favors ever larger values of $k$.

To visualize what happens as $n$ increases, we set $\alpha = 500$, and consider the log of the marginal likelihood.
Due to the misspecification, the standard posterior strongly favors values of $k$ much greater than 4 when $n$ gets sufficiently large;
see Figure~\ref{figure:AR} (lower left). %, which displays the log of the marginal likelihood. 
%Even for modest amounts of data, it seems that large values of $k$ are required to fit the data.
Meanwhile, the c-posterior stabilizes to a distribution on $k$ favoring $k = 4$;
see Figure~\ref{figure:AR} (lower right).
For values of $n$ less than $\alpha$, the standard and coarsened approaches yield similar results, 
however, as $n$ increases beyond $\alpha$, they differ markedly.

More generally, this type of picture is typical for the log marginal likelihood when comparing models of increasing complexity.
As discussed in Section~\ref{section:robustness}, the log marginal likelihood automatically penalizes more complex models, via the term
$-\tfrac{1}{2}t_k \log n$ where $t_k$ is the dimension of the parameter space, e.g., $t_k = k$ for the $\mathrm{AR}(k)$ model above;
this penalty is visible in the linear decline exhibited in the $n=100$ plot. 
As $n$ increases, this complexity penalty only increases proportionally to $\log n$,
and thus it becomes overwhelmed by the main term of order $n$,
involving the log-likelihood at the maximum likelihood estimator within model $k$.
When $n$ is sufficiently large, the following pattern emerges, as seen in the $n=10000$ plot (for the standard approach):
for model complexity values $k$ that are too small, there is a clear lack of fit, and 
as $k$ increases the log marginal likelihood increases rapidly until the model can fairly closely approximate the data distribution,
at which point it plateaus, continuing to increase only slightly after that as only fine grain improvements can be made.

From this perspective, the reason why the coarsened marginal likelihood ``works'' is that when $n$ is large,
it maintains a balance between the model complexity penalty and the main log-likelihood term,
by behaving as though the sample size is no larger than $\alpha$.

%Give example, and show plot of usual approach favoring overly complex models.
%- simulation
%- EEG?
%(Has it been proven anywhere that this will happen under certain conditions?)
%Derive approximate posterior in the case of time series (relative entropy rate), in general.
%- What exactly are we taking a nbd around?
%Apply it to our particular situation, autoregressive model, give the formula (but postpone the proof to an appendix).
%COOL: Closed-form solution for the robustified model evidence!
%Show results in a plot.

\subsection{Variable selection in linear regression}
\label{section:variable-selection}

Consider the following spike-and-slab model for variable selection:
\begin{align*}
    & W\sim\Beta(r,s) \\
    & \beta_j\sim\N(0,1/L_0) \text{ with probability } W, \text{ otherwise } \beta_j = 0, \text{ for each } j=1,\ldots,p \\
    & \lambda \sim \Ga(a,b) \\
    & Y_i|\beta,\lambda\,\sim\,\N(\beta^\T x_i,1/\lambda) \text{ independently for } i=1,\ldots,n.
\end{align*}
Models of this type are often used to infer which covariates $x_{i 1},\ldots,x_{i p}$ are predictive of the target variable $y_i$,
by considering which coefficients $\beta_j$ have a high posterior probability of being nonzero.
This provides valuable insight into the relationships present in the data generating process.
However, usually, it is unlikely that the data exactly follow the $\N(\beta^\T x_i,1/\lambda)$ form,
and although the model exhibits some robustness to departures from normality, it is not robust to departures from 
the linearity assumed in the mean function $\beta^\T x_i = \beta_1 x_{i 1} + \cdots + \beta_p x_{i p}$. 
For instance, if the mean is actually $\beta_1 g(x_{i 1})$ where $g$ is close to but not exactly linear, 
and $x_{i 1}$ is correlated with other covariates,
then the posterior will typically make additional coefficients nonzero in order to compensate.
% even though in some sense the first covariate $x_{i 1}$ is the only true predictor.

We demonstrate how the c-posterior provides robustness to misspecification of this type. %, as well as to departures from normality.
This example also provides an opportunity to show how Gibbs sampling can be used with power posteriors
when conditionally-conjugate priors have been chosen.

As described in Section~\ref{section:regression}, in the regression setting, 
the c-posterior based on conditional relative entropy can be approximated by 
the power posterior obtained by raising the likelihood to $\zeta_n = 1/(1 + n/\alpha)$, as before.
If we first integrate $W$ out of the model, the resulting power posterior is
% future: Do you get something different if we don't integrate W out?  Probably.  This is a potential point of criticism.
$$ \pi_c(\beta,\lambda|y) \propto \pi(\beta,\lambda) p(y|\beta,\lambda)^{\zeta_n}. $$
Due to the use of conditionally-conjugate priors, the full conditionals for $\beta_j$ and $\lambda$ can be derived in closed form,
by standard calculations. We give the formulas here without justification:
$$ \pi_c(\lambda|\beta,y) = \Ga\Big(\lambda \,\Big\vert\, a + \tfrac{1}{2} n \zeta_n,\,
b + \tfrac{1}{2}\zeta_n \textstyle\sum_{i=1}^n (y_i - \beta^\T x_i)^2 \Big) $$
and one can sample from $\pi_c(\beta_j|\beta_{-j},\lambda,y)$, where $\beta_{-j} = (\beta_\ell : \ell\neq j)$, by setting $\beta_j = 0$ with 
probability 
$$ \Pi_c(\beta_j = 0 \mid \beta_{-j},\lambda,y) = \left(1 + \sqrt{L_0/L} \,\exp\big(\tfrac{1}{2} L M^2\big)
\frac{r+\sum_{\ell\neq j} \I(\beta_\ell \neq 0)}{s+\sum_{\ell\neq j}\I(\beta_\ell = 0)} \right)^{-1}$$
where $L = L_0 + \lambda \zeta_n\sum_{i=1}^n x_{i j}^2$, $M = (\lambda \zeta_n / L) \sum_{i=1}^n \delta_i x_{i j}$,
and $\delta_i = y_i - \sum_{\ell\neq j} \beta_\ell x_{i \ell}$, and otherwise sampling $\beta_j$ from $\N(M,L^{-1})$.

\subsubsection{Simulation example}

To demonstrate empirically, first consider a simulated example where the mean of the observed data
is a slightly nonlinear function of a single covariate, plus a constant offset:
\begin{align}\label{equation:varsel-data}
    y_i = \beta_{0 1} + \beta_{0 2} (x_{i 2} + \tfrac{1}{16}x_{i 2}^2) + \epsilon_i
\end{align}
where $\beta_{0 1} = -1$, $\beta_{0 2} = 4$, and $\epsilon_1,\ldots,\epsilon_n\iid\N(0,1)$.
Following standard practice, suppose $x_{i 1} = 1$, to accomodate a constant offset.
Suppose there are five covariates $x_{i 2},\ldots,x_{i 6}$ distributed according to a multivariate skew-normal distribution
\citep{Azzalini_1999} which has been centered and scaled so that each covariate has zero mean and unit variance:
$X_{i j} = (\tilde X_{i j} - \E \tilde X_{i j} ) / \sigma(\tilde X_{i j})$ for $j = 2,\ldots,6$, where
$\tilde X_i \sim \SN_5(\Omega,a)$ with shape $a = (0.6,2.7,-3.3,-4.9,-2.5)$ and scale matrix  
$$\Omega = \begin{pmatrix}
1.0 & -0.89 & 0.93 & -0.91 & 0.98\\
-0.89 & 1.0 & -0.94 & 0.97 & -0.91\\
0.93 & -0.94 & 1.0 & -0.96 & 0.97\\
-0.91 & 0.97 & -0.96 & 1.0 & -0.93\\
0.98 & -0.91 & 0.97 & -0.93 & 1.0 \end{pmatrix}.
$$
The $a$ and $\Omega$ above were randomly-generated; there is nothing particularly special about them,
except that $\Omega$ was chosen so that the covariates would be fairly strongly correlated.
Figure~\ref{figure:varsel} (top) shows a scatterplot of $y_i$ versus $x_{i 2}$ for 200 samples, as well as the mean as a function of $x_{i 2}$.
% The mean function is clearly nonlinear, but this is perhaps not visually obvious from just looking at 200 samples.

\begin{figure}
  \centering
  \includegraphics[trim=.5cm 0 1.75cm 0, clip, width=0.49\textwidth]{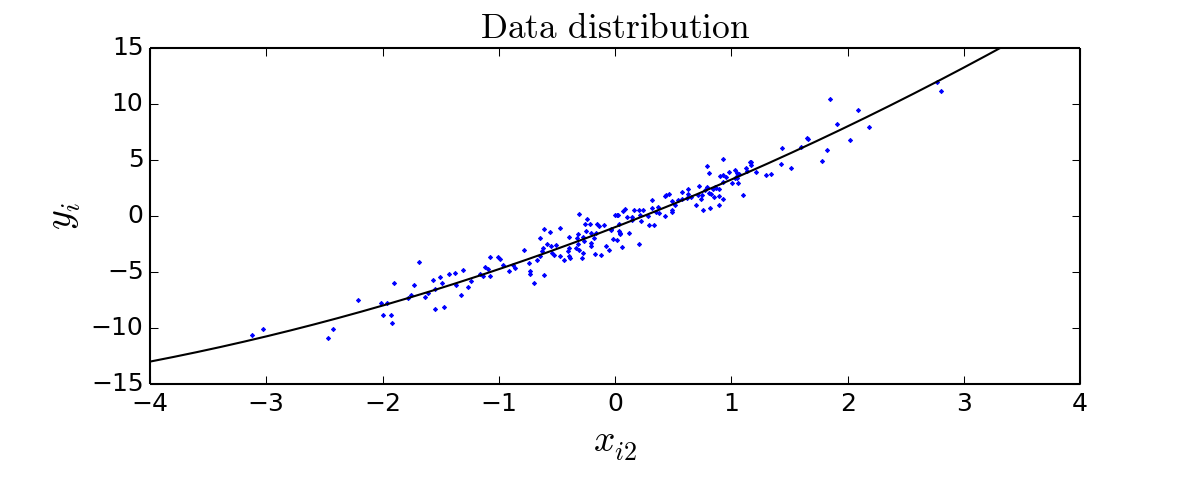} \\
  \includegraphics[trim=.5cm 0 4.9cm 0, clip, height=0.23\textwidth]{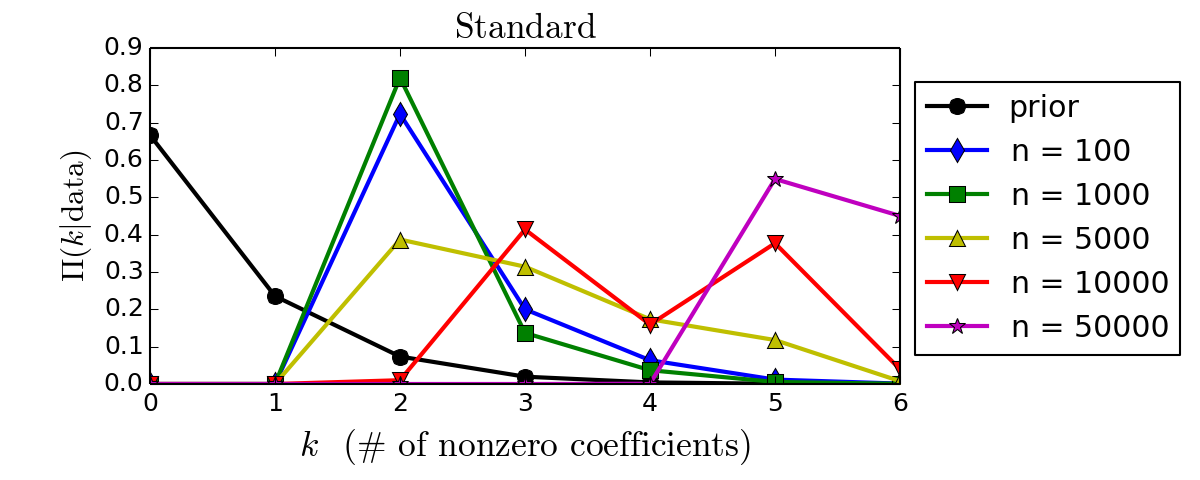}
  \includegraphics[trim=.5cm 0 .2cm 0, clip, height=0.23\textwidth]{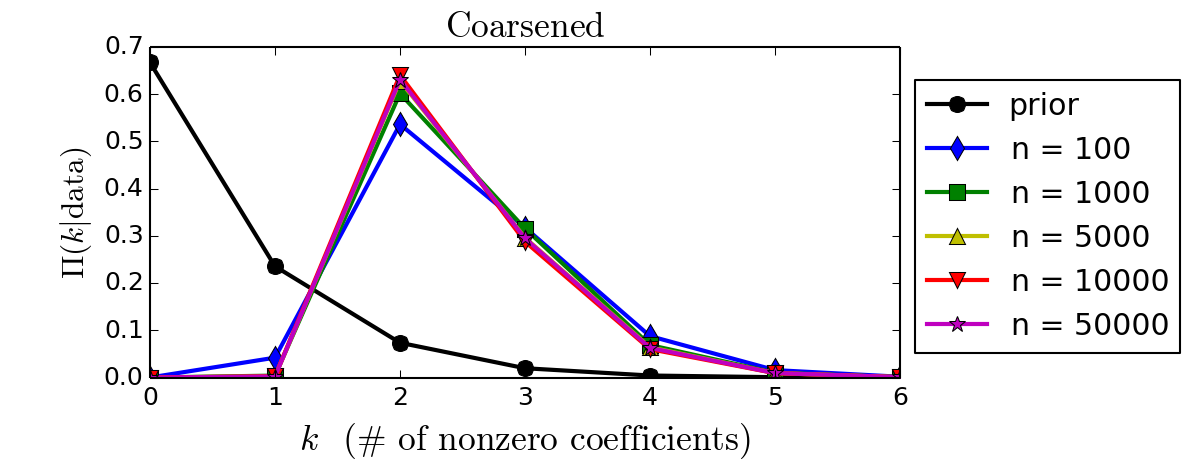}
  \includegraphics[trim=1.1cm 1cm 1.7cm 0, clip, width=0.49\textwidth]{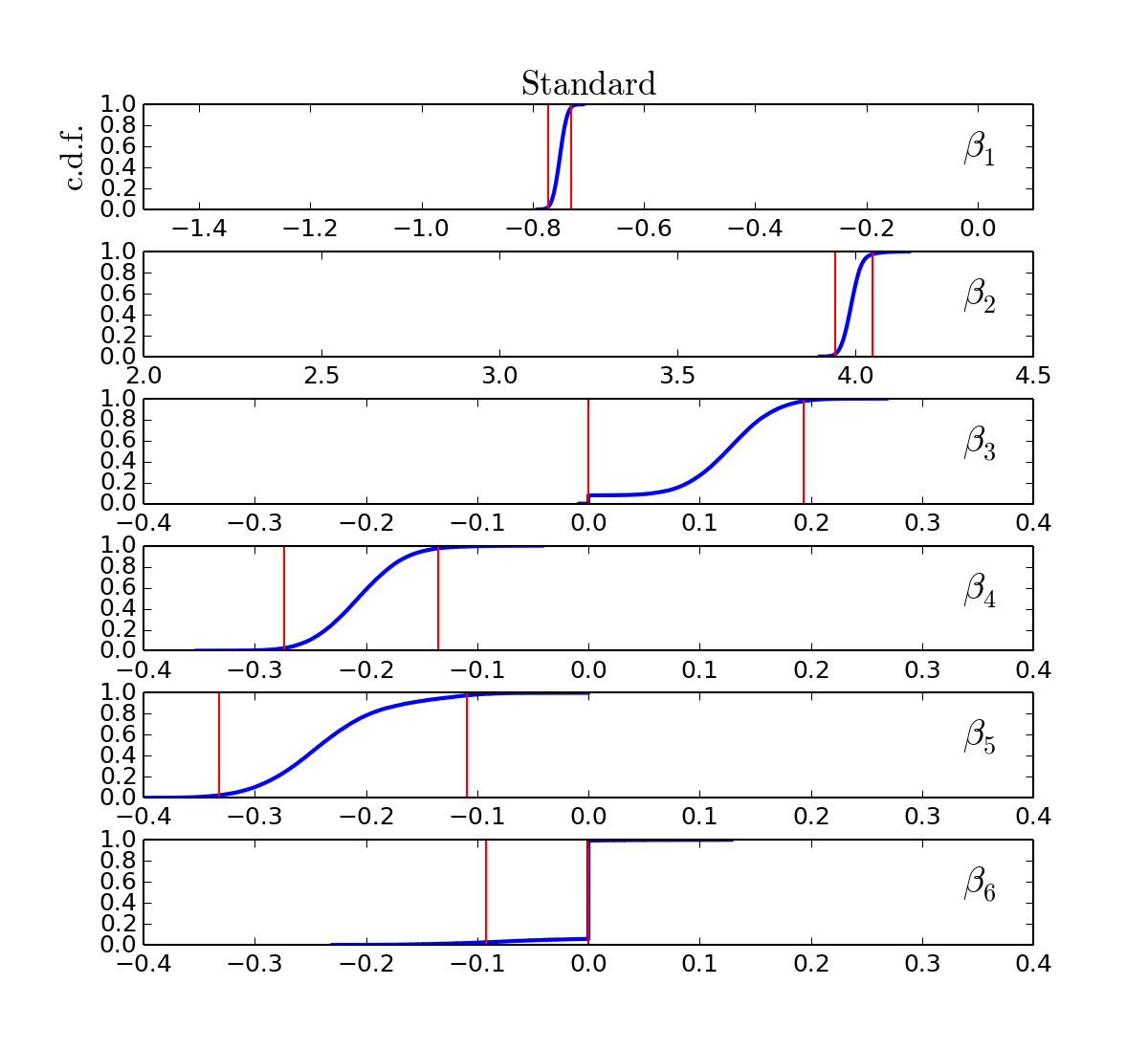}
  \includegraphics[trim=1.1cm 1cm 1.7cm 0, clip, width=0.49\textwidth]{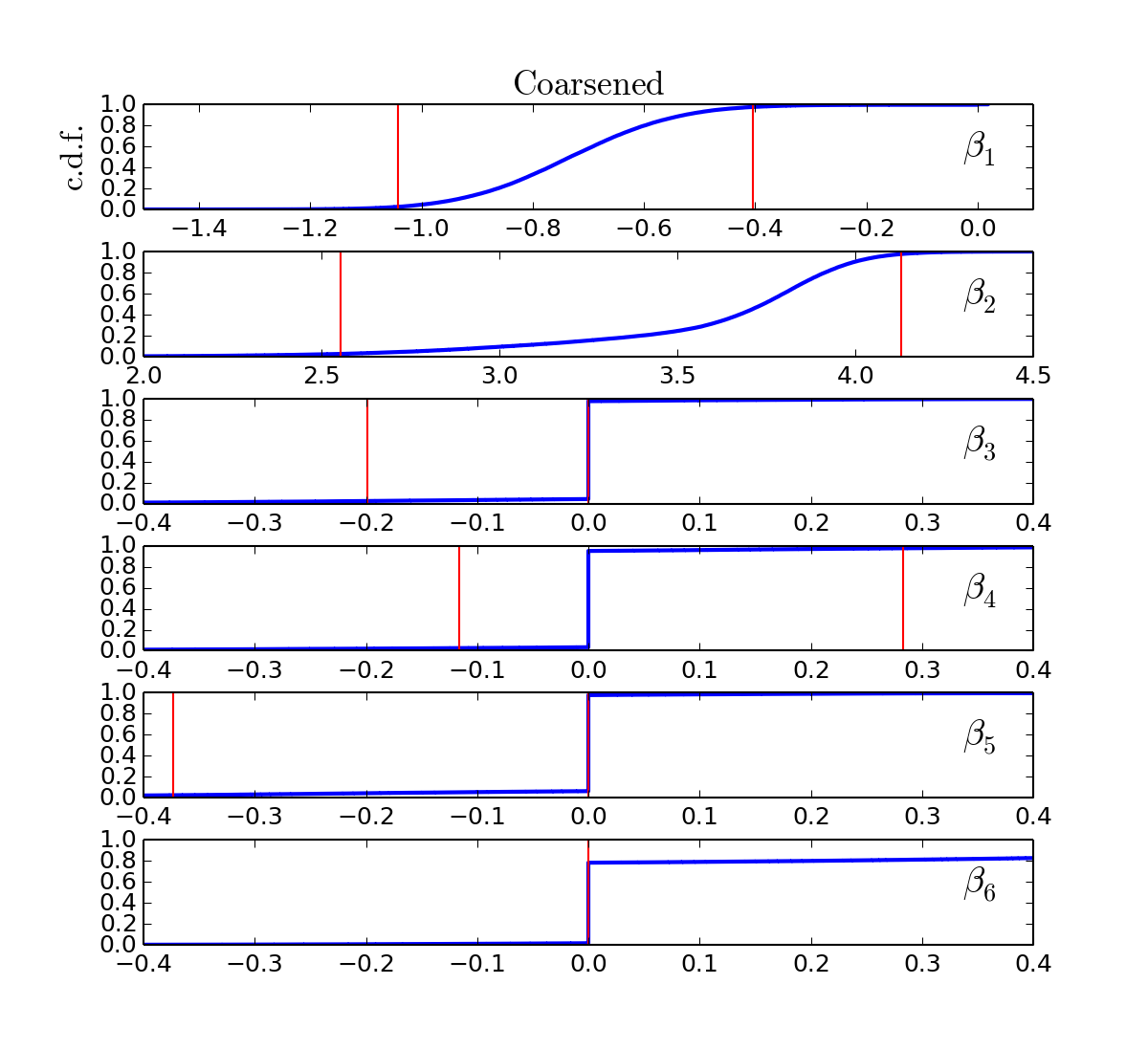}
  \caption{Variable selection with simulated data. 
  Top: Scatterplot of the target variable $y_i$ versus $x_{i 2}$, as well as the mean function (black line).
  Middle left: The posterior on the number of nonzero coefficients $k$ favors larger values as $n$ increases.
  Middle right: The c-posterior favors the ``true'' number, $k = 2$, even as $n$ grows.
  Bottom left: Posterior c.d.f.\ for each coefficient (blue), and 95\% credible interval (red).
  Bottom right: Same, for the c-posterior.
  }
  \label{figure:varsel}
\end{figure}

For the model parameters, we choose $r = 1$ and $s = 2 p$ (in order to favor having $O(1)$ nonzero coefficients, regardless of $p$), 
$L_0 = 1$, and $a = b = 1$.
To choose the coarsening parameter $\alpha$, recall that the c-posterior is obtained by conditioning on 
the (estimated) conditional relative entropy being less than $R$, where $R\sim\Exp(\alpha)$. 
The relative entropy between two Gaussians $\N(\mu_1,\sigma^2)$ and $\N(\mu_2,\sigma^2)$ is $\frac{1}{2\sigma^2}(\mu_1 - \mu_2)^2$.
Thus, if we expect the misspecification/contamination to shift the mean function by approximately $\pm\delta$ on average, 
and the noise has standard deviation $\sigma$, then it is reasonable to choose $\alpha$ so that $\E R \approx \delta^2 / (2\sigma^2)$,
i.e., $\alpha \approx 2\sigma^2 / \delta^2$.  
In the present situation, by cheating and using our knowledge of the truth,
we choose $\delta = 0.2$ and $\sigma = 1$, leading to $\alpha = 50$.

For each $n\in\{100,1000,5000,10000,50000\}$, ten datasets were generated, and for both the standard posterior and the coarsened posterior,
$50000$ Gibbs sweeps were performed on each dataset, the first $5000$ of which were discarded as burn-in.  
%The posterior on the number of nonzero coefficients $k$ was approximated after discarding a burn-in of $5000$ sweeps.
  
Figure~\ref{figure:varsel} (middle) 
shows the average of these posteriors on $k$ over the 10 datasets, for the standard and coarsened posteriors.
Note that the ``true'' number of nonzero coefficients in Equation \ref{equation:varsel-data} is $k = 2$ ($\beta_{0 1}$ and $\beta_{0 2}$).

Figure~\ref{figure:varsel} (bottom) shows the posterior cumulative distribution function (c.d.f.) and 95\% credible interval
for each coefficient $\beta_1,\ldots,\beta_6$ when $n = 10000$, for the standard and coarsened posteriors.
Recall that the ``true'' values are $\beta_1 = -1$, $\beta_2 = 4$, and $\beta_3 =\cdots =\beta_6 = 0$.
The 95\% intervals for the standard posterior are quite far from the true values of $\beta_1$, $\beta_4$, and $\beta_5$,
while all of the 95\% intervals for the c-posterior contain the true values;
also note that for $\beta_3,\ldots,\beta_6$, most of the c-posterior probability is at zero.
The case of $\beta_1$, in particular, illustrates that in addition to incorrectly inferring which coefficients are nonzero,
under misspecification, the standard posterior can also lead to incorrect inferences about the values of the nonzero coefficients.
The c-posterior mitigates this by more appropriately calibrating the amount of concentration;
however, the price to be paid is that this can cause the c-posterior to be considerably more diffuse than necessary,
such as in the case of $\beta_2$.

\subsubsection{Modeling birthweight of infants}
% Application to the Collaborative Perinatal Project}

The Collaborative Perinatal Project (CPP) collected data from a large sample of mothers and their children, measuring many 
medical and socioeconomic variables from before and during pregnancy, as well as early childhood \citep{klebanoff2009collaborative}.
Using a subset of the CPP data, we illustrate how the c-posterior can be used to analyze the relationship between 
birthweight and a number of predictor variables.

The dataset we use contains $n = 2379$ subjects, and $71$ covariates that are potentially predictive of birthweight.
The data are preprocessed to normalize each covariate as well as the target variable, 
by subtracting off the sample mean and dividing by the sample standard deviation for each.
As usual, a constant covariate is appended, making $p = 72$.
We use the same model parameters as in the simulation example.
Rather than choose a single value of $\alpha$, we explore the data at varying levels of coarseness, by considering a range of $\alpha$ values.

\begin{figure}
  \centering
  \includegraphics[trim=.5cm 0 1.75cm 0, clip, width=0.49\textwidth]{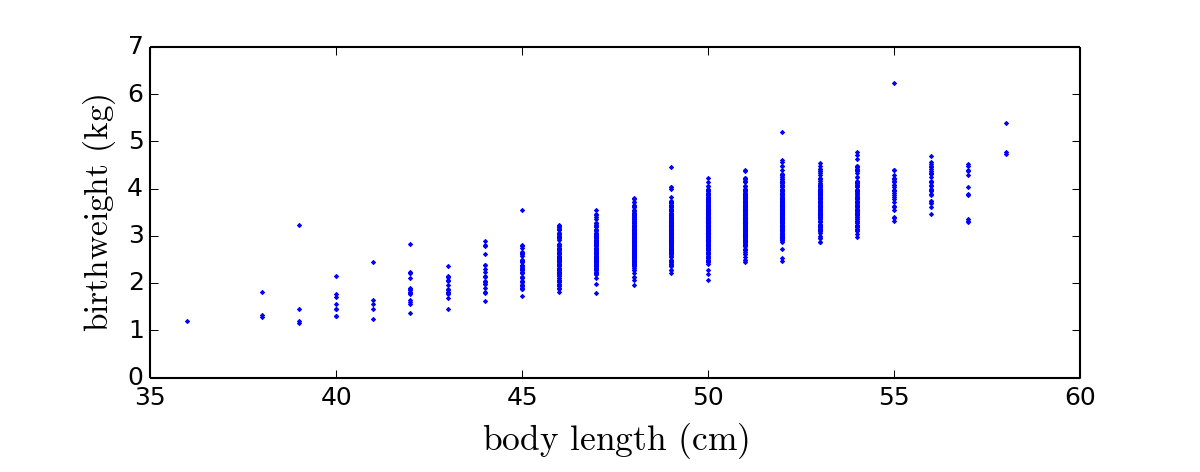}
  \includegraphics[trim=.5cm 0 1.75cm 0, clip, width=0.49\textwidth]{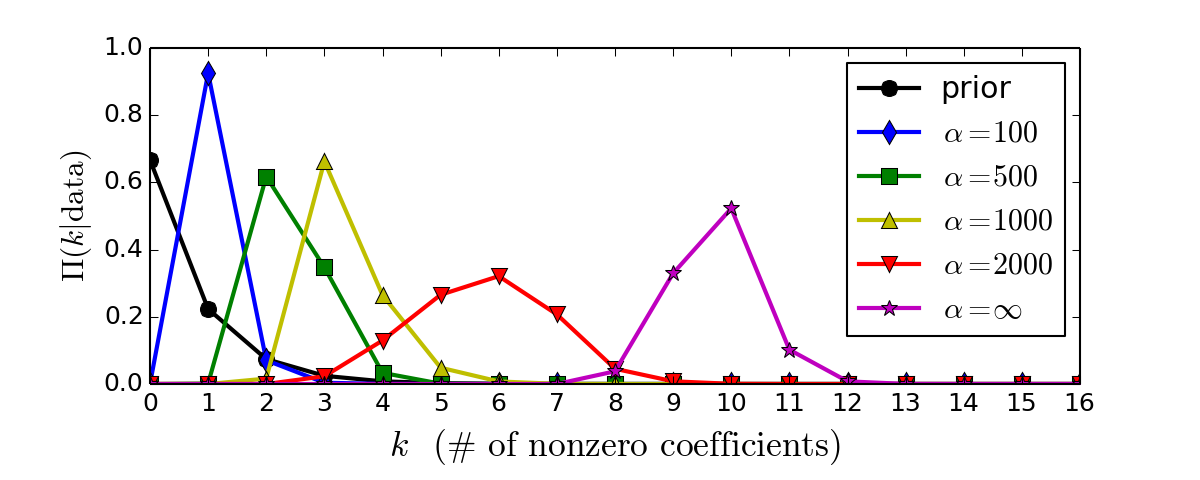}
  \includegraphics[trim=.5cm 0 1.75cm 0, clip, width=0.49\textwidth]{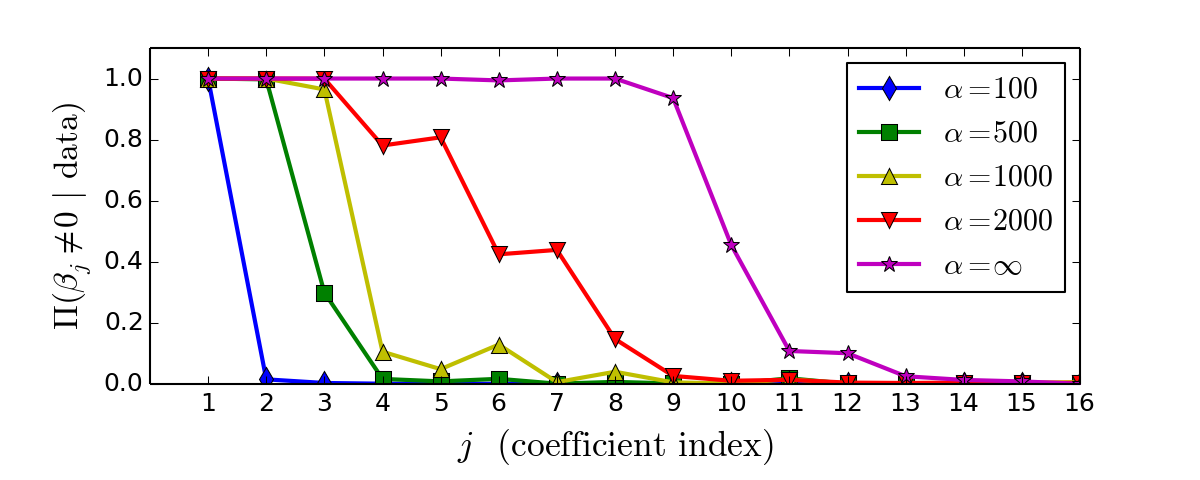}
  \includegraphics[trim=.5cm 0 1.75cm 0, clip, width=0.49\textwidth]{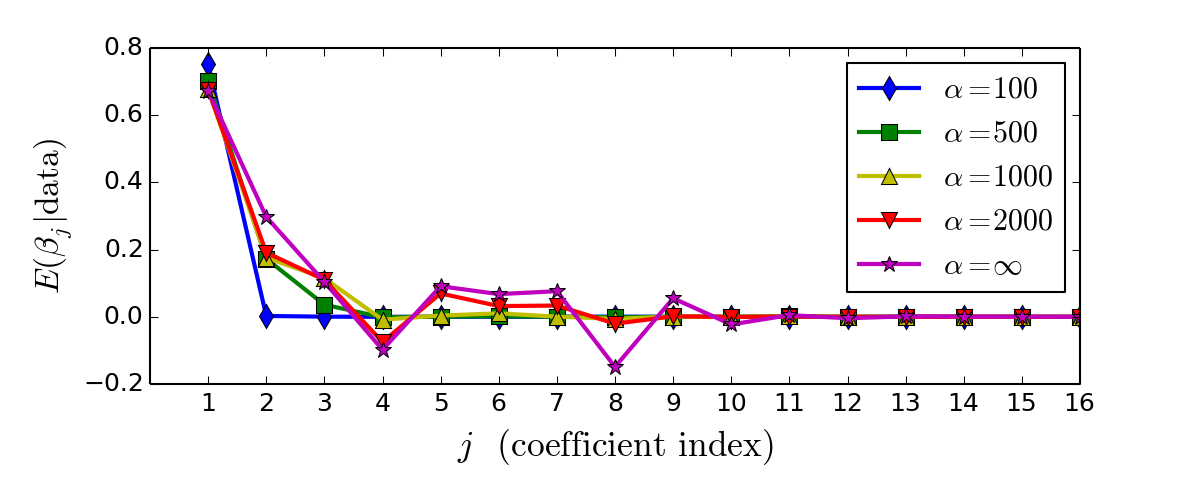}
  \caption{Variable selection for modeling birthweight. 
  Upper left: Scatterplot of birthweight (the target variable) versus body length at birth.
  Upper right: The standard posterior includes several more nonzero coefficients than the c-posteriors.
  Lower left: Posterior probability of inclusion for each coefficient; only the top 16 are shown, see list below.
  Lower right: Posterior mean of each coefficient, for the same 16.
    (Top 16 variables:
    1.\ Body length,
    2.\ Mother's weight at delivery,
    3.\ Gestation time,
    4.\ African-American,
    5.\ Center 6,
    6.\ Center 2,
    7.\ Center 3,
    8.\ Mother's weight prepregnancy,
    9.\ Previously pregnant,
    10.\ Cigarettes per day,
    11.\ \# prenatal checkups,
    12.\ Smoker/non-smoker,
    13.\ Mother's BMI prepregnancy,
    14.\ \# previous pregnancies,
    15.\ Triglyceride level,
    16.\ Center 10.)
  }
  \label{figure:cpp}
\end{figure}

For each $\alpha\in\{100, 500, 1000, 2000,\infty\}$, we run the sampler for $10^4$ Gibbs sweeps, discarding the first $1000$ sweeps as burn-in.
The sampler is initialized by setting all the coefficients to zero; initializing with a sample from the prior yields identical results.
%Traceplots indicate good mixing.
To interpret $\alpha$ in terms of Euclidean notions, we estimate from posterior samples that $\lambda\approx 2.5$ to $3$, and thus,
by the formula $\alpha \approx 2\sigma^2 / \delta^2 = 2/(\lambda\delta^2)$ derived above,
the values of $\alpha$ above roughly correspond to allowing for misspecification/contamination of magnitude
$\delta\in\{0.09,0.04,0.03,0.02,0\}$, respectively,
or, when scaled to the original units, roughly $\delta_{\mathrm{kg}}\in\{0.045,0.02,0.015,0.01,0\}$ kilograms.

The posterior on the number of nonzero coefficients $k$ (Figure~\ref{figure:cpp}, upper right), 
and the posterior probability of inclusion (Figure~\ref{figure:cpp}, lower left), show that
the standard posterior includes around 10 out of the 72 coefficients, while the c-posterior employs a more parsimonious representation,
depending on $\alpha$.
At $\alpha = 100$ ($\delta_{\mathrm{kg}}\approx 0.045$), typically only a single variable is included, namely, body length.
It makes sense that body length would be strongly predictive of weight, and 
the scatterplot in Figure~\ref{figure:cpp} (upper left) confirms this.
At $\alpha = 500$ ($\delta_{\mathrm{kg}}\approx 0.02$), both body length and mother's weight at delivery are included, 
as well as gestation time, with somewhat lower probability; again, it makes sense for these to be predictive of birthweight.
As $\alpha$ increases, additional variables are included to account for finer aspects of the data,
until we reach the standard posterior at $\alpha = \infty$.

All of the variables included by the standard posterior could conceivably be predictive of birthweight,
although after adjusting for primary variables such as body length, it is possible that they are only being included due to misspecification.
% as indicated by the standard posterior.
Since it seems likely that there would be misspecification at least at the $\delta_{\mathrm{kg}} = 0.01$ kg level
(i.e., $\approx 0.02$ pounds, $\alpha = 2000$), the high probability placed on some of the additional variables
included by the standard posterior is dubious, 
as is the precision with which it purports to infer the corresponding coefficients.
Further, it seems inevitable that if $n$ were larger, then even more coefficients would be included by the standard posterior.
%, since there is a substantial difference between the $\alpha = 2000$ c-posterior and the standard posterior.
If there is misspecification, then as $n$ grows, eventually the interpretation of which coefficients are included, and their values,
becomes less related to practically significant associations and more related to the fact that the model is compensating for its limitations.

\subsection{Mixture models with a prior on the number of components}
\label{section:mixtures}

Consider a finite mixture model:
$$ X_1,\ldots,X_n|k,w,\varphi \,\iid\, \sum_{i=1}^k w_i f_{\varphi_i}(x), $$
and place a prior $\pi(k,w,\varphi)$ on the number of components $k$, the mixture weights $w$, and the component parameters $\varphi$.
This type of model is not robust to misspecification of the family of component distributions $(f_\varphi:\varphi\in\Phi)$,
resulting in negative consequences in practice,
since we might reasonably expect the observed data $x_1,\ldots,x_n$ to come from a finite mixture, 
but it is usually unreasonable to expect the component distributions to have a nice parametric form.
We illustrate how the c-posterior enables one to perform inference for the number of components, 
as well as the mixture weights and the component parameters,
in a way that is robust to misspecification of the form of the component distributions.
This example also serves to demonstrate the use of Metropolis--Hastings MCMC for inference with a power posterior.

% The c-posterior obtained by conditioning on relative entropy neighborhoods of the mixture density $\sum_{i=1}^k w_i f_{\varphi_i}(x)$ 
We approximate the relative entropy c-posterior using the power posterior, defined as
$$ \pi_c(k,w,\varphi|x_{1:n}) \propto \pi(k,w,\varphi) \prod_{j = 1}^n\Big(\sum_{i=1}^k w_i f_{\varphi_i}(x_j)\Big)^{\zeta_n}. $$
The usual approaches to inference in mixtures are based on latent variables indicating which component each datapoint comes from,
but unfortunately, that does not seem to work here.  There are, nonetheless, a few possible approaches to doing inference,
for instance, \citet{walker13bayesian} developed an auxiliary variable technique for posteriors of this form,
and it would also be possible to use reversible jump MCMC \citep{green1995reversible}.
To keep things as simple as possible, however, we assume an upper bound on $k$, say $k\leq m$, and reparameterize the 
model in a way that enables one to simply use plain-vanilla Metropolis--Hastings (MH) MCMC on a fixed-dimensional space.
Specifically, we rewrite the mixture density as $\sum_{i=1}^m w_i f_{\varphi_i}(x)$
where $w_i = g(v_i)/\sum_{j = 1}^m g(v_j)$ and $g(v) =\max\{v - c,\,0\}$, so that $w_i = 0$ if $v_i \leq c$.
Letting $v_1,\ldots,v_m\sim\Ga(a,b)$ i.i.d., conditioned on the event that $\sum_{i = 1}^m g(v_i)>0$,
and letting $\varphi_1,\ldots,\varphi_m$ be i.i.d., yields a mixture model in which the prior on the number of components $k$
(that is, the number of nonzero weights)
is $\pi(k)\propto\Binomial(k|m,p)\I(k>0)$ where $p = \Pr(v_i > c)$.

\subsubsection{Skew-normal mixture example}

To demonstrate robustness to the form of the component distributions, we consider a univariate Gaussian mixture model, applied to data 
generated i.i.d.\ from the two-component mixture $\tfrac{1}{2}\SN(-4,1,5) +\tfrac{1}{2}\SN(-1,2,5)$, where $\SN(\xi,s,a)$ 
is the skew-normal distribution \citep{Azzalini_1999} with location $\xi$, scale $s$, and shape $a$; see Figure~\ref{figure:skewmix} (top).
For the model parameters, we assume an upper bound of $m = 10$ components, and define the prior on the component means and precisions as
$\mu_i\sim\N(0,5^2)$ and $\log(\lambda_i)\sim\N(0,2^2)$ independently for $i=1,\ldots,m$,
where the component densities are of the form $f_{\mu,\lambda}(x) =\N(x|\mu,\lambda^{-1})$.
We use a prior on $k$ and $w$ as in the preceding paragraph, with $a = 1/m$, $b = 1$, and $c$ such that $p = \Pr(v_i>c) = 1/m$.
% future: would it make more sense to use something other than a=1/m ?

For the c-posterior, we set $\zeta_n = 1/(1 + n/\alpha)$ (following Equation \ref{equation:zeta}), and choose $\alpha = 100$;
%(Note that the standard posterior is obtained by setting $\zeta_n = 1$, i.e., $\alpha = \infty$.)
this can be interpreted as saying 
%The intuitive interpretation of choosing $\alpha = 100$ is
that we want the posterior to behave as though at most $100$ samples are available.
The top panel in Figure~\ref{figure:skewmix} illustrates that based on 100 samples, one can visually determine that there are two large groups
of roughly equal size, and can roughly determine their location and scale, but cannot determine their precise form---in particular,
one cannot tell that they are not actually Gaussian.

For each $n\in\{20,100,500,2000,10000\}$, five independent datasets of size $n$ were generated, and 
$10^5$ MH sweeps were performed for both the standard and coarsened posteriors.
Each sweep consisted of MH moves on each $(\mu_i,\lambda_i)$ and $v_i$ separately.
Figure~\ref{figure:skewmix} (left) shows the average of the approximated posteriors on $k$ over these five datasets.
For typical samples from the standard and coarsened posteriors when $n=10000$, Figure~\ref{figure:skewmix} (right) shows the
mixture density $\sum_{i=1}^m w_i f_{\mu_i,\lambda_i}(x)$, and the individual (weighted) components $w_i f_{\mu_i,\lambda_i}(x)$.

As expected, since the data distribution cannot be represented as a finite mixture of Gaussians, 
the standard posterior introduces more and more components as $n$ increases, in order to obtain an adequate fit to the data.
Meanwhile, in accordance with our visual intuition that, based on 100 samples, there appear to be two large groups, 
the $\alpha = 100$ c-posterior on $k$ shows strong support for two components, no matter how large $n$ becomes.

The typical sample from the standard posterior provides a better fit to the distribution of the data, however,
it has several more than two components, obscuring the fact that there are two large groups.
On the other hand, the typical sample from the c-posterior does not fit the data distribution as well,
but it clearly indicates that there are two groups, and provides an interpretable representation of their locations and scales.

\subsubsection{Shapley galaxy dataset}

The galaxy dataset of \citet{Roeder_1990} is a classic benchmark for nonparametric mixture models, but it is 
somewhat outdated, and rather small, with only $n=82$. 
\citet{Drinkwater_2004} provide a more recent and larger dataset of the same type, consisting of the velocities of 
$4215$ galaxies in the Shapley supercluster, a large concentration of gravitationally-interacting galaxies;
see Figure~\ref{figure:shapley}. The clustering tendency of galaxies continues to be a subject of interest in astronomy. 
However, due to the filament-like nature of the distribution of galaxies, it seems likely that any such clusters will not be Gaussian.

Nonetheless, with the c-posterior approach, a Gaussian mixture model can be used to good effect, to identify clusters
that are approximately normal. By varying the coarsening parameter $\alpha$, one can explore the data
at varying levels of precision, allowing for greater or smaller departures from normality. 

\begin{figure}
  \centering
  \includegraphics[trim=.5cm 0 1.75cm 0, clip, width=0.49\textwidth]{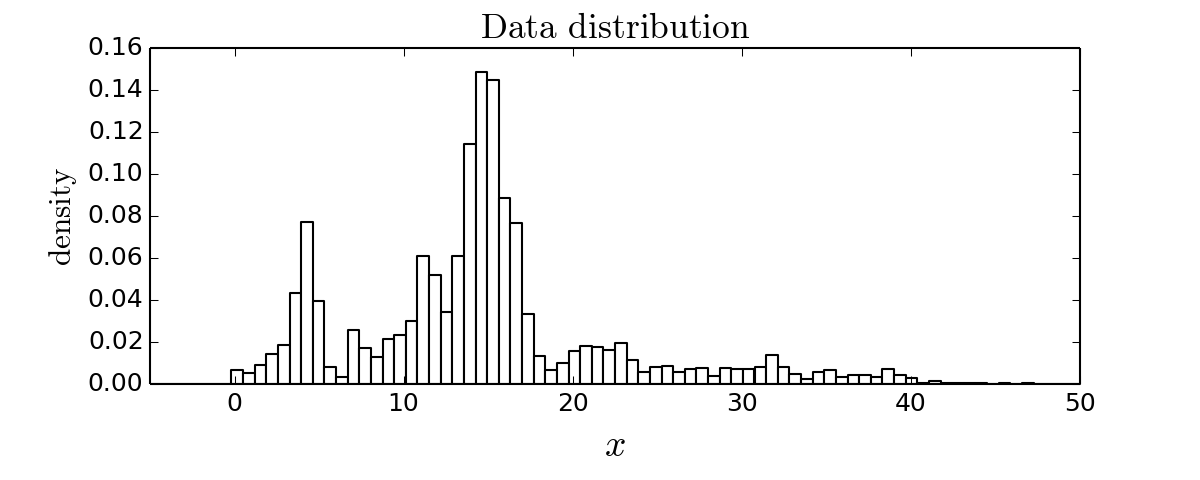}
  \includegraphics[trim=.5cm 0 1.75cm 0, clip, width=0.49\textwidth]{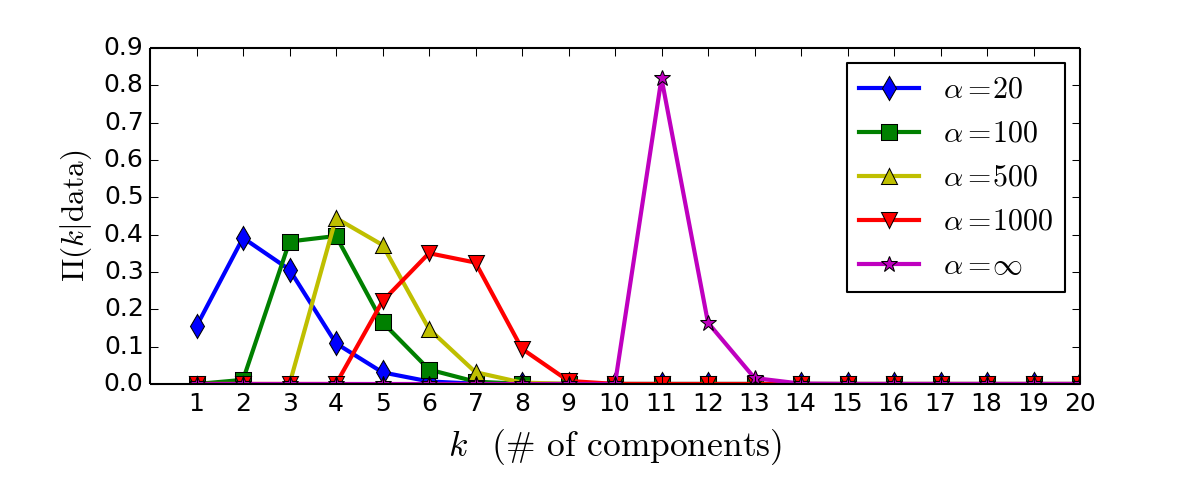}
  \includegraphics[trim=.5cm 0 1.75cm 0, clip, width=0.49\textwidth]{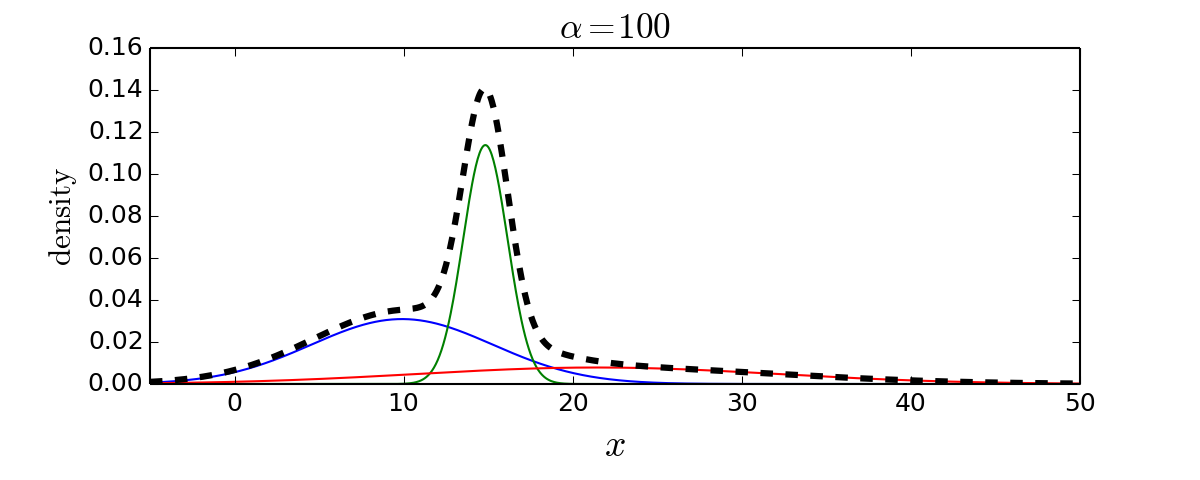}
  \includegraphics[trim=.5cm 0 1.75cm 0, clip, width=0.49\textwidth]{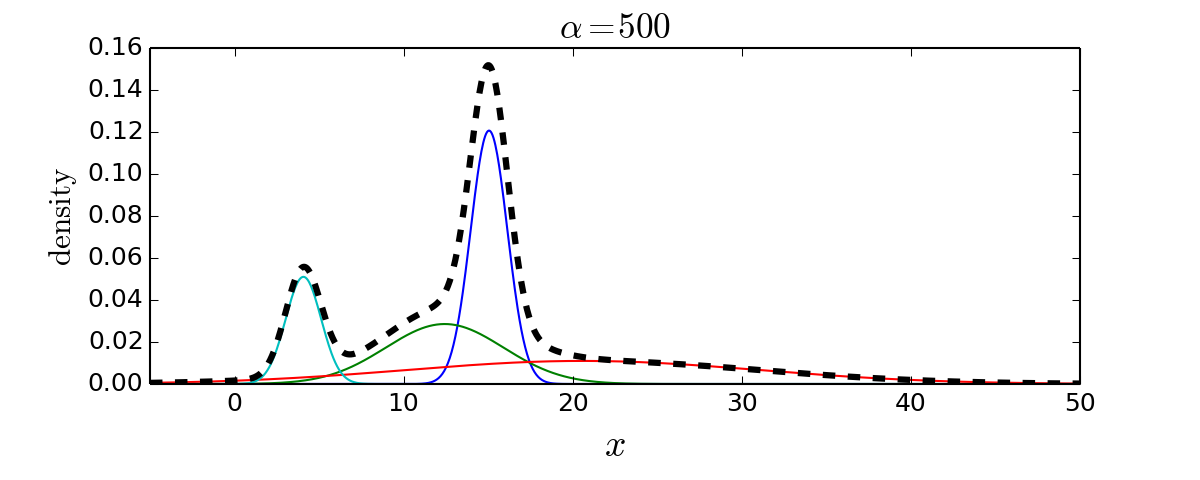}
  \includegraphics[trim=.5cm 0 1.75cm 0, clip, width=0.49\textwidth]{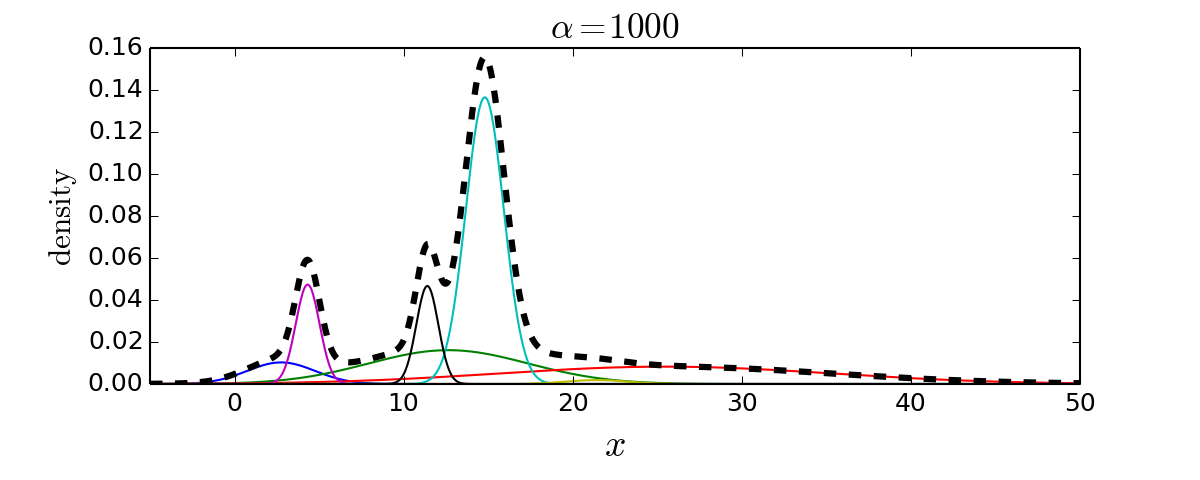}
  \includegraphics[trim=.5cm 0 1.75cm 0, clip, width=0.49\textwidth]{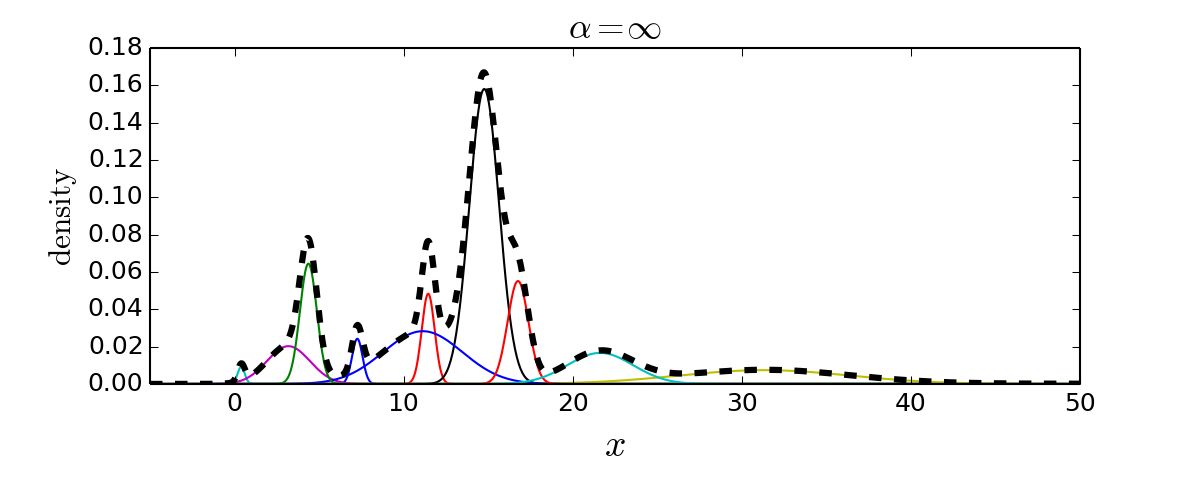}
  \caption{Gaussian mixture with a prior on the number of components $k$, applied to the Shapley galaxy data.
  Top left: Histogram of the data, in units of 1,000 km/s, excluding a small amount of data extending in a tail up to 80,000 km/s.
  Top right: C-posterior on $k$ for a range of $\alpha$ values; $\alpha=\infty$ is the standard posterior.
  Middle and bottom: Mixture density (dotted black line) and components (solid colors) for prototypical samples from the 
  c-posterior, for a range of $\alpha$ values. }
  \label{figure:shapley}
\end{figure}

We use the same model as above, but with $m = 20$ and a data-dependent choice of prior parameters:
$\mu_i\sim\N(\bar x, \hat\sigma^2)$ and $\log(\lambda_i)\sim\N(\log(4/\hat\sigma^2),2^2)$.
% note to self: expect stddev of components to be around 1/2 the stddev of the overall data
For $\alpha\in\{20,100,500,1000\}$, we run the sampler for $10^5$ MH sweeps, with a burn-in of $10^4$ sweeps.
For the standard posterior ($\alpha =\infty$), mixing is considerably slower; 
to compensate, we use $10^6$ sweeps with a burn-in of $2.5\times 10^5$.
This illustrates how inference can be easier under the c-posterior.

As shown in Figure~\ref{figure:shapley}, when $\alpha$ is small, the c-posterior tolerates greater departures from normality,
and uses a smaller number of components to represent the data.
For instance, from a glance at the histogram, one can visually distinguish three or four large groups which appear roughly unimodal,
and when $\alpha$ is around 100 to 500, samples from the c-posterior tend to provide a mixture representation
that corresponds well to these intuitive groups.
For larger values of $\alpha$, additional mixture components are employed,
to account for finer and finer grained aspects of the data distribution.
By the time $\alpha=\infty$, i.e., the standard posterior, the large-scale structures have mostly been fragmented into many small components.
%there are so many components that have become fragmented.

Of course, in a univariate setting like this, one can already visually see the large-scale groups, but clusters in 
high-dimensional data are not so easy to visualize, and having a tool like the c-posterior
to find structures at varying levels of precision may be very useful.
In most applications, the primary use of mixture models is not density estimation, but rather, to provide an interpretable summary of the
data in terms of clusters, and in these cases the c-posterior approach may have much to offer.

\section{Discussion}
\label{section:discussion}
% \input{disc}

% DISCUSSION

The c-posterior approach seems promising as a general method of robust Bayesian inference,
being both computationally efficient and conceptually well-grounded.
There are a number of directions that would be interesting to pursue in future work.
Further investigation of the small-sample correction is needed---in particular,
assessing its accuracy, and justifying its use in non-discrete cases, if possible.
% time-series, and regression.
We have focused on relative entropy due to the computational advantages,
but it would be interesting to explore using other statistical distances,
particularly if fast inference methods could be developed for them as well.
%In the applications, we chose the coarsening parameter $\alpha$ by considering the amount of misspecification expected,
It would be nice if there was a way of inferring the coarsening parameter $\alpha$ from data, somehow;
this would enable one not only to perform robust inference, but also to infer the amount of misspecification,
and ideally, to achieve statistical efficiency when the model is correct.
It would be beneficial, but perhaps difficult,
if precise guarantees could be provided regarding frequentist coverage properties of the c-posterior,
under misspecification. %; the equivalent model in Section~\ref{section:equivalent} may facilitate this.
Finally, the scope of application of this way of thinking seems potentially larger than Bayesian inference;
it might be interesting to explore adaptations to frequentist procedures.

%% Trying out variations/extensions
%- Exploring other distance ``metrics'':
%--- Wasserstein, in particular
%--- Euclidean, in the case of Gaussian mixtures is nice (but somewhat different than using a distance on measures)

%- Apply to multiple groups
%--- Panelle \& Dunson data?

%- Inference for alpha or R

%% Unresolved questions
%- Calibration of alpha
%- Frequentist coverage guarantee of some sort?

%% Computation
%- Fast inference without using the asymptotic approximations?
%--- Can use ABC methods, but this is SLOW.
%--- Can one use the likelihood to do better, when it is available?

%% Potential applications
%- Application in hierarchical models?

%- Going beyond model selection---inference in general.
%--- Are there applications where it makes a significant difference?

\appendix

\section{Proofs}
\label{section:proofs}

\subsection{C-posterior for the toy example in Section \ref{section:toy}}
\label{section:toy-details}

Letting $Z = \I(D(\hat p_x||\hat p_X) < R)$, by Bayes' theorem we have that
for $h\in\{\HH_0,\HH_1\}$,
\begin{align*}
\Pi\big(h|Z=1) &\underset{h}{\propto} \Pr(Z=1|h)\Pi(h) \overset{\text{(a)}}{\underset{h}{\propto}} \Pr(Z=1|h) \\
& \overset{\text{(b)}}{=} \E\big(\Pr(Z=1|X_{1:n},h)\,\big\vert\,h)
\overset{\text{(c)}}{=} \E\big(\exp(-\alpha D(\hat p_x||\hat p_X))\,\big\vert\,h)
\end{align*}
where (a) is since $\Pi(h)=1/2$, and (b) is by the law of iterated expectations, and (c) by the fact that $\Pr(R>r) = \exp(-\alpha r)$.
This is easily computed exactly, since, letting $S =\sum_{i = 1}^n X_i = n\hat p_X(1)$, we have
$S|\HH_0\,\sim\,\Binomial(n, 1/2)$ and
$S|\HH_1\,\sim\,\BetaBinomial(n,1,1)=\Uniform\{0,1,\ldots,n\}$.
To derive the approximation in Equation \ref{equation:toy-coarse}, we use Equation \ref{equation:small-sample} below:
$$\E\big(\exp(-\alpha D(\hat p_x||\hat p_X))\,\big\vert\,\theta,h) \approx 
\sqrt{\alpha_n/\alpha}\exp(-\alpha_n D(\hat p_x||p_\theta))
= c\prod_{i = 1}^n p_\theta(x_i)^{\alpha_n/n} $$
where $\alpha_n = 1/(1/n +1/\alpha)$, $p_\theta(x) = \Bernoulli(x|\theta) = \theta^x(1-\theta)^{1-x}$ for $x\in\{0,1\}$,
and $c$ is a constant that does not depend on $\theta$ or $h$.
If $h=\HH_1$, then this yields
\begin{align*}
    \Pr(Z=1|\HH_1) &= \E\Big(\E\big(\exp(-\alpha D(\hat p_x||\hat p_X))\,\big\vert\,\theta,\HH_1)\,\Big\vert\,\HH_1\Big) 
    \approx \E\Big(c\prod_{i = 1}^n p_\theta(x_i)^{\alpha_n/n}\,\Big\vert\,\HH_1\Big) \\
    &= c\int_0^1 \theta^{\alpha_n \bar x}(1-\theta)^{\alpha_n(1-\bar x)}\,d\theta
    = c B\big(1+\alpha_n\bar x,\,1+\alpha_n(1-\bar x)\big).
\end{align*}
If $h=\HH_0$, then $\theta = 1/2$ with probability 1, so $\Pr(Z=1|\HH_0) \approx c(1/2^{\alpha_n/n})^n = c/2^{\alpha_n}$.
Thus, $\Pi(\HH_0|Z=1) = \Pr(Z=1|\HH_0) / (\Pr(Z=1|\HH_0)+\Pr(Z=1|\HH_1)) \approx 1/(1+2^{\alpha_n}B(1+\alpha_n\bar x,\,1+\alpha_n(1-\bar x))$.

\subsection{Justification of the small-sample correction when $|\X|<\infty$.}
\label{secton:small-sample-justification}

Let $\Delta_k =\{p\in\R^k:\sum_i p_i = 1, \, p_i>0 \,\,\forall i\}$.
Let $s\in\Delta_k$. We argue that if $X_1,\dotsc,X_n\iid s$ and $\hat\s_j = \frac{1}{n}\sum_{i = 1}^n\I(X_i = j)$ for $j = 1,\dotsc,k$, then
for $p\in\Delta_k$ near $s$,
\begin{align}
\label{equation:small-sample}
\E \exp(-\alpha D(p\|\hat\s))\approx (n\zeta_n/\alpha)^{\frac{k-1}{2}} \exp(-n\zeta_n D(p\|s)),
\end{align}
where $\zeta_n = (1/n)/(1/n + 1/\alpha)$.  We use bold here to denote random variables.
% the approximation being good when $n$ and $\alpha$ are large.
For $x\in\R^d$, define $C(x)\in\R^{d\times d}$ such that $C(x)_{ij} = x_i\I(i = j)-x_i x_j$, and denote $x'=(x_1,\dotsc,x_{d-1})$.
First, for $q\in\Delta_k$ near $p$,
\begin{align}
\label{equation:relative-entropy-Mahalanobis}
D(p\|q)\approx \tfrac{1}{2}\chi^2(p,q) =\tfrac{1}{2}(p'-q')^\T C(q')^{-1}(p'-q')
\end{align}
by Propositions \ref{proposition:chi-square-relative-entropy} and \ref{proposition:chi-square-Mahalanobis} below.
By the central limit theorem, $\hat\s$ is approximately $\N(s,C(s)/n)$ distributed. Therefore, letting $\q\sim \N(s,C(s)/n)$ and $C = C(s')$,
\begin{align*}
\E\exp(-\alpha D(p\|\hat\s)) &\approx\E\exp(-\alpha D(p\|\q)) \I(\q\in\Delta_k) \\
& \overset{\mathrm{(a)}}{\approx} \E\exp\Big(-\frac{\alpha}{2}(p'-\q')^\T C^{-1} (p'-\q')\Big) \\
& = (2\pi)^{\frac{k-1}{2}}|C/\alpha|^{1/2}\int\N(p'|q',C/\alpha)\N(q'|s',C/n) d q' \\
& \overset{\mathrm{(b)}}{=} (2\pi)^{\frac{k-1}{2}}|C/\alpha|^{1/2}\N(p'|s',(1/\alpha+1/n)C) \\
& = \Big(\frac{1/\alpha}{1/\alpha +1/n}\Big)^{\frac{k-1}{2}} 
     \exp\big(-\tfrac{1}{2}(1/\alpha +1/n)^{-1}(p'-s')^\T C^{-1}(p'-s')\big) \\
& \overset{\mathrm{(c)}}{\approx} (n\zeta_n/\alpha)^{\frac{k-1}{2}} \exp(-n\zeta_n D(p\|s)),
\end{align*}
where (a) is by Equation \ref{equation:relative-entropy-Mahalanobis} along with the approximation $C(\q')\approx C(s')$, (b) uses the convolution formula for independent normals, and (c) is again by Equation \ref{equation:relative-entropy-Mahalanobis}. This yields Equation \ref{equation:small-sample}.

It is well-known that chi-squared distance is a second-order Taylor approximation to relative entropy
\citep[][Lemma 17.3.3]{cover2006elements}; for completeness, we include the proof.
% todo: try to find a better reference (that includes the proof) we can cite instead of including the proof

\begin{proposition}
\label{proposition:chi-square-relative-entropy}
For $p,q\in\Delta_k$,
$ D(p\|q) =\tfrac{1}{2}\chi^2(p,q) + o(\|p-q\|^2) $
as $p\to q$, where $D(p\|q) =\sum_i p_i\log(p_i/q_i)$ and $\chi^2(p,q) =\sum_i (p_i-q_i)^2/q_i$.
\end{proposition}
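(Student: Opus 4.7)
The plan is to expand both sides about $p = q$ via a Taylor expansion of the logarithm, using the constraint that the probability vectors lie in the simplex so that the first-order terms vanish.

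First I would introduce the deviation $\epsilon_i := p_i - q_i$ and note that $\sum_i \epsilon_i = 0$ since $p, q \in \Delta_k$. Since $q_i > 0$ for all $i$ (as $q \in \Delta_k$ is interior), for $p$ sufficiently close to $q$ we may write $\log(p_i/q_i) = \log(1 + \epsilon_i/q_i)$ and apply the standard expansion
\begin{equation*}
\log(1 + u) = u - \tfrac{1}{2} u^2 + O(u^3)
\end{equation*}
as $u \to 0$, with $u = \epsilon_i / q_i$. Multiplying by $p_i = q_i + \epsilon_i$ gives
\begin{equation*}
p_i \log(p_i/q_i) = \epsilon_i + \frac{\epsilon_i^2}{2 q_i} + O\!\left(\frac{\epsilon_i^3}{q_i^2}\right)
\end{equation*}
after collecting terms (the $\epsilon_i^2/q_i$ from multiplying $q_i$ by $-\tfrac{1}{2}(\epsilon_i/q_i)^2$ combines with the $\epsilon_i^2/q_i$ from multiplying $\epsilon_i$ by $\epsilon_i/q_i$ to yield the coefficient $1/2$).

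Summing over $i$, the first-order terms $\sum_i \epsilon_i$ vanish because both $p$ and $q$ are probability vectors, leaving
\begin{equation*}
D(p\|q) = \frac{1}{2} \sum_i \frac{\epsilon_i^2}{q_i} + o(\|\epsilon\|^2) = \frac{1}{2}\chi^2(p,q) + o(\|p-q\|^2),
\end{equation*}
where the cubic remainder is $o(\|p-q\|^2)$ since $q_i$ is bounded away from $0$ on a neighborhood of $q$ in $\Delta_k$ and there are only finitely many terms.

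The only subtle point is justifying that the cubic remainder uniformly behaves as $o(\|p-q\|^2)$; this is essentially automatic because $q$ is fixed with all coordinates strictly positive, so $1/q_i$ and $1/q_i^2$ can be absorbed into constants depending only on $q$. Thus the proof is a direct second-order Taylor expansion, with the cancellation of the linear term being the one place that uses the simplex constraint on $p$ and $q$.
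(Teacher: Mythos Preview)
Your proof is correct and is essentially the same approach as the paper's: a second-order Taylor expansion term by term, followed by the observation that $\sum_i (p_i - q_i) = 0$ kills the linear part. The only cosmetic difference is that the paper expands $f(a) = a\log(a/b)$ directly about $a=b$ (getting $f'(b)=1$, $f''(b)=1/b$), whereas you first expand $\log(1+u)$ and then multiply by $p_i = q_i + \epsilon_i$; the computations are equivalent.
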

\begin{proof}
Fix $b>0$, and define $f(a) = a\log(a/b)$ for $a>0$. Then by Taylor's theorem, 
\begin{align*}
f(a) &= f(b) + f'(b)(a-b) +\tfrac{1}{2} f''(b)(a-b)^2 + o(|a-b|^2) \\
& = (a-b) +\frac{1}{2}\frac{(a-b)^2}{b} + o(|a-b|^2)
\end{align*}
as $a\to b$. It follows that
\begin{align*}
\sum_{i = 1}^k p_i\log\frac{p_i}{q_i} = \sum_i(p_i-q_i) + \frac{1}{2}\sum_i\frac{(p_i-q_i)^2}{q_i} + o(\|p-q\|^2)
= \tfrac{1}{2}\chi^2(p,q) + o(\|p-q\|^2)
\end{align*}
as $p\to q$.
    % \qed
\end{proof}

The following result expresses the chi-squared distance $\chi^2(p,q)$ in terms of the $(k-1)$-dimensional Mahalanobis distance for $Z'$
when $Z\sim\mathrm{Multinomial}(1,q)$.  For interpretation, note that $C$ below equals $\Cov(Z')$ when $Z\sim\mathrm{Multinomial}(1,q)$.
% future: Try to find a reference for this. It would be remarkable if this was not already well-known.

\begin{proposition}
\label{proposition:chi-square-Mahalanobis}
For any $p,q\in\Delta_k$,
$ \chi^2(p,q) = (p'-q')^\T C^{-1} (p'-q') $
where $C\in\R^{(k-1)\times(k-1)}$ such that $C_{ij} = q_i\I(i = j) - q_i q_j$.
\end{proposition}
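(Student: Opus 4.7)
The plan is to invert $C$ explicitly via Sherman--Morrison and then verify that the resulting quadratic form in $p'-q'$ matches the $\chi^2$ expression term by term, after using the linear constraint $\sum_i p_i = \sum_i q_i = 1$ to eliminate the $k$-th coordinate.

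First, I would write $C = D - q'(q')^\T$, where $D = \diag(q_1,\dotsc,q_{k-1})$. Then $D^{-1}q'$ is the all-ones vector $\mathbf{1}\in\R^{k-1}$, and $(q')^\T D^{-1} q' = \sum_{i=1}^{k-1} q_i = 1 - q_k$. Applying the Sherman--Morrison formula,
\begin{align*}
C^{-1} \;=\; D^{-1} + \frac{D^{-1} q' (q')^\T D^{-1}}{1 - (q')^\T D^{-1} q'} \;=\; D^{-1} + \frac{1}{q_k}\mathbf{1}\mathbf{1}^\T.
\end{align*}
(The denominator $q_k$ is positive since $q\in\Delta_k$, so the inverse is well-defined.)

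Next, setting $\delta_i = p_i - q_i$ and $\delta' = (\delta_1,\dotsc,\delta_{k-1})^\T$, I compute
\begin{align*}
(p'-q')^\T C^{-1} (p'-q') \;=\; \sum_{i=1}^{k-1}\frac{\delta_i^2}{q_i} + \frac{1}{q_k}\Big(\sum_{i=1}^{k-1}\delta_i\Big)^2.
\end{align*}
Since $\sum_{i=1}^k p_i = \sum_{i=1}^k q_i = 1$, we have $\sum_{i=1}^{k-1}\delta_i = -\delta_k$, so the second term equals $\delta_k^2/q_k$. Combining with the first sum yields $\sum_{i=1}^k \delta_i^2/q_i = \chi^2(p,q)$, completing the proof.

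There is no real obstacle here; the only thing to watch is to verify the non-degeneracy needed for Sherman--Morrison (which follows from $q\in\Delta_k$ giving all $q_i>0$) and to use the simplex constraint at the right moment to convert the off-diagonal $1/q_k$ contribution into the missing $k$-th $\chi^2$ term.
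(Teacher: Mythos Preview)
Your proof is correct and essentially identical to the paper's: both apply Sherman--Morrison to $C=\diag(q')-q'(q')^\T$ to obtain $C^{-1}=\diag(q')^{-1}+(1/q_k)\mathbf{1}\mathbf{1}^\T$, expand the quadratic form, and then use the simplex constraint $\sum_{i=1}^{k-1}(p_i-q_i)=-(p_k-q_k)$ to recover the missing $k$-th term of $\chi^2(p,q)$. Your version is slightly more explicit about the intermediate Sherman--Morrison computations, but the argument is the same.
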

% \begin{remark}
% \end{remark}
\begin{proof}
By the Sherman--Morrison formula for rank-one updates, 
$ C^{-1} =(\diag(q') - q'q'^\T)^{-1} = \diag(q')^{-1} + (1/q_k) {\bm 1}{\bm 1}^\T $
where ${\bm 1} = (1,\dotsc,1)^\T$, hence
$$(p'-q')^\T C^{-1}(p'-q') 
=\sum_{i = 1}^{k-1} \frac{(p_i-q_i)^2}{q_i} + \frac{\big(\textstyle{\sum_{i = 1}^{k-1}} (p_i-q_i)\big)^2}{q_k}$$
and $\sum_{i = 1}^{k-1} (p_i-q_i) = (1-p_k) - (1-q_k) = q_k-p_k$.
    % \qed
\end{proof}

\subsection{Proofs of results in Theory section}
\label{section:theory-proofs}

\begin{proof}[Proof of Lemma \ref{lemma:limit}]
    Since $\Pr(U=V)=0$, we have $\I(U_n<V) \xrightarrow[]{\mathrm{a.s.}} \I(U<V)$,
    and thus, also $W\I(U_n<V) \xrightarrow[]{\mathrm{a.s.}} W\I(U<V)$.
    Hence, by the dominated convergence theorem \citep[][2.44]{Breiman_1968},
    %$$\Pr(U_n<V) = \E\I(U_n<V)\longrightarrow \E\I(U<V) = \Pr(U<V)$$
    $\Pr(U_n<V) \longrightarrow \Pr(U<V)$
    and
    $$\E\big(W\I(U_n<V)\big)\longrightarrow \E\big(W\I(U<V)\big)$$
    since $0\leq\I(\cdot)\leq 1$, $\,|W\I(U_n<V)|\leq|W|$, and $\E|W|<\infty$.
    By assumption, $\Pr(U<V)>0$, hence $\Pr(U_n<V)>0$ for all $n$ sufficiently large, and
    $$\E(W|U_n<V) = \frac{\E\big(W\I(U_n<V)\big)}{\Pr(U_n<V)} \xrightarrow[n\to\infty]{} 
    \frac{\E\big(W\I(U<V)\big)}{\Pr(U<V)} = \E(W|U<V).$$
    % \qed
\end{proof}

\begin{proof}[Proof of Theorem \ref{theorem:asymptotic}]
    We apply Lemma \ref{lemma:limit} with $U = d(P_\bbtheta,P_o)$, $U_n = d_n(X_{1:n},x_{1:n})$, $V = R$, and $W = h(\btheta)$.
    By assumption, $U_n\xrightarrow[]{\mathrm{a.s.}} U$, $\Pr(U = V) = 0$, $\Pr(U < V) > 0$, and $\E|W| < \infty$. 
    Hence, by Lemma \ref{lemma:limit},
    \begin{align*}
    \E(W\mid U_n<V) &\longrightarrow \E(W\mid U<V) =\frac{\E(W\I(U<V))}{\Pr(U<V)} \\
    & = \frac{\E\big(W\E(\I(U<V)|W,U)\big)}{\E\big(\Pr(U<V\mid U)\big)}
    = \frac{\E(W G(U))}{\E G(U)}
    \end{align*}
    since $V\perp U,W$ by construction. This establishes Equation \ref{equation:asymptotic-posterior-expectation}, and since in particular this holds for any bounded continuous $h$, Equation \ref{equation:asymptotic-posterior} follows.
    % Note to self: Bayes rule applies here due to the existence of r.c.d.'s in nice spaces (e.g., Durrett 4.1.6 or Breiman 4.34).
    % \qed
\end{proof}

\begin{proof}[Proof of Corollary \ref{corollary:weakly-cts-asymptotic}]
Since $X_1,\dotsc,X_n|\btheta\iid P_\bbtheta$ and $x_1,\dotsc,x_n$ behaves like an i.i.d.\ sequence from $P_o$, then
$\hat P_{X_{1:n}}\xLongrightarrow[]{\mathrm{a.s.}} P_\bbtheta$ and $\hat P_{x_{1:n}} \Longrightarrow P_o$
\citep[][Theorem 11.4.1]{dudley2002real}.
Hence, $d_n(X_{1:n},x_{1:n}) \xrightarrow[]{\mathrm{a.s.}} d(P_\bbtheta,P_o)$, and Theorem \ref{theorem:asymptotic} applies.
    % \qed
\end{proof}

\begin{proof}[Proof of Theorem \ref{theorem:continuity}]
    Apply Lemma \ref{lemma:limit} with $U = d(P_\bbtheta,P_o)$, $U_m = d(P_\bbtheta,P_m)$, $V = R$, and $W = h(\btheta)$.
    % \qed
\end{proof}

\bibliographystyle{abbrvnatcaplf}
\bibliography{refs}

\end{document}